\providecommand{\keywords}[1]{{\small \textbf{\textit{Keywords:}} #1}}
\newcommand{\rvd}{rectangle visibility representation\xspace}
\newcommand{\rvds}{rectangle visibility representations\xspace}
\newcommand{\RVD}{RVR\xspace}
\newcommand{\RVDs}{RVRs\xspace}
\newcommand{\B}{B}
\newcommand{\W}{W}
\newcommand{\T}{T}
\newcommand{\Bc}{\B-configuration\xspace}
\newcommand{\Wc}{\W-configuration\xspace}
\newcommand{\Tc}{\T-configuration\xspace}
\newcommand{\Bcs}{\B-configurations\xspace}
\newcommand{\Wcs}{\W-configurations\xspace}
\newcommand{\Tcs}{\T-configurations\xspace}
\newcommand{\algo}{\texttt{1PRVDrawer}\xspace}
\newcommand{\foursc}{4-sides-condition\xspace}
\newcommand{\se}{surround-edge\xspace}
\newcommand{\zigzag}{zig-zag-bend-elimination-slide\xspace}
\newcommand{\zz}{zig-zag-slide\xspace}
\newcommand{\rdr}{rectangular dual representation\xspace}
\newif\ifvaremb
\newtheorem{theorem}{Theorem}
\newtheorem{lemma}{Lemma}
\newtheorem{corollary}{Corollary}
\newtheorem{observation}{Observation}
\newtheorem{definition}{Definition}
\begin{document}

\title{On Visibility Representations of Non-planar Graphs\thanks{Research of Therese Biedl supported by NSERC. Research of Giuseppe Liotta and Fabrizio Montecchiani supported in part by the MIUR project AMANDA ``Algorithmics for MAssive and Networked DAta'', prot. 2012C4E3KT\_001. Research undertaken while Fabrizio Montecchiani was visiting the University of Waterloo supported by NSERC.}}

\author{Therese~Biedl$^1$,
Giuseppe Liotta$^2$,
Fabrizio Montecchiani$^2$
\\[0.1in]
David R. Cheriton School of Computer Science, University of Waterloo, Canada\\
\texttt{\small biedl@uwaterloo.ca}
\\
$^2$Dipartimento di Ingegneria, Universit{\`a} degli Studi di Perugia, Italy\\
\texttt{\small \{giuseppe.liotta,fabrizio.montecchiani\}@unipg.it}
}

\date{}

\maketitle

\keywords{Visibility Representations, 1-Planarity, Recognition Algorithm, Forbidden Configuration}%
or

\maketitle

\begin{abstract}
A \rvd (\RVD) of a graph consists of an assignment of axis-aligned rectangles to vertices such that for every edge there exists a horizontal or vertical line of sight between the rectangles assigned to its endpoints.  Testing whether a graph has an \RVD is known to be NP-hard.  In this paper, we study the problem of finding an \RVD under the assumption that an embedding in the plane of the input graph is fixed and we are looking for an \RVD that reflects this embedding. We show that in this case the problem can be solved in polynomial time for general embedded graphs and in linear time for 1-plane graphs (i.e., embedded graphs having at most one crossing per edge). The linear time algorithm uses a precise list of forbidden configurations, which extends the set known for straight-line drawings of 1-plane graphs. These forbidden configurations can be tested for in linear time, and so in linear time we can test whether a 1-plane graph has an \RVD and either compute such a representation or report a negative witness. 
\ifvaremb
Finally, we discuss some extensions of our study to the case when the embedding is not fixed but the \RVD can have at most one crossing per edge.
\fi
\end{abstract}

\section{Introduction}

A visibility representation is an appealing method of displaying graphs.  It consists of assigning axis-aligned rectangles to vertices and horizontal or vertical line segments to edges in such a way that line segments of edges begin and end at their endpoints and intersect no  rectangles in-between.   Equivalently, each edge corresponds to a horizontal or vertical line-of-sight between its endpoints; hence the name ``visibility''.  
Edge segments may cross each other, but any such crossing occurs at a right angle.  These right-angle crossings as well as the absence of bends leads to a high readability of the visualization.    See Fig.~\ref{fig:convert} for an illustration.

Visibility representations were introduced to the Computational Geometry community in 1985, when Wismath \cite{DBLP:conf/compgeom/Wismath85}
showed that every planar graph has a bar-visibility representation, i.e., where vertices are represented as horizontal bars and edges correspond to vertical visibilities.  (The same result was discovered independently multiple times \cite{Duchet1983319,ov-grild-78,DBLP:journals/dcg/RosenstiehlT86,TamassiaTollis86,t-prg-84}.)

Such bar-visibility representations can exist only for planar graphs, so for representing non-planar graphs one must use edge segments in both directions.  (We use the term {\em rectangle visibility representation} or {\em \RVD} to distinguish such drawings from bar-visibility representations.)  Clearly not any graph $G$ can have an \RVD either; for example $G$ must have thickness 2 (i.e., be the union of two planar graphs), and, as was shown in \cite{DBLP:conf/gd/BoseDHS96}, it must have at most $6n-20$ edges.  But neither of these conditions is sufficient.  In fact, Shermer showed in 1996 that it is NP-hard to test whether a given graph has an \RVD~\cite{DBLP:conf/cccg/Shermer96}.

In this paper, we study the problem of testing whether a graph has a rectangle visibility representation if some aspects of it are fixed a priori.  In previous work by Streinu and Whitesides \cite{DBLP:conf/stacs/StreinuW03}, it was shown that testing whether a graph has an \RVD is polynomial if the following additional information is given:  we know for each edge whether it is drawn horizontal or vertical,  we know the rotational scheme (i.e., the fixed order of edges around each vertex), and we know which vertices form the outer face of the visibility representation.  Streinu and Whitesides give necessary and sufficient conditions for when such restrictions can be realized, and show that these can be tested in $O(n^2)$ time.

In this paper, we study a slightly different scenario.  Like Streinu and Whitesides, we assume that the rotational scheme and the outer face is fixed.  However, we do not require to know the directions of the edges; 
instead we must know a priori which edges cross each other, the order in which these crossings occur along each edge (if there is more than one crossing on an edge). In other words, the graph has a \emph{fixed embedding} (see also Section~\ref{se:preliminaries}).
It turns out that a reduction to the topology-shape-metrics  approach then allows to test in $O(n^{1.5}\log n)$ time whether a graph with $n$ vertices  has a \rvd; we give the details in Section~\ref{sec:shape_metric}.  

As our main result, 
we then consider the case when any edge contains at most one crossing (the so-called {\em 1-planar graphs}).    Previous attempts to create visibility representations for 1-planar graphs had relaxed the restriction that edges must not cross non-incident vertices,  and defined a {\em bar $k$-visibility drawing} to be a bar-visibility drawing where each line of sight can intersect at most $k$ bars
\cite{DBLP:journals/jgaa/DeanEGLST07}.  
In other words, each edge can cross at most $k$ vertices. 
These crossings are arguably less intuitive than crossings between edges. Brandenburg {\em et al.}~\cite{DBLP:journals/jgaa/Brandenburg14} and independently Evans {\em et al.}~\cite{DBLP:journals/jgaa/Evans0LMW14} prove that 1-planar graphs have a bar $1$-visibility drawing (if we are allowed to change the embedding).

In contrast to these result, we consider in Section~\ref{se:characterization} embedding-preserving \rvds of embedded 1-planar graphs, also called 1-plane graphs.   
\ifvaremb
It is easy to see that not all 1-plane graphs have such an \RVD (see also Section~\ref{se:varemb}).  
\fi
The main contribution of our paper is to show that for a 1-plane graph it is possible to test in linear time whether it has an \RVD.  The approach is entirely different from the topology-shape-metrics above and reveals much insight into the structure required for an \RVD to exist.  Specifically, we consider three configurations (called \Bc, \Wc, and the newly defined \Tc) and show that a 1-plane graph has an \RVD if and only if none of these three configurations occurs.  This mirrors in a pleasing way the result by Thomassen~\cite{t-rdg-JGT88}, which shows that a 1-plane graph has a straight-line drawing if and only if it contains no \Bc or \Wc.    Our result also provides an answer (in a special case) to Shermer's question of characterizing when graphs have \RVDs \cite{DBLP:conf/cccg/Shermer96}.
Also, we prove that testing whether a 1-plane graph contains any of the three forbidden configurations can be done in linear time, and in the absence of them we can compute in linear time an \RVD which fits into a grid of quadratic size. It may be worth recalling that embedding-preserving straight-line drawings of 1-plane graphs may require exponential area~\cite{DBLP:conf/cocoon/HongELP12}. 

\ifvaremb
In contrast to planar graphs, 1-planar graphs may have an exponential number of different embeddings
even if 3-connected or 4-connected (see, e.g., Fig.~\ref{fi:expemb}).   Hence our algorithm does not
automatically cover the case when the embedding is not fixed.  We briefly study 1-planar graphs without
fixed embedding in Section~\ref{se:varemb}.  We show that there exist infinitely many 3-connected 
1-planar graphs not admitting a 1-planar embedding realizable as an \RVD, even after removing linearly many 
edges. On the positive side, we show that 4-connected 1-planar graphs, which include the optimal 1-planar 
graphs, always admit an \RVD, except for at most one edge which cannot be represented.
\fi

{\noindent \bf Paper structure.} The remainder of this paper is organized as follows. In Section~\ref{se:preliminaries} we introduce basic definitions and results which will be used throughout the paper. In Section~\ref{se:characterization} we characterize the 1-plane graphs that admit an \RVD. 
\ifvaremb
In Section~\ref{se:varemb} we show both negative and positive results in the variable embedding setting. 
\fi
Finally, in Section~\ref{se:conclusions} we conclude with a discussion and open problems.

\section{Definitions and Basic Results}\label{se:preliminaries}

\subsection{Embedding-preserving Visibility Representations}

We only consider \emph{simple graphs}, i.e., graphs with neither loops nor multiple edges. A \emph{drawing} $\Gamma$ of a graph $G$ maps each vertex to a point of the plane and each edge to a Jordan arc between its two endpoints. We only consider \emph{simple drawings}, i.e., drawings such that the arcs representing two edges have at most one point in common, which is either a common endpoint or a common interior point where the two arcs properly cross. A drawing is \emph{planar} if no two arcs cross. 
A drawing divides the plane into topologically connected regions, called \emph{faces}. The unbounded region is called the \emph{outer face}. 
For a planar drawing the boundary of a face consists of vertices and edges, while for a non-planar drawing the boundary of a face may contain vertices, crossings, and edges (or parts of edges). 
An \emph{inner} face/edge/vertex is a face/edge/vertex that is not (part of) the outer face.

A {\em rotational scheme} of a graph $G$ consists of assigning at each vertex a cyclic order of the edges incident to that vertex.  A drawing is said to respect a rotational scheme if scanning around the vertex in clockwise order encounters the edges in the prescribed order.  We say that a graph has a {\em fixed embedding} if the graph comes with a rotational scheme, a fixed order of crossings along each edge, and with one face indicated to be the outer face.  (One can show that fixing the rotational scheme and the order of crossings fixes all faces.)  
Given a graph $G$ with a fixed embedding, the \emph{planarization} 
$G_p$ of $G$ is the graph obtained by replacing each crossing point of $G$ with 
a \emph{dummy-vertex}. The vertices of $G$ in $G_p$ are called \emph{original vertices} to avoid confusion.   The planarization $G_p$ is a {\em plane} graph, i.e., a planar graph with a fixed embedding without crossing.

A {\em visibility representation} (also called {\em \rvd} or \RVD for short) of a graph $G$ is an assignment of disjoint axis-aligned rectangles to the vertices in such a way that for every edge there exists a horizontal or vertical line of sight between its endpoints.
We follow a commonly adopted model where the lines of sight are {\em thick}, i.e., have non-zero area (see e.g.~\cite{DBLP:journals/ipl/KantLTT97,DBLP:conf/stacs/StreinuW03,TamassiaTollis86,DBLP:conf/compgeom/Wismath85}).  Thus we may assume that each edge is routed as a (horizontal or vertical) line segment, attaching at each end at a point that is not a corner of the rectangle of that vertex.  
In what follows, when this leads to no confusion, we shall use the same term \emph{edge} to indicate both an edge of a graph, the line of sight between two rectangles, and the line segment representing the edge, and we use the same term \emph{vertex} for both the vertex of a graph and the rectangle that represents it. We observe that if we have an \RVD, then we can naturally extract a drawing from it as follows.  Place a point for each vertex $v$ inside the rectangle representing $v$ and connect it to all the attachment points of incident edges of $v$ inside the rectangle; this adds no crossing. 
 We say that an \RVD $\Gamma$ of a graph $G$ \emph{respects} the embedding of $G$ if applying this operation to $\Gamma$ results in the same embedding. 

\subsection{Topology-shape-metrics}

The {\em topology-shape-metrics} approach is a well-known method introduced
by Tamassia \cite{Tam87} to create
orthogonal drawings of graphs, i.e., drawings where all vertices are points
and edges are curves with horizontal and vertical segments.    In this
section, we briefly review how any \RVD can be expressed as an
orthogonal-shape-metric; this is quite straightforward but has to our
knowledge not been used before.  We will use such metrics twice in
our paper: once to determine whether a given embedded graph has an \RVD,
and once to store an \RVD obtained for 4-connected 1-plane graphs so that
they can be manipulated efficiently.

Consider an \RVD $\mathcal R$ and interpret it as a PSLG, i.e., a planar straight-line graph in the computational geometry sense.   Thus, replace every corner of a rectangle, every attachment point of an edge at a rectangle, and every crossing by a node of the PSLG, and connect nodes with arcs if and only if they are connected by (part of) an edge segment or (part of) the boundary of a rectangle.  
Each node of the PSLG has around it a sequence of angles that are all multiple of $90^\circ$ and sum to $360^\circ$.  Likewise each face of the PSLG has around it a sequence of angles that are all multiple of $90^\circ$ and that sum to $(k-2)180^\circ$ for inner faces and $(k+2)180^\circ$ for the outer face, where $k$ is the number of nodes on the face.    
This motivates the following definition:

\begin{figure}[t]
\centering
\hspace*{\fill}
\includegraphics[width=0.3\linewidth,page=1]{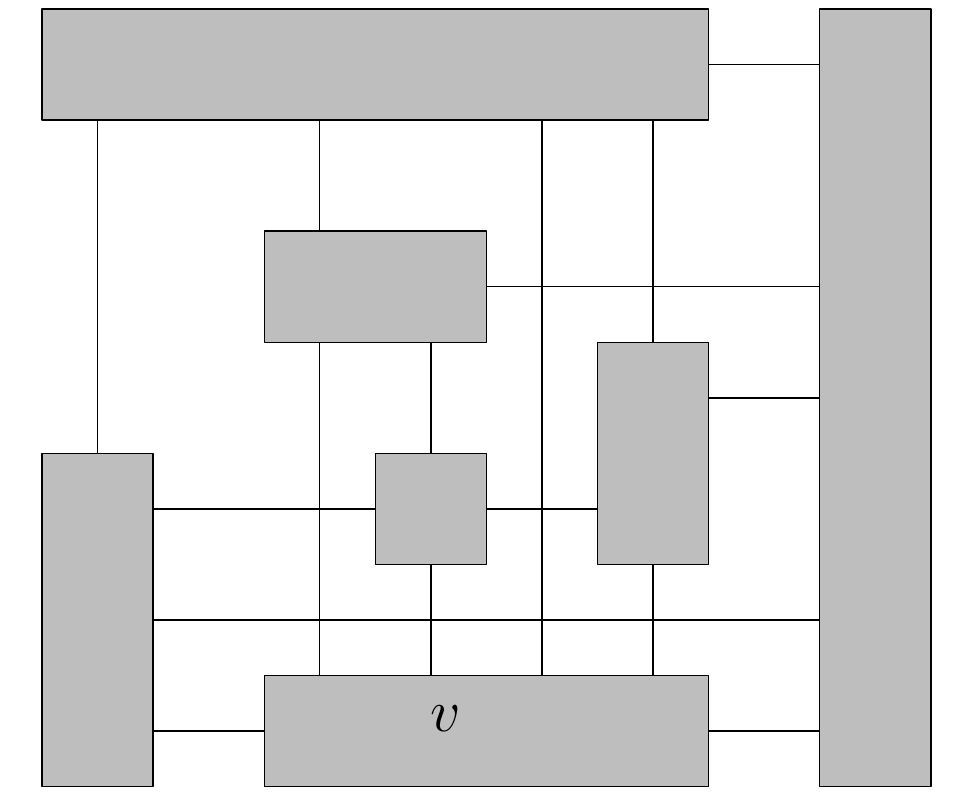}
\hspace*{\fill}
\includegraphics[width=0.3\linewidth,page=2]{convert.pdf}
\hspace*{\fill}
\includegraphics[width=0.3\linewidth,page=3]{convert.pdf}
\hspace*{\fill}
\caption{A visibility representation of a graph, the orthogonal representation obtained from it, and the orthogonal representation used to compute it.  We show only selected non-zero bend-counts.}
\label{fig:convert}
\label{fig:VRtoMetric}
\end{figure}

\begin{definition}
An {\em orthogonal representation} is a plane graph $H$ with maximum degree 4, an assignment of an angle $\alpha(v,f)\in \{90^\circ,180^\circ,270^\circ\}$ to any vertex-face-incidence (between vertex $v$ and face $f$) of $H$, and a bend-count $b(e,f)\in \{0,+1,-1,+2,-2,\dots\}$ to any edge-face-incidence (between edge $e$ and face $f$) of $H$, such that the following is satisfied:
\begin{itemize}
\item at every vertex $v$, the angles at $v$ sum to $360^\circ$,
\item at every edge $e$, the two bend-counts sum to 0,
\item for every inner face $f$ with $k$ vertices, the angles and bend-counts at $f$ sum to $(k-2)180^\circ$.  Formally,
	$$ \sum_{(v,f)} \alpha(v,f) + \sum_{(e,f)} b(e,f)\cdot 90^\circ = (k-2)180^\circ$$
\item the angles and bend-counts at the outer face sum to $(k+2)180^\circ$, where $k$ is the number of vertices on the outer face.
\end{itemize}
\end{definition}

We have shown above:

\begin{observation}
\label{obs:convert}
If $G$ has an \RVD $\Gamma$, then construct a plane graph $H$ by replacing every crossing, every edge-vertex attachment point and every corner of a vertex-rectangle of $\Gamma$ with a node in $H$.  The resulting graph then has an orthogonal representation where all bend-counts are 0.
\end{observation}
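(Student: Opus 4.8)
The plan is to read the orthogonal representation straight off the geometry of $\Gamma$ and then check the four axioms, each of which collapses to an elementary fact about axis-parallel planar subdivisions. First I would make $H$ precise and record its local structure: reinterpreting $\Gamma$ as a PSLG, every node of $H$ is either a corner of a vertex-rectangle (degree $2$, along the rectangle boundary), an edge--vertex attachment point (degree $3$: the two incident arcs on the rectangle side that the point splits, plus the one incident edge segment), or a crossing (degree $4$). Hence $H$ has maximum degree $4$. Moreover every arc of $H$ is a straight axis-parallel segment, and at every node the incident arcs leave in axis-parallel directions. Equipping $H$ with the planar embedding induced by $\Gamma$ makes it a plane graph.

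Next I would supply the combinatorial data. For every edge--face incidence set $b(e,f):=0$; this is forced, since every arc of $H$ is a straight segment and has no bends, and it makes the ``two bend-counts at an edge sum to $0$'' condition trivial. For every vertex--face incidence $(v,f)$ let $\alpha(v,f)$ be the actual angle that $f$ subtends at $v$ between the two arcs consecutive around $v$ on the side of $f$, counted once per incidence (so if $f$ touches $v$ at several corners, each corner contributes its own value). Because all arcs are axis-parallel, $\alpha(v,f)$ is a multiple of $90^\circ$; and since every node of $H$ has degree at least $2$, no single such angle equals $360^\circ$, so $\alpha(v,f)\in\{90^\circ,180^\circ,270^\circ\}$, as required. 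The vertex condition (the angles around $v$ sum to $360^\circ$) is then just the statement that the angles around a point sum to a full turn.

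For the two face conditions I would use the standard ``total turning'' argument. Traverse the boundary walk of a face $f$ keeping $f$ on the left; the turn performed at a visited incidence $(v,f)$ is $180^\circ-\alpha(v,f)$, and the total turn around the closed boundary walk of a face of a planar subdivision is $+360^\circ$ for a bounded face and $-360^\circ$ for the unbounded face. If $f$ is an inner face with $k$ node-incidences this gives $\sum_{(v,f)}(180^\circ-\alpha(v,f))=360^\circ$, i.e.\ $\sum_{(v,f)}\alpha(v,f)=(k-2)180^\circ$, and adding the zero bend-counts yields exactly the prescribed identity; the same computation for the outer face gives $\sum_{(v,f)}\alpha(v,f)=(k+2)180^\circ$. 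Thus all four axioms hold and every bend-count is $0$.

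The only real care is in the bookkeeping for degenerate situations. A face of $H$ need not be bounded by a simple cycle, so ``the angle at $v$ in $f$'' must be read per incidence (per corner) and the turning argument must be applied to the closed boundary walk, which may repeat nodes; once incidences are counted with the correct multiplicity the computation above is unchanged. I would also note that $H$ is connected whenever $G$ is (every vertex-rectangle is attached to the rest of the drawing by at least one edge segment), so that the outer-face walk is a single closed walk; if $G$ is disconnected one argues componentwise. Minor genericity assumptions on $\Gamma$ (distinct attachment points, and no edge segment passing exactly through a rectangle corner) can be enforced by an infinitesimal perturbation and do not affect the conclusion.
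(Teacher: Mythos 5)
Your proposal is correct and follows essentially the same route as the paper, which simply reads the angle assignment off the PSLG interpretation of $\Gamma$ and notes that all angles are multiples of $90^\circ$ summing to $360^\circ$ at nodes and to $(k\mp 2)180^\circ$ around faces. Your write-up merely fills in the details the paper leaves implicit (the degree bounds at the three node types, the zero bend-counts, and the total-turning argument for the face conditions), so there is nothing to flag.
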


Tamassia showed \cite{Tam87} that for any orthogonal representation, one can reconstruct a planar orthogonal drawing of the graph that respects these angles and bend-counts.  In other words, at any vertex-face-incidence the angle is $\alpha(v,f)$, and any edge $e$ has $|b(e,f)|$ bends, where the bends have angle $90^\circ$ in the face where the bend-count at $e$ is positive and hence angle $-90^\circ=270^\circ$ in the other face. In particular, it is not required to know the edge-lengths a priori.  This reconstruction process takes linear time in the size of $H$ and the total number of bend-counts.  If the orthogonal representation came from an \RVD, then the resulting drawing has the exact same set of crossings and the face that came from a vertex rectangle is again rectangular, so the resulting drawing is again an \RVD. 

Tamassia also showed how to compute for a given plane graph $H$ some angle-assignment and bend-counts such that the result is an orthogonal representation.  This is done via a flow-approach; see \cite{Tam87} or \cite{dett-gdavg-99} for details.  If one uses a minimum-cost-flow algorithm, then this allows to compute the topological orthogonal shape that minimizes the total number of bends.  Of higher interest to us is instead the following variant:  One can specify upper and lower bounds on the angles and bend-counts, and can then use a maximum-flow algorithm to test whether a feasible flow exists, and hence to test whether there exists an orthogonal representation for which angles and bend-counts are within the specified bound.  
The first version can be tested in $O(n^{1.5})$ time while the second version in $O(n^{1.5}\log{n})$ time, where $n$ is the number of vertices of $H$
\cite{DBLP:journals/jgaa/CornelsenK12}.

\subsection{Testing embedded graphs for \RVDs}
\label{sec:shape_metric}

Now presume that we are given a graph $G$  with a fixed embedding.
In this section, we will show how to test whether $G$ has an \RVD that respects its embedding using the topology-shape-metrics approach.  Note that 
the embedding tells us nearly the entire graph $H$ of the topology of a (putative) \RVD of $G$.  The only thing unknown is the location of the corners of each rectangle of a vertex.  We can now interpret these corners as bends (rather than as nodes) in the topology.   Hence define a graph $H$ as follows (see also Fig.~\ref{fig:VRtoMetric}):
\begin{itemize}
\item Start with the planarization $G_p$ of $G$.
\item For any original vertex $v$ of $G$, define a cycle $C_v$ with $\deg(v)$ nodes in $H$.  Attach the (parts of) edges incident to $v$ to the nodes of cycle $C_v$ in the specified order of the planar embedding, in such a way that the interior of the cycle forms a face.  In other words, form a cycle that can represent the boundary of the rectangle of $v$.
\item Impose an upper and lower bound of 0 onto the bend-counts of all (parts of) edges of $G$. In other words, we do not allow any bends for the original edges.
\item Impose a lower bound of 0 onto each bend-count $b(e,f_v)$ where $e$ is an edge of some cycle $C_v$ and $f_v$ is the face formed by cycle $C_v$.  In other words, cycle $C_v$ may have bends, but all such bends must form $90^\circ$ angles towards $f_v$.
\item If $a$ is a node of $C_v$, then impose an upper and lower bound of $180^\circ$ onto the angle $\alpha(a,f_v)$, where $f_v$ as before is the face inside $C_v$.  In other words, at attachment points of edges face $f_v$ must not have a corner.
\end{itemize}

\begin{theorem}
Let $G$ be  an $n$-vertex graph with a fixed embedding. Then we can test in $O(n^{1.5}\log {n})$ time whether $G$ has an \RVD that respects this embedding. 
\end{theorem}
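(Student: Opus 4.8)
The plan is to reduce the existence of an embedding-preserving \RVD of $G$ to a feasibility question about orthogonal representations of the auxiliary plane graph $H$ constructed just above the theorem statement, and then to invoke the maximum-flow variant of Tamassia's algorithm. First I would argue the forward direction: if $G$ has an \RVD $\Gamma$ respecting its embedding, then by Observation~\ref{obs:convert} the PSLG underlying $\Gamma$ has an orthogonal representation with all bend-counts $0$. I would then observe that this PSLG is, combinatorially, exactly $H$ with some of the $C_v$-nodes marked as "corners": collapsing each maximal chain of degree-$2$ corner nodes of a rectangle into a single edge of $C_v$ carrying the appropriate positive bend-count towards $f_v$, and keeping the $\deg(v)$ attachment nodes as the genuine vertices of $C_v$, yields precisely an orthogonal representation of $H$ that meets all the imposed bounds: original edges get bend-count $0$; the $C_v$-edges get nonnegative bend-count towards $f_v$ (a rectangle boundary only turns $90^\circ$ inward between consecutive attachment points); the angle at each attachment node inside $f_v$ is $180^\circ$ (an attachment point is not a corner, so the rectangle boundary goes straight there); and the face $f_v$ is a rectangle, so the outer angles and bend-counts around $f_v$ are consistent with the orthogonal-representation axioms. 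Hence $H$ admits an orthogonal representation within the specified bounds.

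For the converse I would take any orthogonal representation $R$ of $H$ meeting the imposed bounds and apply Tamassia's realization theorem (quoted in the excerpt) to get a planar orthogonal drawing $\Delta$ of $H$ realizing $R$. Then I must check that $\Delta$ is (after contracting each cycle face $f_v$ to its interior) an \RVD of $G$. The key points: each original edge has bend-count $0$, so it is drawn as a single axis-parallel segment; each face $f_v$ has, at its $\deg(v)$ attachment nodes, interior angle $180^\circ$ and, along each of its edges, only nonnegative bends towards $f_v$, so the boundary of $f_v$ is a rectilinear polygon that is "convex from inside" at the attachment nodes and turns only one way in between — I would argue this forces $f_v$ to be a genuine axis-aligned rectangle (its boundary is a simple closed rectilinear curve whose only reflex corners, if any, would have to be at attachment nodes, which are forbidden). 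Replacing each such rectangle-face by the vertex $v$ it represents, reattaching the incident edge-segments at their attachment points, and reading off the outer face from $\Delta$ gives an \RVD; and since the construction of $H$ glued the edge parts to $C_v$ in the prescribed rotational order and kept the designated outer face, this \RVD respects the fixed embedding of $G$. Thus $G$ has an embedding-preserving \RVD if and only if $H$ has an orthogonal representation within the imposed bounds.

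For the running time I would note that $H$ has $O(n)$ nodes: the planarization $G_p$ has $O(n)$ vertices and edges (at most $O(n)$ crossings, since one can assume $G$ has $O(n)$ edges — a graph with more than $6n-20$ edges has no \RVD, and this is easy to check first), and each cycle $C_v$ contributes $\deg(v)$ nodes, for a total of $O(n)$. All the imposed upper and lower bounds on angles and bend-counts are part of the input to the bounded feasibility version of Tamassia's flow formulation, which by the cited result of Cornelsen and Karrenbauer runs in $O(n^{1.5}\log n)$ time; Tamassia's reconstruction then takes linear time in $|H|$ plus the total number of bends, but for the purpose of the decision problem we only need the feasibility test. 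I would also remark that a feasible flow can be chosen with polynomially bounded bend-counts, so reconstruction stays polynomial if one actually wants to output the \RVD.

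The main obstacle I expect is the rigorous verification in the converse direction that the bound constraints on $H$ really do force each face $f_v$ to become an axis-aligned rectangle rather than merely some rectilinear polygon: one has to rule out that $f_v$ picks up reflex ($270^\circ$) corners. The imposed constraints pin the angle to $180^\circ$ exactly at the $\deg(v)$ attachment nodes and force all bends along $C_v$-edges to be "outward" (i.e.\ $90^\circ$ towards $f_v$), so every corner of the polygon bounding $f_v$ is a $90^\circ$ turn into $f_v$; summing turning angles around the simple closed boundary of $f_v$ then forces exactly four corners and hence a rectangle. Making this turning-number argument precise — and simultaneously handling the corner case $\deg(v) \le 3$, where some attachment nodes coincide with rectangle corners and the $180^\circ$ constraint interacts with the realized geometry — is the part that needs care; everything else is a routine translation between the \RVD picture and the orthogonal-representation formalism.
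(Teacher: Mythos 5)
Your proposal is correct and follows essentially the same route as the paper: build the auxiliary plane graph $H$, impose the angle and bend-count bounds, argue that a feasible orthogonal representation corresponds exactly to an embedding-preserving \RVD (with the same turning-angle argument forcing each face $f_v$ to be a rectangle), and invoke Tamassia's realization together with the $O(n^{1.5}\log n)$ flow feasibility test. Your write-up is in fact more explicit than the paper's, which only argues the direction from orthogonal representation to \RVD and treats the converse as immediate from Observation~\ref{obs:convert}.
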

\begin{proof}
Build the graph $H$ as described above.  It should be obvious that if we can find an orthogonal representation of $H$ that respects these constraints, then it gives rise to an \RVD of $G$.  Indeed, for any vertex $v$ the cycle $C_v$ must form a rectangle:  It cannot have a bend at any attachment-point, and any bend that it has provides an angle of $90^\circ$, hence by the condition it has exactly four such bends.  At any dummy-vertex the four angles must be $90^\circ$ by the condition for nodes.  All edges of $H$ that came from edges of $G$ are drawn straight, and continue straight at any crossing, so give rise to a line of sight. We then use Tamassia's flow-approach to test whether an orthogonal representation  of $H$ exists that satisfies all constraints. This takes $O(n^{1.5}\log{n})$ time~\cite{DBLP:journals/jgaa/CornelsenK12} ($G$ has $O(n)$ crossings).  
\end{proof}

\section{Embedded 1-planar graphs}\label{se:characterization}

While the topology-shape-metrics approach gives a polynomial way to test whether a graph has an  
\RVD, its black-box approach is somewhat unsatisfactory.  In particular, in case of a
negative answer, we obtain no good insights as to which parts of the graph prevented
the existence of a representation.    In this section, we therefore turn
our attention to 1-planar graphs.  We chose this graph class for two reasons.  One is
that they are known to have thickness 2 and at most $4n-8$ edges~\cite{pt-gdfce-C97}, and so they are 
good candidates for always having an \RVD.  (We will, however, see that this is not the
case.) 
Secondly,  1-planar graphs have been studied widely in the graph theory and graph drawing communities (see e.g.~\cite{DBLP:conf/gd/AlamBK13,DBLP:journals/jgaa/Brandenburg14,DBLP:journals/combinatorics/CzapH13,DBLP:journals/jgaa/Evans0LMW14,DBLP:conf/cocoon/HongELP12,DBLP:journals/dm/Korzhik08a,pt-gdfce-C97,Suzuki2010,t-rdg-JGT88}), and visibility representations of 1-planar graphs hence should be of interest. 

\subsection{Background}

A graph is {\em 1-planar} if it has a drawing with at most one crossing per edge.
A {\em 1-plane} graph is a 1-planar graph with a fixed embedding.  Recall
that this means that we have a fixed order of edges at each vertex, it is
fixed which pairs of edges cross, and we know what the faces are and which
one of them is the outer face. A 1-plane graph $G$ is called 
{\em triangulated 1-plane} if all faces are triangles.  In a 1-plane graph,
a facial triangle is composed of 
three vertices or two vertices and a crossing point. 

\begin{figure}[tb]
    \centering
    \begin{minipage}[b]{.2\textwidth}
    	\centering
    	\subfloat[\label{fi:bgraph}{\B}]
	    {\includegraphics[scale=0.8,page=2]{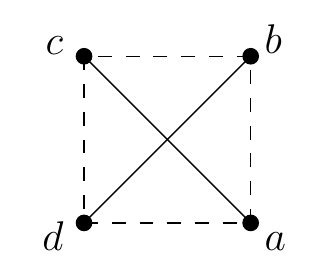}}
    \end{minipage}
    \hfill
    \begin{minipage}[b]{.2\textwidth}
    	\centering
    	\subfloat[\label{fi:xgraph}{Kite}]
	    {\includegraphics[scale=0.8,page=1]{configurations}}
    \end{minipage}
    \hfill
    \begin{minipage}[b]{.2\textwidth}
    	\centering
    	\subfloat[\label{fi:wgraph}{\W}]
	    {\includegraphics[scale=0.8,page=3]{configurations}}
    \end{minipage}
    \hfill
    \begin{minipage}[b]{.2\textwidth}
    	\centering
    	\subfloat[\label{fi:tgraph}{\T}]
	    {\includegraphics[scale=0.8,page=4]{configurations}}
    \end{minipage}
    \caption{Possible crossing configurations in a 1-plane graph $G$. (a) A \Bc.  (b) A (subgraph) of a kite. The dashed edges either do not exist or define faces. (c) A \Wc. (d) A \Tc.}
\end{figure}

Let $e=(a,c)$ and $e'=(b,d)$ be two edges of $G$ that cross at a point $p$.  We say that $e,e'$ induce a {\em \Bc} if there exists an edge between their endpoints (say edge $(a,b)$) such that the triangle $\{a,b,p\}$ contains vertices $c$ and $d$ inside (see also Fig.~\ref{fi:bgraph}).

If the crossing is not in a \Bc, then there are two further possibilities for edge $(a,b)$.  The triangle $(a,b),(b,p),(p,a)$ may be an interior face, or it could have other vertices inside.
, even in a 3-connected graph since edge $(a,b)$ may be crossed.  
If, for all pairs in $\{a,c\}\times \{b,e\}$, there either is no edge between the vertices or the triangle that it forms is a face,
then we could add the edges that did not exist previously, routing them along the crossing, and then obtain a {\em kite}, i.e., a crossing with a 4-cycle among its endpoints such that removing the crossing turns the 4-cycle into a face.

Let $(a,g)$ and $(b,d)$ be two edges of $G$ that cross at a point $p$, and let $(a,f)$ and $(b,e)$ be two further edges of $G$ that cross at a point $q$. The four edges induce a \emph{\Wc} if vertices $c,d,e,f$ lie inside the closed region delimited by the edge segments $(a,p)$, $(b,p)$, $(a,q)$, and $(b,q)$ (see also Fig.~\ref{fi:wgraph}). 

For the result in this paper, we need to
introduce a forth configuration, called the \emph{trillium configuration}, or \emph{\Tc} for short, which is illustrated in Fig.~\ref{fi:tgraph}. Namely, let $(a,c)$ and $(b,d)$ be a pair of edges of $G$ that cross at a point $p$. Let $(a,e)$ and $(c,h)$ be a second pair of edges that cross at a point $q$. Let $(b,f)$ and $(c,i)$ be a third pair of edges that cross at a point $t$. The six edges induce a \Tc if vertices $d$, $e$, $f$, $g$, $h$, $i$ lie inside the closed region delimited by the edge segments $(a,p)$, $(b,p)$, $(a,q)$, $(c,q)$, $(b,t)$, and $(c,t)$. Vertices $a,b,c$ are called the \emph{outer vertices} of the \Tc, while the remaining six vertices are the \emph{inner vertices}.

\subsection{Main result and outline}

In this section we give a characterization of 1-plane graphs that admit an \RVD. Our characterization is summarized by the following theorem.

\begin{theorem}\label{th:characterization}
A 1-plane graph $G$ admits an \RVD if and only if it contains no \Bc, no \Wc, and no \Tc.
\end{theorem}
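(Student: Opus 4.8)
The plan is to prove the two directions separately. For the ``only if'' direction, I would show that each of the three configurations is an obstruction to an embedding-preserving \RVD. The key tool is Observation~\ref{obs:convert}: an \RVD gives rise to an orthogonal representation in which all bend-counts are $0$, and every vertex-rectangle is a $4$-gon that contributes exactly four $90^\circ$ angles (toward its interior face) and only straight ($180^\circ$) angles at edge-attachment points, while every crossing (dummy-vertex) contributes four $90^\circ$ angles. I would then argue, for each configuration, that the region enclosed by the relevant edge segments contains vertices in its interior, yet the bend/angle budget forced by the orthogonal-representation equations around that bounded region cannot simultaneously be $0$ on all original edges and make room for the interior rectangles — essentially an ``angle accounting'' contradiction: in a visibility representation each edge is a straight segment, so a vertex trapped inside a closed curve made of a few straight segments would have to ``see'' outside through one of those segments, which is impossible since each bounding segment is collinear and a rectangle cannot be crossed by a line of sight of an edge not incident to it. For the \Bc and \Wc this essentially reproduces (in the \RVD setting) the argument of Thomassen~\cite{t-rdg-JGT88} for straight-line drawings, since in an \RVD each edge is a straight segment and thus behaves like a straight-line edge for the purpose of ``who can be trapped where''; for the \Tc I would give the analogous argument adapted to its trillium-shaped bounded region bounded by six segments and three crossings, checking that no admissible placement of the nine rectangles exists.

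For the ``if'' direction — which I expect to be the main obstacle — I would give a constructive algorithm (the \algo referred to in the macros) that, given a 1-plane graph $G$ free of \Bcs, \Wcs, and \Tcs, produces an \RVD respecting the embedding. My intended route is: first, augment $G$ to a triangulated 1-plane graph $G'$ that still contains none of the three forbidden configurations (this augmentation step is delicate and is likely where most of the work lies, since adding edges can create new crossings or new configurations; one must add edges only inside faces, and handle the kite structure around each crossing carefully). Second, in a triangulated 1-plane graph with no forbidden configuration, each crossing can be ``planarized into a kite'': replace the crossing by its enclosing $4$-cycle (adding the missing kite edges) and delete the two crossing edges, obtaining a plane graph $G_p'$; because no \Bc/\Wc/\Tc occurs, the crossing pairs are ``independent enough'' that this yields a well-defined plane (indeed planar, near-triangulated) graph. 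Third, compute a planar visibility-style representation (e.g., a bar-visibility / rectangular-dual-type layout, using the \rdr machinery) of $G_p'$, and then re-insert each crossing edge pair: the kite structure guarantees that inside each kite there is a rectangular empty region through which one can route the two crossing segments (one horizontal, one vertical) so that they cross at right angles and see their endpoints. I would argue correctness by induction on the number of crossings, or by a global argument that the rectangular faces left by the planar representation are exactly the kites and each admits the required two crossing lines of sight. Finally I would note that all steps run in linear time and the resulting grid has quadratic size, matching the claims made in the introduction.

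The technical heart — and the step I expect to fight with — is twofold: (i) proving that the forbidden-configuration-free augmentation to a triangulated 1-plane graph exists and preserves the absence of all three configurations (the \Tc being the new and trickiest obstruction to preserve), and (ii) showing that in a triangulated 1-plane graph with no forbidden configuration, the crossings are ``kite-like'' and pairwise non-interfering, so that the planarization into kites is sound and the re-insertion of crossing segments never forces a bend or an illegal visibility. Everything else (building the planar representation, the flow/orthogonal-representation bookkeeping of Observation~\ref{obs:convert} for the converse, the linear-time and quadratic-area bounds) I expect to be comparatively routine.
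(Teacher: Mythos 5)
Your necessity argument is essentially the paper's: convert the \RVD to an orthogonal representation (Observation~\ref{obs:convert}) and do angle accounting around the closed region bounding each configuration. The paper packages this as a single counting lemma (a cycle bounding a face with $k$ vertices and $c$ crossings in an \RVD forces $c\le 2k-4$, which all three configurations violate), which is cleaner than a case-by-case ``trapped rectangle'' argument, but your direction of attack is sound.

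The sufficiency direction, however, has a genuine gap, and it sits exactly where you predicted you would ``fight.'' Your plan is to planarize each crossing into a kite, \emph{delete} the two crossing edges, draw the resulting plane graph, and then re-route the two segments through ``a rectangular empty region inside each kite.'' Two problems. First, deleting the crossed edges leaves a quadrilateral face where each kite was, so the graph is no longer a triangulation and the \rdr machinery does not apply directly; more importantly, no planar visibility representation of that graph is guaranteed to leave an empty region in which the two opposite pairs of kite vertices are mutually visible in perpendicular directions --- that property does not follow from the kite structure, it has to be engineered. The paper engineers it by \emph{keeping} the crossing as a degree-4 dummy-vertex, decomposing the triangulated planarization into its 4-block tree, and exploiting the fact that in a transversal pair of bipolar orientations every \emph{inner} vertex of degree 4 gets exactly one neighbor on each side of its rectangle (the \foursc); a \zz then aligns opposite edges so the dummy rectangle can collapse to a crossing point. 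Second, your proposal never identifies where the absence of \Tcs is actually used. In the paper it is used at a very specific point: each 4-connected component must have a \se on its outer triangle (an edge whose inner incident face has an \emph{original} third vertex), and if all three candidate edges failed this test the three crossings would form a \Tc (or a \Bc if two coincide). Without locating this, your ``the crossing pairs are independent enough'' step is not a proof but a restatement of the goal, and the entire merging of components along separating triangles --- which is where the bulk of the construction lives --- is missing from your outline.
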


Notably, Theorem~\ref{th:characterization} extends the characterization for straight-line drawability of 1-plane graphs given by Thomassen~\cite{t-prg-84}, adding the \Tc to the set of obstructions. The necessity of the condition is proved via an easy angle-counting-argument
in Section~\ref{sse:necessary}.
The sufficiency-proof is quite intricate and is given in Section~\ref{sse:sufficient}.  It comes with
an efficient algorithm to construct an \RVD $\Gamma$ of a drawable graph $G$. 
Finally, we describe a testing procedure (Section~\ref{sse:testing}) that easily follows from our characterization. Next theorem is obtained by combining our testing procedure and our drawing algorithm. 

\begin{theorem}\label{th:test}
Let $G$ be a 1-plane graph with $n$ vertices. There exists an $O(n)$-time algorithm to test whether $G$ admits an \RVD. Also, in the positive case the algorithm computes an \RVD of $G$ that has $O(n^2$) area.
\end{theorem}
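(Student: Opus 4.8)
The plan is to prove Theorem~\ref{th:test} in two parts: a linear-time \emph{testing} routine that detects whether $G$ contains one of the three forbidden configurations (\Bc, \Wc, \Tc), and a linear-time \emph{drawing} routine that, when none of them occurs, produces an \RVD on an $O(n^2)$ grid. By Theorem~\ref{th:characterization}, correctness of the test is immediate once we know that the three configurations can each be recognized in linear time, and correctness plus complexity of the drawing routine are exactly what the sufficiency proof of Theorem~\ref{th:characterization} (Section~\ref{sse:sufficient}) is claimed to supply. So the work here is mostly about the recognition step and about bookkeeping for the area bound.

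For the recognition step, I would exploit locality: each of \B, \W, \T\ is centered at one or two or three crossings, and in a 1-plane graph the number of crossings is $O(n)$ (indeed at most $n-2$), so it suffices to spend $O(1)$ amortized time per crossing. First I would build the planarization $G_p$ in linear time; every crossing is a degree-4 dummy-vertex whose four neighbors along the two crossing edges are the candidate endpoints $a,b,c,d$. A \Bc\ at crossing $p$ of $(a,c),(b,d)$ exists iff one of the four pairs from $\{a,c\}\times\{b,d\}$ is joined by an edge of $G$ whose corresponding triangle in the embedding encloses the other two endpoints; this is a constant-size check per crossing once we can test adjacency and can read the rotation system (both available in $O(1)$ after linear preprocessing). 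A kite-detection subroutine (needed because \W\ and \T\ are built from several crossings that are ``close'' in the embedding) similarly inspects a constant-size neighborhood. For \W\ and \T, the defining feature is that two (resp. three) crossings share common endpoints ($a,b$ for \W; $a,b,c$ pairwise for \T) and the ``other'' endpoints are trapped inside a small closed region; I would detect shared endpoints by, for each original vertex $v$, scanning the crossings incident to its incident edges and looking for the required pattern among consecutive such crossings in $v$'s rotation, which over all vertices is $O(\sum_v \deg(v)) = O(n)$ time. The containment conditions (``$c,d$ inside triangle $\{a,b,p\}$'', etc.) are then decided purely from the cyclic order of edges around the involved vertices and crossings, hence in $O(1)$ each.

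For the drawing and area bound, I would invoke the constructive sufficiency proof: it builds $\Gamma$ by some decomposition of the 1-plane graph (e.g., augmenting to a triangulated 1-plane graph without new forbidden configurations, handling kites and the planar skeleton separately, and assembling rectangles for vertices and line-of-sight segments for edges). Whatever the exact construction, to certify $O(n^2)$ area it is enough to observe that $\Gamma$ can be stored as an orthogonal representation with $O(n)$ nodes and total bend-count $O(n)$ (Observation~\ref{obs:convert} and the surrounding discussion guarantee the orthogonal-representation viewpoint), and then apply Tamassia's reconstruction, which from an orthogonal representation of size $O(n)$ yields a drawing on an $O(n)\times O(n)$ grid. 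One must also check that the reconstruction preserves the property that each vertex-cycle $C_v$ is a genuine (non-degenerate) rectangle and that no two rectangles overlap, which follows because the orthogonal representation fed to it already enforces the $180^\circ$-at-attachment-point and $90^\circ$-bend constraints used in Section~\ref{sec:shape_metric}.

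I expect the main obstacle to be the recognition step for the \Tc, not the drawing step. The \B- and \W-configurations are classical and local in an obvious way, but a \Tc\ involves three distinct crossings $p,q,t$ with a carefully prescribed incidence pattern among six edges and three outer vertices $a,b,c$, and the inner vertices must all lie inside a hexagonal region bounded by six edge-segments. Making this a genuinely $O(1)$-per-crossing (or $O(1)$-per-vertex) check requires pinning down a canonical ``entry point'' — I would root the search at the outer vertex $a$ (or at the crossing $p=(a,c)\times(b,d)$), verify that the two edges $(a,e)$ and $(c,h)$ that cross at $q$ are the ones immediately flanking $(a,c)$ in the relevant region, and likewise for $t$, so that at most a constant number of candidate triples is examined per crossing. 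The delicate part is arguing that this local pattern is \emph{exactly} equivalent to the global containment condition in the definition of \Tc, i.e., that if the local incidences hold then the six inner vertices are automatically trapped inside the hexagon, and conversely. This equivalence is a combinatorial-topological lemma about 1-plane embeddings that I would isolate and prove first; everything else is then routine linear-time graph traversal.
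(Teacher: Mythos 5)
Your overall decomposition (a linear-time forbidden-configuration test followed by the constructive sufficiency algorithm, with the $O(n^2)$ area certified by storing the drawing as an orthogonal representation and invoking Tamassia's reconstruction) matches the paper, and the drawing half is essentially the paper's proof of Lemma~\ref{le:algo}. The one point you only half-state there is that the construction itself must be carried out \emph{on} the orthogonal representation, so that {\zz}s and merges become $O(1)$ bookkeeping on angles and adjacencies rather than coordinate changes; this is what makes the drawing routine linear-\emph{time}, not merely the final output small.

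The genuine gap is in the recognition of the \Tc, precisely the step you defer. The local pattern you propose is in fact wrong as stated: in a \Tc the two crossed edges at an outer vertex $a$ that bound the hexagonal region need not be consecutive in $a$'s rotation (not even among the crossed edges at $a$), and the second pair of crossing edges need not ``immediately flank'' $(a,c)$, because $a$ may have arbitrarily many further edges --- possibly themselves crossed --- leading into the region toward the inner vertices, and such edges create neither a \Bc nor a \Wc. So scanning consecutive crossings around each vertex misses {\Tc}s, and without consecutiveness you have no bound on the number of candidate triples of crossings, so the $O(1)$-candidates-per-crossing claim is unsupported. The paper's route is different and sidesteps all of this: after running the Hong et al.\ linear-time \B/\W-test, it builds the kite-augmented triangulated planarization $G^+$, observes that the three outer vertices of any \Tc then induce a triangle of $G^+$, and that such a triangle bounds a \Tc if and only if each of its three inner incident faces has a dummy-vertex as its third vertex (an $O(1)$ test per triangle); it then lists all triangles in $O(n)$ time via the constant-arboricity algorithm of Chiba and Nishizeki, with an extra pruning argument --- absent from your proposal --- to keep only three copies of each multi-edge of $G^+$ when $G$ is not 3-connected, so that arboricity stays bounded. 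To repair your route you would have to actually prove the local-to-global equivalence lemma you postpone, and in the form you state it (consecutiveness/flanking) it is false.
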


\subsection{Proof of necessary condition}\label{sse:necessary}

Our first goal is to show that no \B-/\W-/\Tc can be a subgraph of a 1-plane graph admitting an \RVD. To this aim we show a slightly stronger statement.

\begin{lemma}\label{le:outerface}
Let $G$ be a graph that admits an \RVD $\Gamma$ and whose outer face is a cycle $\mathcal C$.  If while walking along $\mathcal C$ we encounter $k$ vertices and $c$ crossing points, then $k\geq 3$ and $c\leq 2k-4$.
\end{lemma}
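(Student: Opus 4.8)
The plan is to use the orthogonal-representation machinery from Observation 1, specializing to the case where the outer face of the \RVD is bounded by a cycle $\mathcal C$. First I would observe that since $\Gamma$ admits an \RVD, by Observation 1 the associated plane graph $H$ (obtained by turning every crossing, every edge-vertex attachment point and every rectangle corner into a node) has an orthogonal representation with all bend-counts equal to $0$. Walking along $\mathcal C$, the nodes of $H$ encountered on the outer face are: the $4$ corners of each of the $k$ vertex-rectangles on $\mathcal C$, the attachment points of edges on the outer sides of these rectangles, and the $c$ crossing points; let $N$ be the total number of such nodes. Apply the outer-face angle condition of the definition of orthogonal representation: since all bend-counts are $0$, the angles at the outer face sum to $(N+2)\cdot 180^\circ$.

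Next I would bound this sum from below by inspecting the contribution of each node type to the outer angle. A crossing point is a degree-$4$ node whose four incident angles are each $90^\circ$; hence each of the $c$ crossings contributes exactly $90^\circ$ to the outer face. A rectangle corner contributes $90^\circ$ (the interior angle of the rectangle is toward the rectangle's face, so toward the outer face it is $270^\circ$ — I should be careful about orientation here: the outer face lies \emph{outside} each rectangle, so a corner of a rectangle on $\mathcal C$ presents a $270^\circ$ angle to the outer face). An attachment point on the outer boundary of a rectangle presents a $180^\circ$ angle toward the rectangle's interior and hence contributes at least $90^\circ$ toward the outer face (it could be more, but $90^\circ$ is a safe lower bound since all angles are positive multiples of $90^\circ$). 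So the outer-face angle sum is at least $4k\cdot 270^\circ + c\cdot 90^\circ + (\text{attachment points})\cdot 90^\circ$, while it equals $(N+2)\cdot 180^\circ$ with $N = 4k + c + (\text{attachment points})$. Writing $m$ for the number of outer attachment points and comparing these two expressions gives $4k\cdot 270 + c\cdot 90 + m\cdot 90 \le (4k + c + m + 2)\cdot 180$, which after simplification yields $360 k \le 90 c + 90 m + 360$, i.e. roughly $4k - 4 \le c + m$. This is not yet the claimed $c \le 2k-4$; I need a complementary inequality, obtained by bounding $m$ in terms of $k$ and $c$.

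The complementary bound comes from counting edge-endpoints on the outer boundary: each of the $k$ rectangles has four sides, and an outer attachment point on a given side, together with the geometry of a visibility representation, restricts how attachment points and crossings alternate. More precisely, I would count the edge-segments of $H$ on the outer face: between two consecutive corners of a rectangle lie some number of outer attachment points, and each original edge leaving such an attachment point either reaches another vertex-rectangle on $\mathcal C$ or hits a crossing. Using that $\mathcal C$ is a \emph{cycle} (so each outer vertex has exactly two outer-boundary edges of $\mathcal C$ incident to it, plus possibly chords going into the interior — but those do not touch the outer face), together with the fact that each crossing on $\mathcal C$ has exactly two of its four incident edge-parts on the outer face, I can derive $m \le c + k$ or a similar relation, and then combine with the angle inequality above to isolate $c \le 2k - 4$. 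The bound $k \ge 3$ then follows immediately, since $c \ge 0$ forces $2k - 4 \ge 0$, i.e. $k \ge 2$, and $k = 2$ would force $c = 0$, making $\mathcal C$ a $2$-cycle, impossible in a simple graph; hence $k \ge 3$.

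The main obstacle I anticipate is the bookkeeping in the second inequality: precisely relating the number $m$ of outer attachment points to $k$ and $c$ requires a careful case analysis of how the boundary of the outer face of an \RVD is structured (how edge segments, rectangle sides, and crossings interleave along $\mathcal C$), and making sure the orientation of angles (which side is "toward the outer face") is handled consistently for corners versus attachment points versus crossings. A cleaner alternative, which I would fall back on if the direct count gets messy, is to contract each vertex-rectangle back to a single point of degree $\deg(v)$ and argue directly on the resulting plane multigraph using Euler's formula together with the fact that every bounded face other than the (now-contracted) vertices has at least $4$ sides in an orthogonal drawing with no bends — but the angle-sum argument above is more in the spirit of the paper's topology-shape-metrics framework and I would present that version as primary.
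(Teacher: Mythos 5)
Your overall framework---converting $\Gamma$ to an orthogonal representation via Observation~\ref{obs:convert} and applying the outer-face angle-sum condition---is exactly the paper's approach, but your bookkeeping is set up so that the inequality comes out in the wrong direction and cannot yield $c\leq 2k-4$. There are two specific errors. First, you count all $4k$ rectangle corners as nodes of the outer face; in fact only some corners of an outer rectangle lie on the outer face (the others face inner faces of the representation), so the number $b$ of outer-face corners is an \emph{unknown} satisfying $b\leq 4k$, not a quantity equal to $4k$. Second, you leave the number $m$ of outer attachment points undetermined and only lower-bound their angle contribution, whereas $m$ is pinned down exactly: the outer face is the cycle $\mathcal C$, which has exactly $k$ edges between its $k$ vertices, hence exactly $2k$ attachment points on the outer face, and each such degree-$3$ node has angles $90^\circ,90^\circ,180^\circ$ with the $180^\circ$ toward the rectangle's interior, so it contributes \emph{exactly} $90^\circ$ to the outer face (an attachment point of an edge not on $\mathcal C$ cannot lie on the outer-face boundary at all). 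With these two corrections the angle-sum condition becomes an \emph{equation}, $90^\circ(2k+c)+270^\circ\, b=(2k+c+b+2)\cdot 180^\circ$, which forces $b=2k+c+4$; combining with $b\leq 4k$ gives $c\leq 2k-4$.

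As written, your inequality $4k\cdot 270+90c+90m\leq(4k+c+m+2)\cdot 180$ simplifies to $4k-4\leq c+m$, and substituting the true value $m=2k$ yields $c\geq 2k-4$, a \emph{lower} bound on $c$. The ``complementary inequality'' $m\leq c+k$ you propose would not repair this: combined with $4k-4\leq c+m$ it gives $3k-4\leq 2c$, again a lower bound. So this is not mere bookkeeping to be filled in later; the derivation must be restructured so that the corner count $b$ is the slack variable bounded above by $4k$, rather than the attachment-point count $m$. Your concluding step ($c\geq 0$ forces $k\geq 2$, and $k=2$ would make $\mathcal C$ a double edge, so $k\geq 3$) is fine once the bound $c\leq 2k-4$ is actually in hand.
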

\begin{proof}
Convert $\Gamma$ into an orthogonal representation (Observation~\ref{obs:convert}) and consider the angles at the outer face.
Since there are $k$ vertices on the outer face, we have $2k$ edge-attachment points, and each contributes $90^\circ$.
Each of the $c$ crossings contributes $90^\circ$.  If $b$ is the number of rectangle-corners on the outer face, then $b\leq 4k$,
and each contributes $270^\circ$.   In total the outer face has $2k+c+b$ nodes and the sum of angles must be $180^\circ(2k+c+b+2)$.
But we also know that the angles sum to $90^\circ(2k+c)+270^\circ b = 180^\circ(2k+c+b+2)+90^\circ(b-2k-c-4)$.  So
$b-2k-c-4=0$, hence $4k\geq b=2k-c-4$, hence $c\leq 2k-4$.  This implies $k\geq 2$, and in fact $k\geq 3$ is required, else $c=2$ and the outer face would consist of a double edge.
%
%
\end{proof}

Since a \Bc (\Wc, \Tc) has as outer face a cycle with $k=2$ and $c=1$ ($k=2$ and $c=2$, $k=3$ and $c=3$), the following corollary follows.

\begin{corollary}\label{co:nobwt}
Let $G$ be a 1-plane graph that admits an \RVD. Then $G$ contains no \Bc, no \Wc, and no \Tc. 
\end{corollary}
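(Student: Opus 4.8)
The plan is to read the corollary off Lemma~\ref{le:outerface} by exhibiting, inside each of the three forbidden configurations, a cycle whose parameters $(k,c)$ violate the conclusion of that lemma. First I would make precise the outer boundary of each configuration when it is regarded as an embedded graph in its own right. For a \Bc with crossing $p$ and outer vertices $a,b$, the boundary is the triangle $a$-$p$-$b$, obtained from the edge $(a,b)$ together with the sub-segments $(a,p)$ and $(b,p)$ of the two crossing edges; here $k=2$ and $c=1$. For a \Wc with crossings $p,q$ and outer vertices $a,b$, the boundary is the $4$-cycle $a$-$p$-$b$-$q$ formed from the four sub-segments $(a,p),(b,p),(a,q),(b,q)$; here $k=2$ and $c=2$. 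For a \Tc with crossings $p,q,t$ and outer vertices $a,b,c$, the boundary is the $6$-cycle $a$-$p$-$b$-$t$-$c$-$q$ formed from the six sub-segments listed in the definition; here $k=3$ and $c=3$. In each case all the remaining (inner) vertices lie, by definition, strictly inside this closed curve.

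Next, I would argue by contradiction. Suppose $G$ admits an (embedding-preserving) \RVD $\Gamma$ and contains one of these configurations $X$. Deleting from $\Gamma$ all rectangles and line-of-sight segments that do not belong to $X$ leaves a valid \RVD of $X$: the surviving rectangles are still disjoint and axis-aligned, and the line of sight of any surviving edge was already unobstructed in $\Gamma$, so a fortiori it remains unobstructed once other rectangles are removed. Moreover, since $\Gamma$ respects the embedding of $G$, the drawing extracted from this sub-representation of $X$ has the inner vertices of $X$ inside the boundary cycle $\mathcal C$ identified above, so $\mathcal C$ bounds the outer face. Lemma~\ref{le:outerface} then applies and yields $k\ge 3$ together with $c\le 2k-4$. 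This contradicts $k=2$ for the \Bc and for the \Wc, and it contradicts $c=3>2=2k-4$ for the \Tc. Hence $G$ contains none of the three configurations.

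I do not expect a genuine obstacle here: the real content is Lemma~\ref{le:outerface}, and everything else is bookkeeping. The only points deserving a line of justification are that each boundary curve is a simple cycle — which holds because the two edges of a crossing share no endpoint and $G$ is simple (so $a,b$, resp.\ $a,b,c$, are pairwise distinct), and the crossing points are pairwise distinct and distinct from all vertices — and that the sub-representation of $X$ inherits the embedding from $\Gamma$, so that the defining property of the configuration (its inner vertices lying inside $\mathcal C$) really does place $\mathcal C$ on the outer face.
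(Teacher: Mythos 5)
Your proposal is correct and follows essentially the same route as the paper: the paper derives the corollary from Lemma~\ref{le:outerface} by observing that a \Bc, \Wc, and \Tc have as outer face a cycle with $(k,c)=(2,1)$, $(2,2)$, and $(3,3)$ respectively, exactly the counts you compute. The extra details you supply (that restricting the \RVD to the configuration yields a valid \RVD of it, and that the embedding-preservation places the inner vertices inside the boundary cycle so that it bounds the outer face) are precisely the bookkeeping the paper leaves implicit.
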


\subsection{Proof of sufficient condition}\label{sse:sufficient}

We show that the absence of any \B-/\W-/\Tc suffices to compute an \RVD. Let $G$ be a 1-plane graph with $n$ vertices and with no \B-/\W-/\Tc as a subgraph, see also Fig.~\ref{fi:g} for an example. We describe a drawing algorithm, \algo, which computes an \RVD $\Gamma$ of $G$. We first assume that $G$ is 3-connected (Sections~\ref{ssse:triangulate}--\ref{ssse:undoing}), and then extend the proof to the general case (Section~\ref{ssse:general}).


\subsubsection{Triangulating and planarizing}
\label{ssse:triangulate}

The first step of algorithm \algo is to triangulate $G$.   This is a well-known operation for 3-connected 1-plane graphs, see \cite{DBLP:conf/gd/AlamBK13,DBLP:conf/cocoon/HongELP12}.  However, these algorithms all modify the given 1-planar embedding, even if there is no \Bc or \Wc. In fact, this may be required since every crossing must form a kite in a triangulation of $G$, and so if an edge between endpoints of a crossing already existed, but did not form a face, then it must be re-routed.  We do not want to change the embedding, and hence cannot triangulated the 1-plane graph per se.  Instead, we combine the triangulation step with a planarization step to ensure that the result has no multiple edges, yet at all crossings (respectively the corresponding dummy-vertices) we have kites. 

We construct the triangulated planar graph $G^+$ with the following steps:
\begin{itemize}
\item Let $G_1$ be the planarization of $G$.  Note that dummy-vertices have degree 4 (and we will maintain this throughout later modifications).
\item  For each dummy-vertex $z$ of $G_1$, let $v_0,\dots,v_3$ be the four neighbors in clockwise order.  For $i=0,\dots,3$, consider the face at $z$ between the edges $(z,v_i)$ and $(z,v_{i+1})$ (addition modulo 4).  If this face is the outer face, or an inner face that is not a triangle, then add the edge $(v_i,v_{i+1})$.  Put differently, we add the edges that are needed to turn $z$ into a kite.  (In what follows, we will use the previously defined terms ``kite'', ``\Bc'', ``\Wc'' and ``\Tc'' for the planarized version as well, i.e., for the same situations with crossings replaced by dummy-vertices.)
Call the resulting graph $G_2$.  The following is crucial for our correctness:

\begin{lemma}
If $G$ is simple and 3-connected and has no \Bc, \Wc or \Tc, 
then $G_2$ has no \Bc, \Wc, or \Tc and no multiple edges.
\end{lemma}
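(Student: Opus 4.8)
The plan is to show two things separately: that $G_2$ has no multiple edges, and that $G_2$ has no \Bc, \Wc, or \Tc. Throughout, recall that $G_2$ is obtained from the planarization $G_1$ of $G$ by adding, for each dummy-vertex $z$ with cyclic neighbors $v_0,\dots,v_3$, those edges $(v_i,v_{i+1})$ (indices mod 4) that are ``missing'' as kite-edges, i.e., whenever the face of $G_1$ at $z$ between $(z,v_i)$ and $(z,v_{i+1})$ is either the outer face or a non-triangular inner face. So every dummy-vertex becomes a kite in $G_2$, and the only new edges are these kite-completing edges.

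\textbf{No multiple edges.} Suppose for contradiction that adding kite-edge $(v_i,v_{i+1})$ for dummy-vertex $z$ creates a multiple edge, i.e., an edge between $v_i$ and $v_{i+1}$ already exists in $G_1$ (it is either an original edge of $G$ or part of a crossed edge of $G$, or possibly a kite-edge added for another dummy-vertex). I would analyze the cycle formed by this pre-existing $v_iv_{i+1}$-connection together with the path $v_i z v_{i+1}$: it bounds a region on one side, and the ``interior'' of this region must contain something, since the face at $z$ between $(z,v_i)$ and $(z,v_{i+1})$ was not already the triangle $v_i z v_{i+1}$ (otherwise we would not have added the edge). The key is then to argue that whatever lies inside forces a \Bc (when the inside contains the two far endpoints $c,d$ of the two crossing edges through $z$) or, if the pre-existing connection passes through another dummy-vertex, a \Wc or \Tc; alternatively, if the region is empty of vertices but non-facial, then 3-connectivity of $G$ is violated (the two vertices $v_i,v_{i+1}$, or a vertex and a dummy, form a separating pair in $G_1$, which pulls back to a small cut or a ``hermetic'' pair in $G$). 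This case analysis — distinguishing whether the repeated connection is an original edge, a half-edge of a crossing, or a previously-added kite-edge, and in each case locating the forbidden configuration — is the bulk of the argument.

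\textbf{No \Bc, \Wc, or \Tc in $G_2$.} Each of these configurations is, by definition, a small collection of crossings (dummy-vertices) together with certain edges among their endpoints, enclosing some vertices in a bounded region. Since $G_2$ adds no new dummy-vertices, any such configuration in $G_2$ uses the same dummy-vertices it would use in $G_1$; the only way a configuration could be ``new'' in $G_2$ is if one of its defining edges is a kite-edge we added. So I would argue: pick a hypothetical \Bc/\Wc/\Tc in $G_2$; if all of its edges already existed in $G_1$ (equivalently in $G$), it was already present in $G$, contradicting the hypothesis. Otherwise at least one defining edge, say $(a,b)$, is a kite-edge added for some dummy-vertex $z'$, meaning $a,b$ are cyclically consecutive neighbors of $z'$ and the corresponding face of $G_1$ at $z'$ was not a triangle. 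I then want to derive a contradiction by showing that the enclosing region of the configuration, together with the kite structure at $z'$, either (i) already forces one of the three forbidden configurations to be present in $G$ using only original edges/crossings (re-deriving it without the kite-edge, since the kite-edge can be ``shortcut'' through $z'$ and its two real edges), or (ii) violates 3-connectivity of $G$. The interplay here — that a kite-edge $(a,b)$ at $z'$ can always be replaced in a configuration-boundary by the two-edge path through $z'$, possibly creating an additional crossing that upgrades a \Bc to a \Wc or a \Wc to a \Tc — is why precisely these three configurations (and no more) appear, and it is the conceptual heart of the lemma.

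\textbf{Main obstacle.} The hard part is the exhaustive-but-careful case analysis of \emph{which} kite-edges can be added \emph{where}, and showing that every bad outcome is captured by one of the three named configurations rather than by some fourth, as-yet-unnamed configuration. In particular I expect the delicate cases to be those where two or three newly added kite-edges (from distinct dummy-vertices) conspire, since then replacing each kite-edge by its two-edge detour through a dummy-vertex could a priori create a configuration with four or more crossings; I would need to show that 3-connectivity and the 1-plane structure (each original edge crossed at most once, so each dummy-vertex has degree exactly 4 and its kite is ``clean'') prevent this, forcing the detours to collapse the region down to a \Bc, \Wc, or \Tc. Handling the outer face correctly (kite-edges added because a face at $z$ is the outer face) is a secondary subtlety, since the enclosing-region definitions of the configurations are about bounded regions and one must check the outer-face kite-edges do not escape the accounting.
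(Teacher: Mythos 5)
Your overall strategy (reduce any bad structure in $G_2$ to a forbidden configuration in $G$ or to a violation of 3-connectivity) is the right one, but the proposal explicitly defers ``the bulk of the argument'' to a case analysis that you do not carry out, and it misses the two structural observations that make that case analysis essentially disappear. First: a \Wc and a \Tc are defined \emph{only} in terms of pairs of crossing edges and the vertices enclosed by their half-segments; the kite-edges added in $G_2$ are uncrossed planar edges, so they can never be defining edges of a \Wc or \Tc. Hence any \Wc or \Tc in $G_2$ uses exactly the dummy-vertices and crossing-edge halves already present in $G_1$, and (since the embedding and outer face are unchanged) pulls back verbatim to a \Wc or \Tc in $G$ --- no shortcut or upgrade argument is needed. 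Second: only the \Bc has a non-crossing defining edge $(a,b)$, and the paper handles it not by your ``replace the kite-edge by the detour through the dummy'' idea but by noting that the kite construction at $z$ guarantees a \emph{second} copy of $(a,b)$ for which $\{a,b,z\}$ is a face; so a \Bc in $G_2$ immediately forces a double edge, and the whole configuration half of the lemma collapses into the multiple-edge half.

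For the multiple-edge half, the clean argument looks at the 2-cycle formed by the two copies of $(u,v)$, not at the triangle through $z$: it cannot bound an inner face (the construction never adds an edge that already exists in that position of the embedding); if it bounds no face at all then $\{u,v\}$ is a cutting pair of $G_2$ and hence of $G$, contradicting 3-connectivity; and if it is the outer face then at least one copy was added at a dummy-vertex $z$, and the other copy together with the crossing at $z$ yields a \Bc in $G$ (if that copy is original) or, together with the dummy $y$ that produced it, a \Wc in $G$. Two concrete problems with your version: your detour-through-$z'$ upgrade of a \Bc to a \Wc fails in general, because the far endpoints of the crossing at $z'$ lie \emph{outside} the region bounded by the four half-edges (the kite face at $z'$ sits between the kite-edge and $z'$), so you only get a \Wc in the outer-face branch where everything is enclosed; and your fallback ``a vertex and a dummy form a separating pair'' does not contradict 3-connectivity of $G$ (a dummy-vertex corresponds to a crossing, not a vertex of $G$), whereas the paper's cut is always the pair of original vertices $\{u,v\}$. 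Finally, your worry about a ``fourth, as-yet-unnamed configuration'' arising when several kite-edges conspire is exactly what dissolves once the two observations above are in place, so as written the proposal does not establish the lemma.
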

\begin{proof}
Every dummy-vertex in $G_2$ corresponds to a crossing in $G$.  Hence for any \Wc or \Tc in $G_2$, the corresponding crossings in $G$ would define a \Wc or \Tc in $G$ since we did not change the embedding or outer face.  Now assume that $G_2$ has a \Bc, say at dummy-vertex $z$ with neighbors $a,b,c,d$ and with edge $(a,b)$ such that triangle $\{a,b,z\}$ contains $c$ and $d$ inside and hence is not a face.
We also surrounded $z$ by a kite, which adds an edge $(a,b)$ for which $\{a,b,z\}$ is a face.  Hence there are two copies of edge $(a,b)$.  

It remains to show that there cannot be any double edge (whether created from a 
\Bc or otherwise).  Assume for contradiction that there is some double edge $(u,v)$.  It is not possible that the two copies of $(u,v)$ form an inner face, because $G$ was simple, and before adding an edge we always check whether such an edge exists already at this place in the planar embedding.  So the two copies either form the outer face or form no face at all.  In the latter case, $\{u,v\}$ is a cutting pair of the planar graph $G_2$, and (as one easily shows) therefore also of $G$.  This contradicts 3-connectivity.%

So multiple edges can only exist if the two copies form the outer face.  At least one of the copies did not exist in $G$, and hence was added due to some dummy-vertex $z$.  If the other copy already existed in $G$, then the crossing at $z$ together with this copy forms a \Bc in $G$.  If the other copy did not exist in $G$, then it was added due to some dummy-vertex $y$, and the crossings at $z$ and $y$ form a \Wc.  
\end{proof}

\item 
In the plane graph $G_2$ each dummy-vertex is surrounded by a kite by construction. Hence all faces of $G_2$ that are not triangles contain only original vertices.  Triangulate each such face arbitrarily; again this does not create a \Bc or multiple edges since in a planar graph this would imply a cutting pair.  
\end{itemize}
The resulting graph $G^+$ is the plane triangulated graph that we use to obtain the \RVD. By construction it has the following crucial properties:  It has no \Bc, \Wc or \Tc, every dummy-vertex has degree 4, and removing from $G^+$ all the edges inserted in the above procedure, and replacing each dummy-vertex with a crossing point, we obtain the original 1-plane graph $G$.   

\begin{figure}[t]
    \centering 
    \begin{minipage}[b]{.25\textwidth}
    	\centering
    	\subfloat[\label{fi:g}{$G$}]
	    {\includegraphics[scale=0.35,page=1]{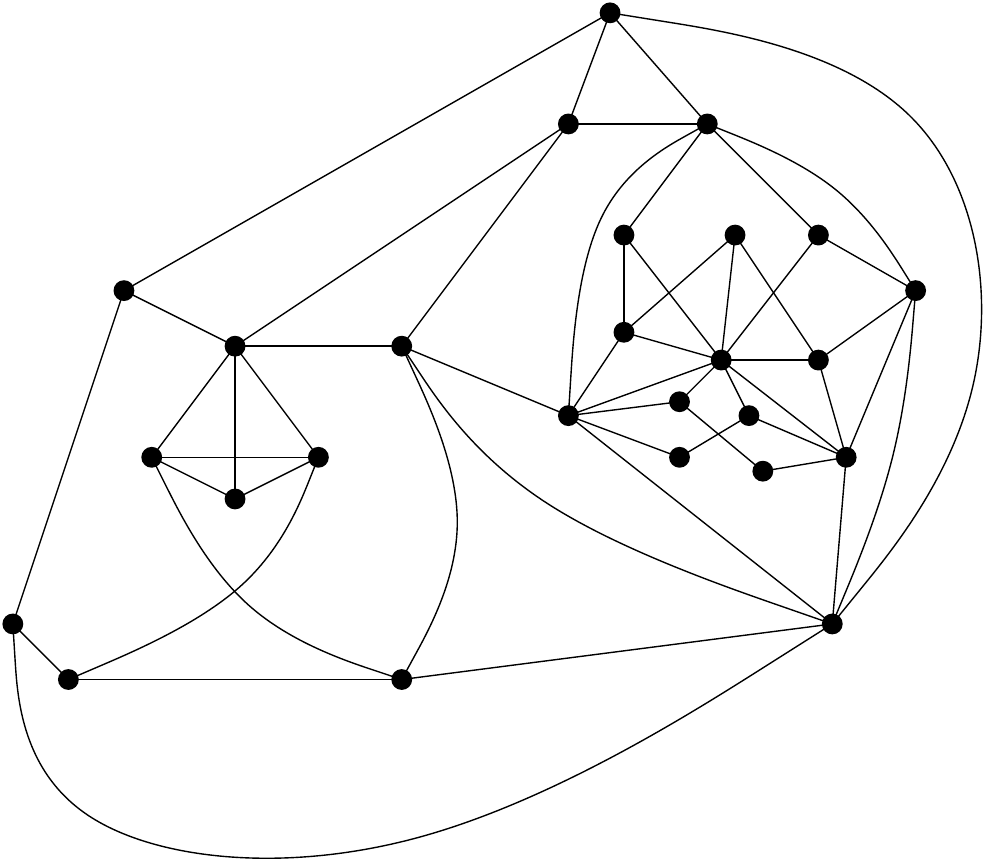}}
    \end{minipage} 
    \hfil
    \begin{minipage}[b]{.25\textwidth}
    	\centering
    	\subfloat[\label{fi:gplus}{$G^+$}]
	    {\includegraphics[scale=0.35,page=2]{3conn1plane}}
    \end{minipage}
    \hfil
    \begin{minipage}[b]{.4\textwidth}
    	\centering
    	\subfloat[\label{fi:4blocktree}{$\mathcal T$}]
	    {\includegraphics[scale=0.4,page=3]{3conn1plane}}
    \end{minipage}
    \caption{(a) A 1-plane graph $G$. (b) The triangulated planar graph $G^+$ obtained from $G$. Dummy-vertices are pink squares. (c) The 4-block tree $\mathcal T$ of $G^+$.}  
\end{figure}

\subsubsection{The 4-block tree and an outline}\label{ssse:4blocktree}

Algorithm \algo uses a decomposition of $G^+$ into its 4-connected components. 
Here, a 4-connected component $C$ of $G^+$ is a 4-connected triangulated subgraph of $G^+$. Let the \emph{4-block tree} $\mathcal T$ of $G^+$ be a tree defined as follows (see also~\cite{DBLP:journals/ijcga/Kant97} and Fig.~\ref{fi:gplus}).
 Every 4-connected component $C_\nu$ of $G^+$ is represented by a node $\nu$ in $\mathcal T$. There is an edge between two nodes $\nu$ and $\mu$ in $\mathcal T$, if there is a {\em separating triangle} (i.e., a triangle with vertices both inside and outside) that belongs to both $C_\nu$ and $C_\mu$. 
We root $\mathcal T$ at the node $\rho$ with the 4-connected component that contains the outer face.  Then for any parent $\nu$ and child $\mu$, 
the separating triangle common to $C_\nu$ and $C_\mu$ is an inner face in $C_\nu$ and the outer face of $C_\mu$. 
It is known that $\mathcal T$ can be computed in $O(n)$ time \cite{DBLP:journals/ijcga/Kant97}.

We now give an overview of the remaining steps of algorithm \algo. It first visits $\mathcal T$ top-down and determines for each 4-connected component $C_\nu$ a special edge, called {\em \se} of $C_\nu$. It also computes an \RVD $\gamma_\nu$ of $C_\nu$.   Next, the algorithm visits $\mathcal T$ bottom-up.  Let $\nu$ be a node of $\mathcal T$ with $h \geq 1$ children  $\mu_1, \dots, \mu_h$. Denote by $G_\nu$ the graph whose 4-block tree is the subtree of $\mathcal T$ rooted at $\nu$. The already computed  visibility representations $\Gamma_{\mu_1}, \dots, \Gamma_{\mu_h}$ of $G_{\mu_1}, \dots, G_{\mu_h}$ are suitably merged into $\gamma_\nu$ and an \RVD $\Gamma_\nu$ of $G_\nu$ is obtained. 

At the root we obtain an \RVD of $G_\rho=G^+$.  In order to turn this in an \RVD of $G$, we must un-planarize, i.e., replace every dummy-vertex by a crossing.  For this to be feasible, we need each dummy-vertex is drawn in a special way:  its rectangle needs to have exactly one incident edge on each side.  We call this the {\em \foursc}. Even then converting dummy-vertices into crossings is non-trivial, since the four edges on the four sides may not be suitably aligned.  To achieve such an alignment, we use the so-called ``\zigzag'' \cite{BLPS13}, with which we can transform parts of the drawing until edges are aligned and hence each dummy-vertex can be turned into a crossing point.  Since this introduces no other crossings we obtain an \RVD, and deleting the added edges gives the desired \RVD of $G$.

\subsubsection{Transversal pairs of bipolar orientations}\label{ssse:transversal}

We will need a few definitions. 
An \emph{internally 4-connected plane graph} (also known as \emph{irreducible triangulation}~\cite{DBLP:journals/dm/Fusy09}) is a plane graph where all inner faces are triangles and that has no separating triangle. Let $G$ be an internally 4-connected planar graph for which the outer face is a 4-cycle.
A \emph{transversal pair of bipolar orientations} of $G$ (also known as {\em regular edge labeling}~\cite{DBLP:journals/tcs/KantH97}, or simply \emph{transversal structure}~\cite{f-rsrpg-13}) assigns to each inner edge one of two colors\footnote{In all figures of the paper red edges appear in lighter gray than blue edges when printed b/w.}, say \emph{red} and \emph{blue}, and an orientation such that: $(i)$ For each inner vertex $v$, the clockwise order of edges around $v$ contains outgoing blue edges, then outgoing red edges, then incoming blue edges, then incoming red edges, and none of these sets is empty.
$(ii)$ Let $v_N$, $v_E$, $v_S$, $v_W$ be the four vertices on the outer face of $G$, in clockwise order. All inner edges incident to $v_N$/$v_E$/$v_S$/$v_W$ are blue incoming/red incoming/blue outgoing/red outgoing.  The edges on the outer face can be colored arbitrarily and are oriented so that both the red and the blue graph are acyclic.
See e.g. Fig.~\ref{fi:transversal}, where $v_N=v$, $v_E=x$, $v_S=w$, $v_W=u$.

Such a transversal pair of bipolar orientations
always exists and can be computed in linear time \cite{DBLP:journals/dm/Fusy09,DBLP:journals/tcs/KantH97}.
The same papers also show how to convert, in linear time,
a transversal pair of bipolar orientations into 
a {\em \rdr}.  This is a representation of the graph that assigns
interior-disjoint rectangles to vertices such that the union of these rectangles
is a rectangle without holes.  For each edge $(v,w)$, the rectangles of $v$ and $w$ share a positive-length part of their boundaries.  Moreover, if the edge is directed $v\rightarrow w$, then the rectangle of $v$ is left (below) $w$ if and only if the edge is red (blue).
See e.g. Fig.~\ref{fi:rd}.

\subsubsection{Surround-edges}

\begin{figure}[t]
    \centering 
    \begin{minipage}[b]{.24\textwidth}
    	\centering
    	\subfloat[\label{fi:gnuminus}{$C^-_\nu$}]
	    {\includegraphics[scale=0.55,page=1]{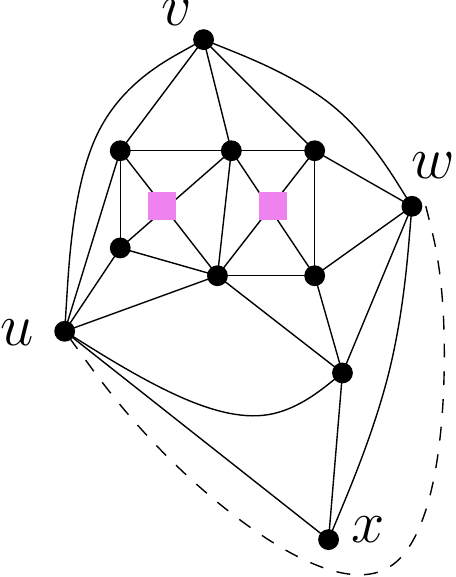}}
    \end{minipage} 
    \hfill
    \begin{minipage}[b]{.24\textwidth}
    	\centering
    	\subfloat[\label{fi:transversal}{transv.pair}]
	    {\includegraphics[scale=0.55,page=2]{construction_RD}}
    \end{minipage}
    \hfill
    \begin{minipage}[b]{.24\textwidth}
    	\centering
    	\subfloat[\label{fi:rd}{$\mathcal R$}]
	    {\includegraphics[scale=0.4,page=3]{construction_RD}}
    \end{minipage}
    \hfil
     \begin{minipage}[b]{.24\textwidth}
    	\centering
    	\subfloat[\label{fi:rnu}{$\mathcal R_\nu$}]
	    {\includegraphics[scale=0.4,page=5]{construction_RD}}
    \end{minipage} 
    \caption{(a) The graph $C^-_\nu$ obtained from $C_\nu$ in Fig.~\ref{fi:4blocktree}. (b) A transversal pair of bipolar orientations for $C_\nu^-$. (c) The corresponding \rdr. (d) Retracting the rectangles and adding the \se.}
\end{figure}

Consider a 4-connected component $C_\nu$ with outer face $\{u,v,w\}$ in clockwise order.   
We want to find an RVR of $C_\nu$.  To be able to merge later, it also must satisfy one property:  We pick a so-called \emph{\se} $e$ on the outer face beforehand, and we want an RVR such that all edges incident to an endpoint of $e$ are drawn horizontally.

The \se is defined as follows.
For any edge $e$ in triangle $\{u,v,w\}$,
let $x_e$ be the third vertex of the face of $G^+$ that is incident to $e$
and inside the triangle.  Vertex $x_e$ may be original or a dummy-vertex.  
We claim that not all three of $x_{uv},x_{vw}$ and $x_{wu}$ can be 
dummy-vertices.  Assume for contradiction that they are.  If they are all distinct, 
then they would form a \Tc.  If exactly two of them coincide, (say) $x_{uv}=x_{vw}$
is distinct from $x_{uw}$, then in $G$ the crossing at $x_{uw}$ would
form a \Bc with the edge $(u,w)$ involved in the crossing $x_{uv}=x_{vw}$.
Not all three can coincide, else $x_{uv}=x_{vw}=x_{wu}$ would be the only vertex
in the graph rooted at $\nu$ and have degree 3 and not be a dummy-vertex.
So there exists at least one edge $e$ of triangle $\{u,v,w\}$ where $x_e$ is an original vertex.  
Define the {\se} of $C_\nu$ to be this edge $e$.  For future reference, we note:

\begin{observation}
\label{obs:special_edge}
Let $e$ be the \se of $C_\nu$, and let $f_e$ be the face of $G^+$ incident to $e$ that is inside the triangle formed by the outer face of $C_\nu$.  Then the vertex of $f_e$ that is not an end of $e$ is an original vertex.
\end{observation}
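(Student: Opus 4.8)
The plan is to derive the statement almost immediately from the way the \se was chosen, since by construction $f_e$ is precisely the face whose third vertex is the quantity used to define the \se. Concretely: $G^+$ is a plane triangulation, so for the \se $e \in \{uv, vw, wu\}$ of $C_\nu$ the face $f_e$ of $G^+$ incident to $e$ and lying inside the triangle $\{u,v,w\}$ is a triangle, and the vertex of $f_e$ that is not an end of $e$ is exactly the vertex we earlier called $x_e$. The \se was defined to be an edge of $\{uv, vw, wu\}$ for which $x_e$ is an \emph{original} vertex (as opposed to a dummy-vertex). Hence the vertex of $f_e$ not incident to $e$ is $x_e$, which is original, which is what Observation~\ref{obs:special_edge} asserts.

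The only part that genuinely requires an argument is that the \se is well defined, i.e., that at least one of $x_{uv}, x_{vw}, x_{wu}$ is original; equivalently, that they cannot all be dummy-vertices. I would prove this by a short case analysis on how many of the three inner vertices $x_{uv}, x_{vw}, x_{wu}$ coincide. If all three are distinct dummy-vertices, they correspond in $G$ to three crossings, each incident to two of the edges $uv, vw, wu$ of the triangle, and together with $u, v, w$ they realize the incidence pattern of a \Tc with outer vertices $u, v, w$, contradicting that $G$ has no \Tc. If exactly two coincide, say $x_{uv} = x_{vw} \neq x_{wu}$, then the common dummy-vertex is a crossing incident to $v$ through both $uv$ and $vw$, which together with the crossing $x_{wu}$ and the edge $uw$ yields a \Bc in $G$, a contradiction. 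If all three coincide, that vertex is adjacent to all of $u, v, w$ and the three triangles it forms with them tile the interior of $\{u,v,w\}$, so it has degree $3$ in $G^+$, impossible for a dummy-vertex (which has degree $4$). In all cases we reach a contradiction, so some $x_e$ is original.

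I expect the main (and quite mild) obstacle to be the bookkeeping in the case analysis: verifying that the crossings and edges picked out above really match the defining region and incidence structure of the \Tc and the \Bc (in particular that the relevant inner vertices of those configurations lie inside the bounded region delimited by the triangle edges), and that the degenerate possibilities of two or three $x$'s coinciding are genuinely incompatible with $G$ being simple and $3$-connected. Since this is exactly the reasoning already outlined in the paragraph preceding the observation, filling in the details is routine, and Observation~\ref{obs:special_edge} then follows by unwinding the definition of $f_e$.
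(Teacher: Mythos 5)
Your proposal is correct and follows the paper's reasoning exactly: the paper also treats the observation as an immediate consequence of the definition of the \se, with the only substantive content being that not all three of $x_{uv},x_{vw},x_{wu}$ can be dummy-vertices, argued by the same three-way case analysis (all distinct gives a \Tc, exactly two coinciding gives a \Bc via the edge $(u,w)$ crossed at the common dummy-vertex, and all three coinciding forces a degree-3 vertex, impossible for a dummy-vertex). No meaningful difference from the paper's argument.
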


If more than one edge of triangle $\{u,v,w\}$ satisfies the condition of Observation~\ref{obs:special_edge}, then we chose the \se arbitrarily with one exception: If $\nu$ has a parent $\pi$, and the surround-edge of $C_\pi$ is one of the edges on $\{u,v,w\}$, then we use the same edge also as \se of $C_\nu$.  
Note that whether an edge can be used as \se depends on its incident face in $G^+$ (not $C_\pi$), and so the \se for $C_\pi$, if it exists in $C_\nu$, also qualifies as \se for $C_\nu$.  
For example, for graph $C_\phi$ in Fig.~\ref{fi:4blocktree}, edge $(a,b)$ must be chosen as \se, because both $(b,c)$ and $(a,c)$ have a dummy-vertex as third vertex on the face.  Since edge $(a,b)$ also belongs to $C_\omega$, it becomes the \se of $C_\omega$ as well.

\subsubsection{Drawing a 4-connected component}
Now we explain how to find an \RVD of $C_\nu$ such that the \se $e$ and all edges incident to its endpoints are drawn horizontally.  If $C_\nu$ is $K_4$, then such an \RVD
is easily obtained (see also Fig.~\ref{fi:foursc-1}), so we
assume that $C_\nu$ has at least 5 vertices.  This implies that any
inner vertex of $C_\nu$ has degree at least 4, else its neighbors
would form a separating triangle.
Remove the \se from $C_\nu$ to obtain graph $C^-_\nu$.
Since $C_\nu$ is 4-connected, $C_\nu^-$ is internally 4-connected.  We
now obtain an \RVD $\gamma_\nu$ of $C_\nu$ using a \rdr as follows:

\begin{lemma}
$C'_\nu$ has an \RVD such that any inner vertex of degree 4 satisfies the \foursc and all edges incident to an endpoint of the \se of $C_\nu$ are drawn horizontally.
\end{lemma}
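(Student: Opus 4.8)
The strategy is to start from a transversal pair of bipolar orientations of the internally 4-connected graph $C^-_\nu$ (the component $C_\nu$ minus its \se $e=(u,v)$), convert it to a \rdr $\mathcal R$ via the linear-time algorithm of Fusy / Kant--He, and then massage $\mathcal R$ into an \RVD with the two required extra properties: (i) every inner vertex of degree $4$ satisfies the \foursc (exactly one edge enters each of its four sides), and (ii) all edges incident to $u$ or $v$ are horizontal. The conversion from a \rdr to an \RVD is standard: shrink each rectangle slightly ("retract") so that rectangles touch no longer share boundary segments, and replace each former shared boundary by a thick line of sight in the corresponding direction; a red edge $x\to y$ becomes a horizontal visibility and a blue edge becomes a vertical one. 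So the real content is arranging the colors and the retraction so that (i) and (ii) hold.

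**Step 1: choosing the outer 4-cycle to make the \se-endpoints behave.** Since $C_\nu$ has outer triangle $\{u,v,w\}$ and we removed $e=(u,v)$, the graph $C^-_\nu$ has a 4-cycle on its outer face — namely $u, x_e, v, w$ where $x_e$ is the third vertex of the inner face of $G^+$ at $e$ (this is the vertex that, by Observation~\ref{obs:special_edge}, is an original vertex; I will use that fact in the merging argument, but here it just tells us the outer face is a genuine 4-cycle). I would assign the roles $v_N, v_E, v_S, v_W$ of the transversal structure so that $u$ and $v$ are the "red poles" — e.g. put $w=v_N$ (or $v_S$) so that $u=v_W, v=v_E$ are $v_W$ and $v_E$. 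Then by definition all inner edges at $v_W$ are red outgoing and all inner edges at $v_E$ are red incoming, hence in the \rdr every neighbor-rectangle of $u$ lies strictly to the right of $u$'s rectangle, and every neighbor-rectangle of $v$ lies strictly to the left; in the retracted \RVD this makes every edge at $u$ and at $v$ horizontal — giving (ii). One has to check that $u$, $v$, $w$, $x_e$ really can be taken as the four outer vertices in the cyclic order required (two opposite ones being $u,v$), which follows because $u$ and $v$ are non-adjacent in $C^-_\nu$.

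**Step 2: the \foursc for degree-4 inner vertices.** An inner vertex $z$ of degree exactly $4$ in a transversal structure has, by property $(i)$, exactly one outgoing blue, one outgoing red, one incoming blue, one incoming red edge (each of these groups is nonempty and they total $4$). In the \rdr that means $z$'s rectangle has exactly one neighbor on the top, one on the bottom, one on the left, one on the right. After retraction this is exactly the \foursc: one thick line of sight leaves each of the four sides of $z$'s rectangle. So (i) is automatic from the definition of a transversal structure, for *every* degree-4 inner vertex simultaneously — no case analysis is needed. (One should note the degree here is the degree in $C^-_\nu = C_\nu$ minus at most one edge, and that the \se $e$ is on the outer face, so an *inner* vertex of $C_\nu$ has the same degree in $C^-_\nu$; this is where "$C_\nu$ has $\ge 5$ vertices, so inner vertices have degree $\ge 4$" is used — it guarantees no degree-$3$ inner vertex whose retraction could collapse a side.)

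**Step 3: assembling and the $K_4$ base case.** Combine the above: the retracted \rdr with the color roles of Step 1 is an \RVD of $C^-_\nu$; re-inserting the \se $e=(u,v)$ as a horizontal visibility between $u$'s and $v$'s rectangles (possible since in the \rdr $u$ is the leftmost and $v$ the rightmost rectangle, or at least they are "visible" horizontally — one must verify a clear horizontal channel exists, which can be arranged by placing $w$ and $x_e$ as $v_N, v_S$) yields an \RVD of $C_\nu$ with properties (i) and (ii). The case $C_\nu=K_4$ is handled separately by the explicit drawing alluded to in Fig.~\ref{fi:foursc-1}.

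**Expected main obstacle.** Property (ii) is the delicate point: it is not enough that edges at $u$ and $v$ be horizontal *in the \rdr*; after retracting rectangles one must be sure the retraction does not turn a horizontal contact into something that needs a bend, and one must re-insert the removed \se $(u,v)$ as an honest horizontal line of sight without it being blocked by $w$'s or $x_e$'s rectangle. I expect the bulk of the work to be in verifying that choosing $w$ and $x_e$ as the two "blue poles" $v_N, v_S$ forces $u$'s rectangle to span the full left edge of the bounding rectangle and $v$'s to span the full right edge, so that $(u,v)$ can be routed horizontally just inside the top (or bottom) of the bounding box — and in checking the retraction can be done consistently with keeping all $u$- and $v$-incident contacts horizontal. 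The \foursc part (Step 2) and the acyclicity/linear-time claims are immediate from the cited results.
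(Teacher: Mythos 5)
Your proposal is correct and follows essentially the same route as the paper: take a transversal pair of bipolar orientations of the internally 4-connected $C^-_\nu$ with the two endpoints of the \se in the roles of $v_W$ and $v_E$ (so all their incident inner edges are red, hence horizontal), pass to the \rdr, retract each rectangle by a small $\varepsilon$ to obtain thick lines of sight, and observe that a degree-4 inner vertex has exactly one edge of each of the four colour/orientation types and therefore exactly one neighbour per side. Your Step~3 (re-inserting the \se and the $K_4$ base case) is handled in the paper just outside this lemma, but the substance matches.
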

\begin{proof}
Since the graph is internally 4-connected, it admits a transversal pair of bipolar orientations, and with it, a \rdr $\mathcal R$ where blue/red edges correspond to shared vertical/horizontal sides \cite{DBLP:journals/dm/Fusy09,DBLP:journals/tcs/KantH97}.
We choose $v_W$ and $v_E$ to be the ends of the \se, and color the outer edges red.
Fig.~\ref{fi:transversal} shows the transversal pair of bipolar orientations of the planarization of the graph in Fig.~\ref{fi:gnuminus}, and Fig.~\ref{fi:rd} gives the corresponding \rdr.   

Now retract the rectangles a bit as follows.  Let $\varepsilon>0$ be so small that all rectangles have width and height at least $\varepsilon$ and the length of any shared side is at least $\varepsilon$.  Replace each rectangle $[x,x']\times [y,y']$ by a slightly smaller rectangle $[x+\frac{\varepsilon}{3},x'-\frac{\varepsilon}{3}]\times [y+\frac{\varepsilon}{3},y'-\frac{\varepsilon}{3}]$.  By choice of $\varepsilon$ each rectangle still has positive width and height. For each edge of $G$, the shared side of length at least $\varepsilon$ is replaced by a region between its two endpoints, not containing other vertices, that has dimension $2\frac{\varepsilon}{3} \times \frac{\varepsilon}{3}$ or more. Hence for each edge we obtain a line of sight between the rectangles of its endpoints, and hence an \RVD with thick lines of sights.  Note that the line of sight is horizontal if and only if the shared rectangle-sides were vertical, so the horizontal/vertical lines of sights correspond to the red/blue edges of the transversal pair of bipolar orientations.  See also Fig.~\ref{fi:rnu}.

It remains to prove the claim on the \foursc.  Let $z$ be an inner vertex of degree 4, and let $R_z$ be the rectangle of $z$ in the rectangular dual.  By properties of the transversal pair of bipolar orientations, the clockwise order of edges around $z$ consists of incoming red edges, incoming blue edges, outgoing red edges, and outgoing blue edges.  None of these sets is empty for an inner vertex, and by $\deg(z)=4$ hence there exists exactly one edge of each kind.  In consequence, in the \rdr there is exactly one neighbor for each side of $R_z$, which shows the \foursc.
\end{proof}

This gives an \RVD of $C_\nu^-$, to which we need to add the \se.  
Say the outer face of $C_\nu^-$ was $u,v,w,x$, with \se $(u,w)$ and $x$ the 
vertex incident to the inner face at the \se.  
The construction draws all edges at $u$ and $w$ horizontally.   After rotation, assume that $u$ is left of $w$.
We can now extend both $u$ and $w$ downwards beyond the drawing, and insert here a horizontal segment for the \se.  This gives an \RVD $\mathcal R_\nu$ of $C_\nu$.  See also Fig.~\ref{fi:rnu}.  

\begin{observation}
\label{obs:4sidesBase}
In the \RVD of $C_\nu$, the \foursc holds for every dummy-vertex that is an inner vertex of $C_\nu$ and has degree 4 in $C_\nu$.
\end{observation}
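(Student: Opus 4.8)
The plan is to obtain the statement almost directly from the preceding lemma, by checking that re-inserting the \se at the very end of the construction disturbs none of the inner vertices of the drawing. Let $z$ be a dummy-vertex that is an inner vertex of $C_\nu$ with degree $4$ in $C_\nu$; note that this forces $C_\nu\neq K_4$ (whose vertices all have degree $3$), so the \RVD of $C_\nu$ is indeed the one built from a \rdr of $C_\nu^-$ and then augmented by the \se.

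First I would check that $z$ is also an inner vertex of $C_\nu^-$ of degree $4$. The \se lies on the outer triangle $\{u,v,w\}$ of $C_\nu$, so neither of its endpoints is an inner vertex of $C_\nu$; hence removing the \se does not change $\deg(z)$, which stays $4$. Removing the \se merges the outer face of $C_\nu$ with the inner face incident to the \se, so that the outer face of $C_\nu^-$ is the $4$-cycle on $u,v,w$ together with the vertex $x$ that is the third vertex of that inner face. By the choice of the \se (Observation~\ref{obs:special_edge}), $x$ is an original vertex, so $z\notin\{u,v,w,x\}$ and $z$ remains an inner vertex of $C_\nu^-$. The preceding lemma then guarantees that in the \RVD of $C_\nu^-$ the rectangle of $z$ has exactly one incident edge on each of its four sides, i.e.\ $z$ satisfies the \foursc there.

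Finally I would note that the \RVD $\mathcal R_\nu$ of $C_\nu$ is obtained from the \RVD of $C_\nu^-$ by a global rotation, by extending the rectangles of the two outer vertices $u$ and $w$ downwards beyond the rest of the drawing, and by placing the horizontal segment of the \se in the (previously empty) strip below the drawing. Since $z\notin\{u,w\}$ the rectangle of $z$ is unchanged, and since the new segment lies below the old drawing it meets neither that rectangle nor any of the four edges incident to $z$. Hence $z$ still satisfies the \foursc in $\mathcal R_\nu$, which is the claim.

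I do not expect a genuine obstacle here: the only point that needs care is the middle step, namely that removing or re-adding the \se alters neither the degree nor the inner/outer status of a dummy-vertex — and this is exactly what Observation~\ref{obs:special_edge} provides, since a dummy-vertex can never be the apex $x$ of the inner face at the \se.
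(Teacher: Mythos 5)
Your proof is correct and follows essentially the same route as the paper's: reduce to the preceding lemma by checking that $z$ has degree $4$ in $C_\nu^-$ and remains an inner vertex of $C_\nu^-$, with the key point in both arguments being that the only inner vertex of $C_\nu$ lost when passing to $C_\nu^-$ is the apex $x$ of the face at the \se, which by Observation~\ref{obs:special_edge} is an original vertex and hence cannot be $z$. Your additional final check that re-inserting the \se in the empty strip below the drawing does not disturb $z$'s rectangle or its incident edges is a point the paper leaves implicit, but it is a correct and harmless elaboration.
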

\begin{proof}
The claim holds vacuously if $C_\nu$ is $K_4$, so assume it is not.
Let $z$ be a dummy-vertex that is an inner vertex of $C_\nu$.  We know that $\deg(z)=4$ in $G^+$.  This implies that $\deg(z)\leq 4$ in $C_\nu$, and as argued above $\deg(z)\geq 4$ since $C_\nu\neq K_4$.  Deleting the special edge does not change the degree of $z$, therefore $\deg(z)=4$ even in $C_\nu'$.
So if $z$ is an inner vertex of $C_\nu^-$, then the construction for $C_\nu^-$ ensures the \foursc.  The only vertex that is an inner vertex of $C_\nu$, but not of $C_\nu^-$, is the vertex $x$ incident to the inner face at the surround-edge.  However, by Observation~\ref{obs:special_edge} vertex $x$ is not a dummy-vertex, so $z\neq x$. 
\end{proof}

\subsubsection{Zig-zag-slides}\label{ssse:zigzag}

\begin{figure}[ht]
\hspace{\fill}
\includegraphics[width=0.4\linewidth,page=1]{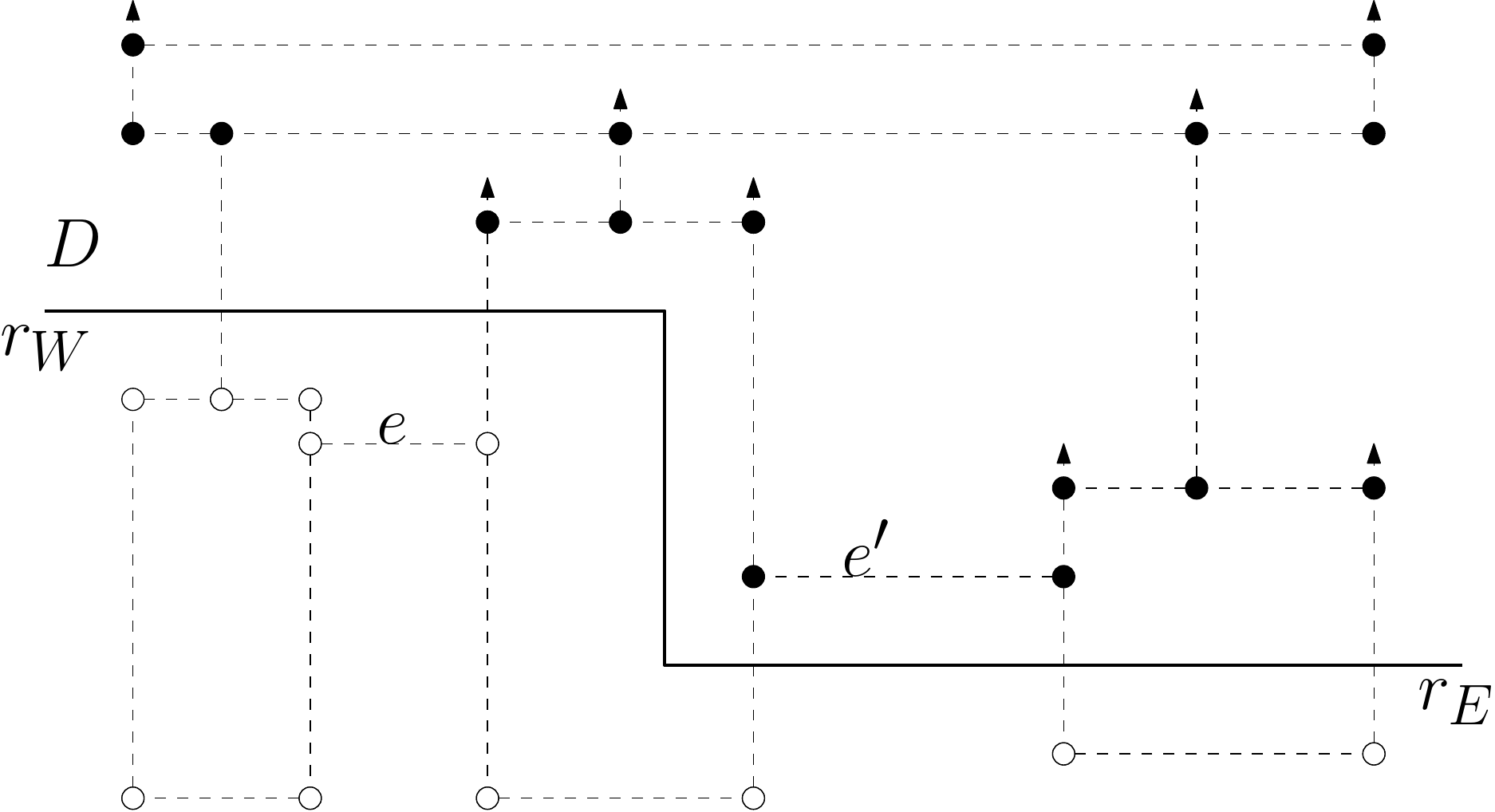}
\hspace{\fill}
\includegraphics[width=0.4\linewidth,page=2]{figures/slide.pdf}
\hspace{\fill}
\caption{A \zz (adapted from \cite{BLPS13}).  Black points move upward while white points remain stationary.  With this amount of sliding, edges $e$ and $e'$ become aligned.}
\label{fi:slide}
\end{figure}

Both for the merging step and for un-doing the planarization, we need a method of modifying the drawing such that some items are moved while others are stationary.  This can be achieved with what was called \zigzag in \cite{BLPS13} (we call it a {\em \zz} for short).   We briefly review this here.
Assume that $\Gamma$ is an \RVD  and we have a dividing curve $D$ as follows:  $D$ consists of some vertical line segment $s$ that intersects no horizontal element of the drawing (i.e., neither a horizontal edge nor a horizontal boundary of a rectangle).  At the top end of $s$, attach a leftward horizontal ray $r_W$.  At the bottom end of $s$, attach a rightward horizontal ray $r_E$.  No conditions are being put onto $r_W$ and $r_E$.  Define the {\em region above $D$} be all points that are in the $x$-range of $r_W$ and strictly above $r_W$, and all points in the $x$-range of $r_E$ and on or above $r_E$.

Now slide all points in the region above $D$ upwards by some amount $\delta>0$.  This maintains a visibility representation, after lengthening vertical edges and rectangle borders suitable, because any horizontal segment either stays stationary or moves in its entirety (recall that no horizontal segments cross $s$).  In particular, for any point $p$ above $r_E$ and any point $q$ above $r_W$ but with a larger $y$-coordinate, a shift by $y(q)-y(p)$ achieves that these points become horizontally aligned.   See Fig.~\ref{fi:slide}.    We will use such {\zz}s for this shape of $D$ as well as for the three other shapes achieved by flipping the picture horizontally and/or rotating $90^\circ$.

\subsubsection{Merging components}\label{ssse:merging}

Let $\mu$ be a child of $\nu$ in $\mathcal T$, and recall that we already obtained drawings $\gamma_\nu$ of $C_\nu$ and (recursively) $\gamma_\mu$ of the graph $G_\mu$ consisting of $C_\mu$ and the components at its descendants.  The common vertices $\{u,y,z\}$ of $G_\mu$ and $C_\nu$ form the outer face of $G_\mu$.  See Fig.~\ref{fi:4blocktree}.  
The outer face of $C_\mu^-$,  
consists of $\{u,y,z\}$ as well as a fourth vertex, say $x'$; 
after possible renaming of $\{u,y,z\}$ we may assume that the order around the outer face of $C_\mu^-$ is $u,y,x',z$.  In particular, drawing $\gamma_\mu$ contains (up to rotation) node $u$ on the top, node $y$ on the right, the \se $(y,z)$ and node $x'$ at the bottom, and node $z$ on the left.    Let $\gamma_\mu'$ be the drawing obtained from $\gamma_\mu$ by deleting $u,y,z$.  It suffices to merge $\gamma_{\mu}'$, since $u,y,z$ and the edges between them are represented in $\gamma_\nu$ already.
See Fig.~\ref{fi:frame}.

Triangle $\{u,y,z\}$ is an inner face of $C_\nu$.
The challenge is now to find a region inside where this face is drawn in $\gamma_\nu$ into which $\gamma_\mu'$  will ``fit''.   Put differently, we want to find a region within $\gamma_\nu$ inside the face $\{u,y,z\}$ that is (after possible rotation) below $u$, to the left of $y$ and to the right of $z$.  We call such a region, a \emph{feasible region} for $\gamma_\mu$ in $\gamma_\nu$.  As we have no control over how triangle $\{u,y,z\}$ is drawn in $\gamma_\nu$, the existence of a feasible region is non-trivial, and may in fact require modifying $\gamma_\nu$ slightly.  We start with a simple case:

\begin{lemma}\label{le:merging-1}
Assume that the \se of $C_\nu$ also belongs to $C_\mu$.  Then 
there exists a feasible region for $\gamma_\mu$ in $\gamma_\nu$.
\end{lemma}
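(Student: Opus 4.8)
The plan is to exploit the very hypothesis that distinguishes this lemma from the general case: the \se of $C_\nu$, call it $e=(u,w)$, is also an edge of $C_\mu$, hence (because $C_\mu$ shares exactly the triangle $\{u,y,z\}$ with $C_\nu$) the edge $e$ lies on the outer triangle $\{u,y,z\}$ of $G_\mu$. By our rule for choosing \se{}s (the paragraph following Observation~\ref{obs:special_edge}), whenever the \se of $C_\pi$ survives in a child, it is re-used; so $e$ is also the \se of $C_\mu$, and therefore also of $C_\mu^-$. This means that in both drawings $\gamma_\nu$ and $\gamma_\mu$ the edge $e$ and all edges incident to its two endpoints are drawn horizontally, which is exactly the alignment we need to glue the two pieces along $e$.

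First I would fix coordinates: rotate $\gamma_\nu$ so that, among the two endpoints of $e=(u,w)$, vertex $u$ is to the left of $w$ and the inner face of $C_\nu$ incident to $e$ lies below the segment $e$; this inner face is precisely the face $\{u,y,z\}$ that must receive $\gamma_\mu'$, and its third vertex (which is the ``$x$'' of the surround-edge construction for $C_\nu$, an original vertex by Observation~\ref{obs:special_edge}) is what we earlier renamed; after the matching renaming of $\{u,y,z\}$ this third vertex is $y$ or $z$ — say the outer order of $C_\mu^-$ is $u,y,x',z$ with $e=(u,\cdot)$. In the drawing $\gamma_\mu$ of $C_\mu$ produced by the previous subsection, $u$ sits on top, $z$ on the left, $y$ on the right, and the \se $e$ together with the bottom vertex $x'$ is drawn below everything via the downward extensions of $u$ and $w$ described there. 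So $\gamma_\mu'$ (delete $u,y,z$ from $\gamma_\mu$) is contained in a box that is below $u$'s rectangle, to the right of $z$'s rectangle, and to the left of $y$'s rectangle, with $e$ along its bottom.

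Next I would locate a feasible region inside $\gamma_\nu$. Because $e$ is drawn horizontally in $\gamma_\nu$ and the face $\{u,y,z\}$ is immediately below $e$, there is an open horizontal strip just below the segment $e$, bounded left and right by the rectangles of $u$ and $w$ (whose incident edges are all horizontal, hence do not block vertical lines of sight through this strip) and bounded below by whatever draws the rest of the face $\{u,y,z\}$. This strip is below $u$, and — after identifying which of $y,z$ is the left and which the right boundary — to the right of $z$ and to the left of $y$, i.e.\ a feasible region, \emph{provided} it is tall and wide enough to host $\gamma_\mu'$. To guarantee enough room I would apply a \zz to $\gamma_\nu$: take the dividing curve $D$ whose vertical part $s$ runs just to the left of $w$'s rectangle down through the face $\{u,y,z\}$, avoiding all horizontal elements (possible after an infinitesimal perturbation), with the horizontal rays of $D$ chosen so that the region ``above $D$'' is exactly the part of $\gamma_\nu$ on or above $e$ together with $w$ and everything reachable to its right; sliding that region up by a large $\delta$ enlarges the face $\{u,y,z\}$ into a rectangle-with-corners region of arbitrary height without disturbing the horizontality of edges at $u,w$. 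One further \zz in the horizontal direction widens it. Then I scale $\gamma_\mu'$ to fit, place it with its bottom edge $e$ coincident with the segment $e$ in $\gamma_\nu$, and verify that the horizontal alignment of the edges at $u$ and $w$ lets the two copies of the edges around $e$ merge into single lines of sight.

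The main obstacle I anticipate is not the existence of \emph{some} empty region below $e$ — that follows easily from horizontality of the edges at $u$ and $w$ — but rather checking that after inserting $\gamma_\mu'$ we still have a valid \RVD: the rectangles of $u,w$ and their incident horizontal edges in $\gamma_\nu$ must not collide with the rectangles and edges of $\gamma_\mu'$, and the lines of sight of $\gamma_\nu$ that pass vertically through the face $\{u,y,z\}$ (there should be none, since $\{u,y,z\}$ is a face, but one must argue this carefully using that it is a face of $C_\nu$ and that $\gamma_\nu$ respects the embedding) must be accounted for. I would handle this by arguing that, since $\{u,y,z\}$ is a face of $C_\nu$, nothing of $\gamma_\nu$ lies strictly inside the region bounded by the drawings of $u,w$ (and the third vertex) near $e$ except the boundary pieces of those three rectangles, so after the {\zz}-enlargement the interior of that region is genuinely free, and the scaled $\gamma_\mu'$ drops in cleanly with its boundary edges meeting the horizontal stubs at $u$ and $w$.
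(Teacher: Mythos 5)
Your proposal is correct and follows essentially the same route as the paper: since the shared \se is (by the selection rule) the \se of both components, it is drawn horizontally and bottommost with all edges at its endpoints horizontal, and the feasible region is found in the strip forming the thick line of sight of that edge inside the face $\{u,y,z\}$. The only difference is cosmetic: where you enlarge that strip with {\zz}s, the paper simply scales $\gamma_\mu'$ down to fit, so the extra slides are unnecessary (though harmless).
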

\begin{proof}
If the \se $e$ of $C_\nu$ belongs to $C_\mu$, then by our choice of
\se we also used $e$ as the \se for $C_\nu$.  That is, 
$(y,z)$ is the \se for both $C_\mu$ and $C_\nu$.  In
$\gamma_\nu$, $(y,z)$ was drawn (after possible rotation)
bottommost, with $u$ above it.
A feasible region is now found in the small strip above 
the drawing of edge $(y,z)$.  See Fig.~\ref{fi:merging-1}.
\end{proof}

\begin{figure}[t]
    \centering 
    \begin{minipage}[b]{.2\textwidth}
    	\centering
    	\subfloat[\label{fi:frame}{}]
	    {\includegraphics[scale=0.35,page=1]{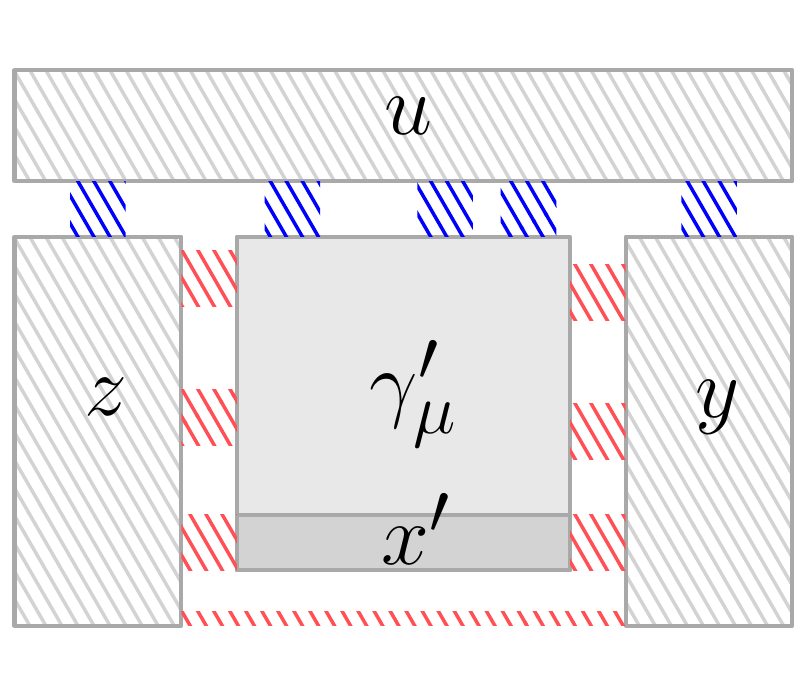}}
    \end{minipage}  
    \hfil
    \begin{minipage}[b]{.2\textwidth}
    	\centering
    	\subfloat[\label{fi:merging-1}{}]
	    {\includegraphics[scale=0.35,page=6]{merging}}
    \end{minipage}  
    \hfil
    \begin{minipage}[b]{.18\textwidth}
    	\centering
    	\subfloat[\label{fi:merging-a}{}]
	    {\includegraphics[scale=0.35,page=2]{merging}}
    \end{minipage}  
    \hfil
    \begin{minipage}[b]{.18\textwidth}
    	\centering
    	\subfloat[\label{fi:merging-b}{}]
	    {\includegraphics[scale=0.35,page=3]{merging}}
    \end{minipage}  
    \hfil
    \begin{minipage}[b]{.18\textwidth}
    	\centering
    	\subfloat[\label{fi:merging-bz}{}]
	    {\includegraphics[scale=0.35,page=4]{merging}}
    \end{minipage}  
    \caption{(a) Illustration of $\gamma_\mu$ and $\gamma_\mu'$.
(b) Illustration for the proof of Lemma~\ref{le:merging-1}.  
(c--e) Illustration for the proof of Lemma~\ref{le:feasible-region}. 
(c) $u$ has two vertical edges. (d) $y$ has two vertical edges, $u$ has a higher top than $z$. (e) Slide so that $u$ has a higher top than $z$. }
\end{figure}

We now show how to find a feasible region for $\gamma_\mu$ in $\gamma_\nu$. 
\begin{lemma}\label{le:feasible-region}
A feasible region for $\gamma_\mu$  in $\gamma_\nu$ always exists,
possibly after modifying $\gamma_\nu$ without changing angles or incidences. 
\end{lemma}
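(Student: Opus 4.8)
The plan is to examine how the triangle $T=\{u,y,z\}$ (the outer face of $G_\mu$, and an inner face of $C_\nu$) is drawn in $\gamma_\nu$, and to exhibit, inside the corresponding face of $\gamma_\nu$, a small axis-aligned rectangle $\mathcal F$ such that, after a suitable rotation of $\gamma_\mu'$, the rectangle $R_u$ lies directly above $\mathcal F$ and spans its width, while $R_y$ and $R_z$ lie directly to its two sides and span its height, with none of the three drawn edges of $T$ obstructing these lines of sight; placing $\gamma_\mu'$ inside $\mathcal F$ then realizes the merge. Lemma~\ref{le:merging-1} already disposes of the case in which the \se of $C_\nu$ is one of the edges of $T$, so I may assume it is not; then $T$ is not the face of $C_\nu$ incident to the \se, and hence $T$ is still an inner face of $C_\nu^-$. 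Recalling that $\gamma_\nu$ is obtained from a \rdr of $C_\nu^-$ by shrinking every rectangle by a tiny amount (the \se lies on the outer face and plays no role here), the three rectangles $R_u,R_y,R_z$ meet at a \emph{T-junction}: one of them, the \emph{bar}, has a side to which the other two, the \emph{legs}, are attached, and the two legs in turn share a perpendicular side, the \emph{stem}, with one another; the three drawn edges of $T$ are the two bar-to-leg edges and the stem edge.

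Next I would run a case analysis on which of $R_u,R_y,R_z$ is the bar, further split by the appropriate rotation of $\gamma_\mu'$. Only rotations, never reflections, are available here: $\gamma_\mu$ is built to respect the embedding of $G^+$, and the cyclic order in which $R_u,R_y,R_z$ appear around $T$ in $\gamma_\nu$ is the reverse of the cyclic order of $u,y,z$ around the outer face of $\gamma_\mu$, hence it matches some rotation of $\gamma_\mu'$. If $R_u$ is the bar (Fig.~\ref{fi:merging-a}), then $R_y$ and $R_z$ are the legs, one on each side of the stem below $R_u$; I take $\mathcal F$ to be a thin rectangle inside the vertical strip strictly between the two legs, above the stem edge and just below $R_u$. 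A sideways sight from $\mathcal F$ then reaches the correct leg (since $\mathcal F$ lies below both legs' top sides) and an upward sight reaches $R_u$ (since $\mathcal F$ lies strictly between the two bar-to-leg edges), so no modification of $\gamma_\nu$ is needed. If instead $R_y$ (symmetrically $R_z$) is the bar (Fig.~\ref{fi:merging-b}), then $R_u$ is one of the two legs, say the left one, and the rectangle $\mathcal F$ we want --- wedged above the right leg $R_z$, below the bar $R_y$, and on one side of a bar-to-leg edge --- can ``see'' $R_u$ on its left only if $R_u$ reaches at least as high as $R_z$, i.e., only if $u$ has a higher top than $z$. If this fails, I would perform a single \zz whose dividing curve runs through the face $T$ so that it isolates exactly $R_u$ together with its incident drawn edges from the rest of the face, while its vertical part crosses no horizontal element, and slide that portion up until the tops of $R_u$ and $R_z$ are ordered as required (Fig.~\ref{fi:merging-bz}); by the properties of a \zz, this keeps $\gamma_\nu$ a valid \RVD and, since nothing is inserted or deleted and only vertical segments are lengthened, it changes no angle at any vertex or bend and no vertex-, edge-, or face-incidence, exactly as the statement permits. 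After the slide, $\mathcal F$ is obtained as in the easy case.

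The step I expect to be the main obstacle is this bookkeeping around orientations and the corrective slide: one must verify, in each case, that the embedding-forced cyclic order of $R_u,R_y,R_z$ around $T$ is realizable by a rotation of $\gamma_\mu'$ alone, so that no embedding-reversing reflection is ever needed, and that the dividing curve of the corrective \zz can, in every configuration, be routed through the (octagonal) face $T$ so as to separate precisely $R_u$ and its incident drawn edges and to cross no horizontal element of $\gamma_\nu$ --- this is exactly what makes the slide legal and produces the missing vertical overlap of the two legs. The remaining checks --- that the chosen $\mathcal F$ lies inside $T$, and that the three families of lines of sight emanating from $\mathcal F$ are unobstructed --- are routine.
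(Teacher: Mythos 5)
Your proposal follows essentially the same route as the paper: reduce to the case where the surround-edge is not on the triangle (via Lemma~\ref{le:merging-1}), observe that the face $\{u,y,z\}$ inherits its shape from the transversal structure of $C_\nu^-$ (your ``T-junction'' with a bar and two legs is exactly the paper's case split on which vertex has both its triangle edges vertical), place the feasible region in the thick line of sight of the stem edge in the easy cases, and apply a \zz to raise the top of $u$ above that of $z$ in the remaining case. The argument and the corrective slide match the paper's proof, so no further comparison is needed.
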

\begin{proof}
We already argued this for the case where $\{u,y,z\}$ includes the surround-edge $e$ of $C_\nu$, so assume that this is not the case.  Then face $\{u,y,z\}$ of $C_\nu$ is also a face of $C'_\nu=C_\nu-e$. 
The drawing of $C'_\nu$ was obtained via the transversal pair of bipolar orientations of $C'_\nu$, with red/blue edges corresponding to horizontal/vertical edges.  It is well-known that any inner face in a transversal pair of bipolar orientations has at least one red and at least one blue edge.    So $\{u,y,z\}$
 is drawn with at least one horizontal and at least one vertical edge, say two edges are vertical and one is horizontal (the other case is similar). Now consider which vertex of $\{u,y,z\}$ is the one where both incident edges of triangle $\{u,y,z\}$ are vertical.

If vertex $u$ has two incident vertical edge $(u,y)$ and $(u,z)$, then $(y,z)$ is drawn horizontally.  We can then find a feasible region within the thick line of sight  of edge $(y,z)$.  See Fig~\ref{fi:merging-a}.

Now assume that $y$ has two incident vertical edge $(y,u)$ and $(y,z)$ (the case of vertex $z$ is symmetric).  For ease of description, assume that $y$ is above $u$ and $z$, and $u$ is to the left of $z$.  If the top side of $u$ is strictly above the top side of $z$, then it is easy to find a feasible region within the thick line of sight of edge $(y,z)$ into which we can merge $\gamma_\nu'$ after rotating it; see Fig.~\ref{fi:merging-b}.  If the top side of $u$ is below (or at the same height) of the top side of $z$, then we modify the drawing to create a suitable region using a \zz. The dividing curve $D$ is defined as follows. Insert a vertical edge just left of $z$, ranging from just above $(u,z)$ (hence below the top of $u$ since we have thick lines of visibility) to just above the top of $z$.  Expand it with a horizontal ray leftward from the lower end and a horizontal ray rightward from the upper end.   By sliding upward sufficiently far, we hence achieve that the top of $u$ is above the one of $z$, which allows us to find a feasible region as previously. See Fig~\ref{fi:merging-bz}.
\end{proof}

By Lemma~\ref{le:feasible-region}, for every child $\mu$ of $\nu$ we can find a feasible region into which we merge (after suitable scaling) the drawing $\gamma_\mu-\{u,y,z\}$. This gives the \RVD of $C_\nu\cup G_\mu$. It remains to prove the \foursc for the inner vertices of $C_\nu \cup G_\mu$.

\begin{observation}
\label{obs:4sidesStep1}
In the \RVD of $C_\nu\cup G_\mu$, the \foursc holds for every dummy-vertex that is an inner vertex of $C_\nu\cup G_\mu$.
\end{observation}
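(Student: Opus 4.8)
The plan is to argue by induction along the 4-block tree $\mathcal{T}$, the inductive hypothesis being that for each child $\mu$ of $\nu$ the recursively computed \RVD $\gamma_\mu$ of $G_\mu$ already satisfies the \foursc at every dummy-vertex that is an inner vertex of $G_\mu$. For leaves of $\mathcal{T}$ this is exactly Observation~\ref{obs:4sidesBase} (applied with $G_\mu=C_\mu$), and for internal nodes it will follow once the present observation, applied to all children in turn, has been established. So I fix a child $\mu$ of $\nu$ and a dummy-vertex $d$ that is an inner vertex of $C_\nu\cup G_\mu$, locate $d$, and then track it through the merging step.

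The first step is to observe that $d$ cannot lie on the triangle $\{u,y,z\}$ that $C_\nu$ and $C_\mu$ share. Indeed, in $G^+$ the vertex $d$ has degree $4$ and is surrounded by a kite, so any two neighbors of $d$ joined by an edge are consecutive in the rotation at $d$; hence every triangle through $d$ bounds a (kite) face of $G^+$ and is non-separating, whereas $\{u,y,z\}$ is separating. Therefore $d$ is either (i) an inner vertex of $C_\nu$, or (ii) a vertex of $G_\mu$ other than $u,y,z$; since the outer face of $G_\mu$ is $\{u,y,z\}$, in case (ii) $d$ is in fact an inner vertex of $G_\mu$. In case (i), since $d$ lies on no separating triangle, all four of its $G^+$-edges remain in $C_\nu$, so $d$ has degree $4$ in $C_\nu$ and Observation~\ref{obs:4sidesBase} already gives the \foursc at $d$ in $\gamma_\nu$; in case (ii), the inductive hypothesis gives the \foursc at $d$ in $\gamma_\mu$.

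It then remains to check that the merging of $\gamma_\mu'=\gamma_\mu-\{u,y,z\}$ into $\gamma_\nu$ does not destroy the \foursc at $d$. The merging does only three kinds of things: (a) it applies a similarity transformation (scaling, translation, possibly a $90^\circ$ rotation) to $\gamma_\mu'$ and places the result into a feasible region lying strictly inside the face $\{u,y,z\}$ of $\gamma_\nu$; (b) it re-draws each edge of $G_\mu$ joining an inner vertex of $G_\mu$ to one of $u,y,z$ as an extension, within that feasible region, of the corresponding edge-stub of $\gamma_\mu'$ until it meets the appropriate rectangle; and (c) it may first modify $\gamma_\nu$ by a \zz as in Lemma~\ref{le:feasible-region}. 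For a dummy-vertex of type (i), operations (a) and (b) occur inside a face not incident to the rectangle $R_d$ and leave $R_d$ and all edges at $d$ untouched, while a \zz only translates or one-directionally stretches rectangles and never reorders the incidences along a rectangle's boundary, so the \foursc at $d$ is inherited from $\gamma_\nu$. For a dummy-vertex of type (ii), the similarity in (a) preserves verbatim every incidence at $d$ and the side of $R_d$ used by each incident edge, (b) reinstates exactly the edges of $G_\mu$ at $d$ that were temporarily removed with $u,y,z$, on the same sides of $R_d$ as in $\gamma_\mu$, and (c) does not affect $\gamma_\mu'$; so the \foursc at $d$ is inherited from $\gamma_\mu$. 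In every case each of the four sides of $R_d$ ends up carrying exactly one incident edge, which proves the claim.

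The step I expect to be the main obstacle is the handling of the \zz in (c): Lemma~\ref{le:feasible-region} only asserts that the modification preserves ``angles and incidences'', and one must upgrade this to the stronger statement that a dummy-vertex's rectangle still has exactly one incident edge on each of its four sides. The clean way to do so is to recall that the vertical part of the dividing curve of a \zz meets no horizontal element of the drawing, hence crosses no rectangle; every rectangle is therefore either moved rigidly or stretched in a single direction, which can neither merge two sides of the rectangle nor move an edge from one side to an adjacent one.
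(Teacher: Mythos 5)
There is a genuine gap, and it is exactly at the step your proof leans on hardest: the claim that a dummy-vertex $d$ cannot lie on the separating triangle $\{u,y,z\}$ shared by $C_\nu$ and $C_\mu$. Your justification is that, since $d$ is surrounded by a kite, ``any two neighbors of $d$ joined by an edge are consecutive in the rotation at $d$,'' so every triangle through $d$ is a kite face. This is false. The kite guarantees edges between \emph{consecutive} neighbors, but the two endpoints of one of the two edges crossing at $d$ are \emph{opposite} neighbors in the rotation at $d$; call them $u$ and $y$. In $G^+$ a direct edge $(u,y)$ may well exist (added as a kite edge of a \emph{different} dummy-vertex, or during the triangulation step), and it is not a multiple edge of the path $u$--$d$--$y$ in the planarization, so the no-multi-edge lemma does not exclude it. If that direct edge has the third neighbor $x$ of $d$ on one side and the fourth neighbor $z$ on the other, then $\{u,y,d\}$ is a separating triangle containing the dummy-vertex $d$, and this does not create a \Bc (a \Bc would require both $x$ and $z$ on the same side). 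This is precisely the case the paper's proof is devoted to.

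In that case your trichotomy collapses: $d$ has degree $3$ in $C_\nu$ and degree $3$ in $C_\mu$, so neither Observation~\ref{obs:4sidesBase} nor the inductive hypothesis applies to it, and the fourth side of $R_d$ is only populated during the merge. Establishing the \foursc here requires real work: one shows that $\{u,y,z\}$ must then also be a separating triangle (or the outer face), so $C_\nu$ is a $K_4$ with $d$ as its inner vertex; that the \se of $C_\nu$ is forced to be $(u,y)$ and is inherited by $C_\mu$, so the feasible region comes from Lemma~\ref{le:merging-1}; and finally, by inspecting the fixed $K_4$ drawing and the merge, that the edge $(d,x)$ arriving from $\gamma_\mu'$ lands on the one free side of $R_d$. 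Your handling of the other two cases (inner vertex of $C_\nu$ or of $G_\mu$, plus the observation that the \zz and the similarity placing $\gamma_\mu'$ preserve side-incidences) matches the paper's one-line disposal of those cases, but the missing third case is the substance of the statement.
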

\begin{proof}
Let $v$ be an inner dummy-vertex of $C_\nu \cup G_\mu$.  If $v$ is an inner vertex of either $C_\nu$ or $G_\mu$, then the statement follows by Observation~\ref{obs:4sidesBase} (or induction on the height of $\mathcal T$). Else, $v$ must be on the outer face of $C_\mu$, and hence be part of a separating triangle $\{u,y,v\}$ which is an inner face of $C_\nu$. 

Vertex $v$ must have neighbors both inside and outside triangle $\{u,y,z\}$ by 3-connectivity of the triangulated plane graph $G^+$.  Since dummy-vertices have degree 4, it therefore has exactly one neighbor, say $x$, in $C_\mu-\{u,y,v\}$ and exactly one neighbor, say $z$, in $C_\nu-\{u,y,v\}$.  Further, these neighbors appear in clockwise order $z,u,x,y$ around $v$ and form a kite.  See Fig.~\ref{fi:dummy-sep-2}. It follows that $\{u,y,z\}$ is also a separating triangle (or the outer face of $G^+$) and that $\{v,z,u\}$ and $\{v,y,z\}$ are faces due to the kite.  Therefore $C_\nu$ is a $K_4$, with $\mu$ the only child of $\nu$.  Similarly, $\{u,x,y\}$ is either a separating triangle or an inner face of $G^+$, and $C_\nu$ is also a $K_4$ with at most one child (call it $\omega$ if it exists).

When choosing the \se of $C_\nu$, we cannot use $(u,z)$ or $(y,z)$, because the inner faces at them contain the dummy-vertex $v$. Therefore the \se of $C_\nu$ is $(y,u)$, and by our choice of \se therefore $(y,u)$ also becomes the surround-edge of $C_\mu$ and (if it exists) $C_\omega$.
So Lemma~\ref{le:merging-1} is used to find the feasible region for merging 
$G_\mu$ into $C_\nu$.
Also, the drawing of $C_\nu$ is fixed since it is $K_4$.
Inspecting the construction (see also Fig.~\ref{fi:foursc-2}) shows that during this merge $v$ obtains an incident edge on each side as desired.
\end{proof}

After repeating the process for the entire 4-block tree we obtain the desired \RVD $\Gamma^+$ of $G^+$.  After every merging step inner dummy-vertices of degree 4 satisfy the \foursc.   Once we have merged all subgraphs, all inner dummy-vertices have degree 4, and we get the following result:

\begin{observation}
\label{obs:4sidesStep}
In the \RVD of $G^+$, the \foursc holds for every dummy-vertex that is an inner vertex of $G^+$.
\end{observation}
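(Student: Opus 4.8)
The plan is to obtain Observation~\ref{obs:4sidesStep} as the root case of an induction over the 4-block tree $\mathcal T$, using Observation~\ref{obs:4sidesBase} for the base case and Observation~\ref{obs:4sidesStep1} for the inductive step. Precisely, I would prove by induction on the height of the subtree of $\mathcal T$ rooted at a node $\nu$ the statement: in the \RVD $\Gamma_\nu$ of $G_\nu$, the \foursc holds for every dummy-vertex that is an inner vertex of $G_\nu$. Applying this at the root $\rho$ then finishes the proof, because $G_\rho=G^+$ and every dummy-vertex of $G^+$ has degree $4$, so ``inner dummy-vertex of $G^+$'' and ``dummy-vertex that is an inner vertex of $G_\rho$'' coincide.

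For the base case, $\nu$ is a leaf and $G_\nu=C_\nu$. If $C_\nu$ is $K_4$, its unique inner vertex has all its neighbours inside $C_\nu$ (nothing is nested in a leaf component), hence degree $3$ in $G^+$, and is therefore an original vertex, so there is nothing to check. Otherwise every inner vertex of $C_\nu$ has degree at least $4$ (else its three neighbours would form a separating triangle, as already used when drawing a 4-connected component), and since a dummy-vertex has degree exactly $4$ in $G^+$ (and hence in $C_\nu$), every inner dummy-vertex of $C_\nu$ has degree $4$ in $C_\nu$; so Observation~\ref{obs:4sidesBase} yields the \foursc for it.

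For the inductive step, $\nu$ has children $\mu_1,\dots,\mu_h$, whose outer faces are $h$ pairwise distinct inner faces of $C_\nu$ (the separating triangles shared with the children). I would build $\Gamma_\nu$ by merging $\Gamma_{\mu_1},\dots,\Gamma_{\mu_h}$ into these faces one at a time; since the faces are pairwise disjoint, the merges do not interfere, and it suffices to invoke Observation~\ref{obs:4sidesStep1} after each merge (its proof is already phrased so that the ``$v$ inner to $C_\nu$ or to $G_\mu$'' cases dispatch to the base case or to the induction hypothesis, while a dummy-vertex sitting on a separating triangle shared by $C_\nu$ and a child is resolved via its kite). After the last merge, every inner dummy-vertex of $G_\nu$ satisfies the \foursc, completing the induction.

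The one point that genuinely needs care — and which is really absorbed into Observation~\ref{obs:4sidesStep1} rather than into the present argument — is the dummy-vertex $v$ that is an inner vertex of $G^+$ but never an inner vertex of a single $4$-connected component (it lies on a separating triangle): such a $v$ must be ``caught'' exactly at the merge of its parent component $C_\nu$ with the child whose outer face is that triangle. Beyond this, one only has to note that a vertex that is an inner vertex of $C_\nu\cup G_{\mu_i}$ remains an inner vertex as we pass up to $G_\nu$ and then to $G^+$, which holds because merging never changes the outer face. Everything else is routine bookkeeping over $\mathcal T$.
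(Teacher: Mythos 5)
Your proposal is correct and matches the paper's (implicit) argument: the paper states this observation as an immediate consequence of repeating the merging step over the whole 4-block tree, which is exactly the induction you spell out, with Observation~\ref{obs:4sidesBase} as the base case and Observation~\ref{obs:4sidesStep1} handling each merge (including the delicate case of a dummy-vertex lying on a separating triangle). Making the induction on the height of $\mathcal T$ explicit, and noting that all dummy-vertices have degree 4 and that inner vertices stay inner under merging, is a faithful formalization of what the paper leaves to the reader.
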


\begin{figure}[t]
    \centering 
    \hfil
    \begin{minipage}[b]{.24\textwidth}
    	\centering
    	\subfloat[\label{fi:dummy-sep-2}{}]
	    {\includegraphics[scale=0.4,page=1]{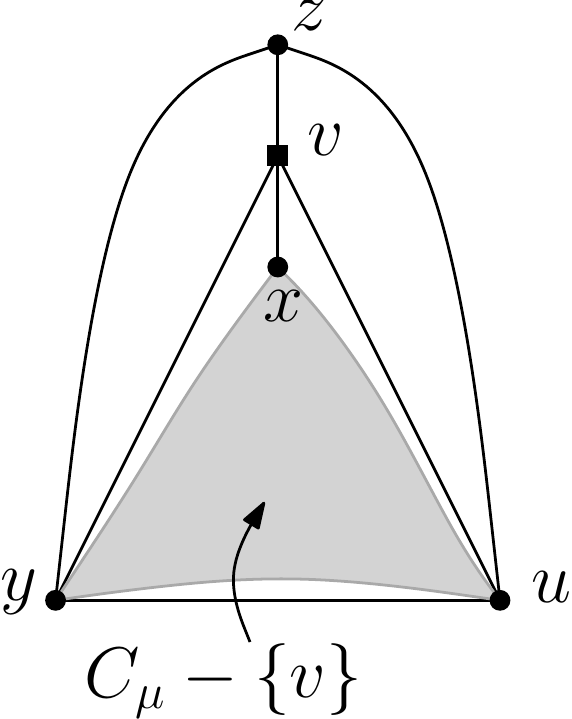}}
    \end{minipage}  
    \hfil
    \begin{minipage}[b]{.24\textwidth}
    	\centering
    	\subfloat[\label{fi:foursc-1}{}]
	    {\includegraphics[scale=0.35,page=1]{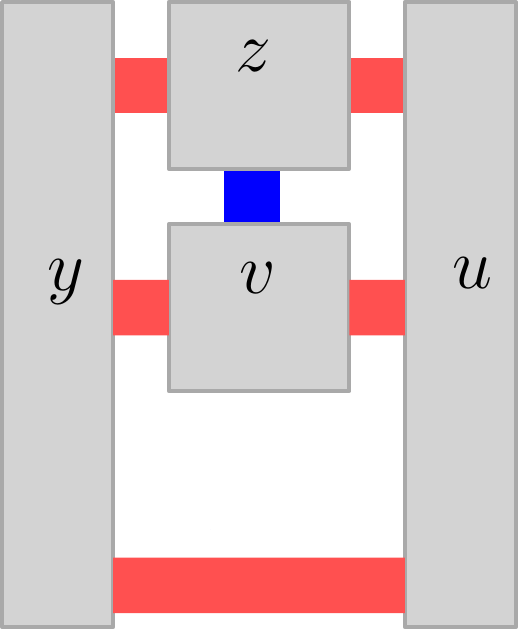}}
    \end{minipage}  
    \hfil
    \centering 
    \begin{minipage}[b]{.24\textwidth}
    	\centering
    	\subfloat[\label{fi:foursc-2}{}]
	    {\includegraphics[scale=0.35,page=2]{dummy-foursc}}
    \end{minipage}  
    \hfil
    \caption{Illustration for the proof of Observation~\ref{obs:4sidesStep1}.}
\end{figure}

\subsubsection{Undoing the planarization}\label{ssse:undoing}

Once all merging is completed, the drawing $\Gamma_\rho$ at the root gives
an \RVD of $G^+$.  To turn this into an \RVD of $G$, we must undo the planarization, hence remove the dummy-vertices and replace them with a crossing.  Since dummy-vertices are surrounded by a kite, none of them is on the outer face of $G^+$.  So any dummy-vertex $z$ is an inner vertex of $G^+$ and by Observation~\ref{obs:4sidesStep} its rectangle $R_z$ satisfies the \foursc.  Let $e_W,e_N,e_E$ and $e_S$ be the four edges incident at the west/north/east and south side of $R_z$.  If $e_W$ and $e_E$ have the same $y$-coordinate, and $e_N$ and $e_S$ have the same $x$-coordinate, then we can simply remove $R_z$, extend the edges, and obtain the desired crossing.

So assume that $e_W$ and $e_E$ have different $y$-coordinates, with (say) the $y$-coordinate $y_W$ of $e_W$ larger than the $y$-coordinate $y_E$ of $e_E$.  Construct a dividing curve $D$ by starting with a vertical line segment inside $R_z$, ranging from $y$-coordinate $y_W+\varepsilon$ to $y$-coordinate $y_E-\varepsilon$, where $\varepsilon>0$ is chosen small enough that the segment is inside $R_z$. Attach a leftward horizontal ray at the top and a rightward horizontal ray at the bottom.  Now apply a \zz of length $y_W-y_E$.  This moves $e_E$ upward while keeping $e_W$ stationary, and hence aligns the two edges.  See also Fig.~\ref{fi:slide}, where this is illustrated for $e_W=e$ and $e_E=e'$.
With another \zz (with the dividing curve using a horizontal line segment and vertical rays), we similarly can align $e_N$ and $e_S$, if needed.  After this, remove the dummy-vertex to obtain the crossing.  Repeating this at all dummy-vertices, and finally deleting all added edges, gives the \RVD of $G$, as no crossings are created except where dummy-vertices were removed.
This ends the description of algorithm \algo.

\subsubsection{Graphs that are not 3-connected}\label{ssse:general}

We now explain the (minor) modifications needed to make algorithm
\algo work even for a 1-plane graph $G$ that is not 3-connected.  We
used 3-connectivity only in Section~\ref{ssse:triangulate} where
we planarized $G$ and then triangulated the result to obtain $G^+$.
If $G$ is 3-connected, then $G^+$ is a simple graph.  If $G$ is not
3-connected, then $G^+$ may have multiple edges.  We now explain
how to remove these multiple edges by flipping some of them suitably
in such a way that the resulting graph still has an \RVD that satisfies
the \foursc, 
although some dummy-vertex may have degree greater than four.

So assume $(v,w)$ is a multiple edge in the planar graph $G^+$.
At most one copy of $(v,w)$ may have existed in $G$ as well; we
call this the {\em original copy} and all other copies {\em added
copies}.  For any added copy $e'$, let the two vertices {\em facing}
$e'$ be the two vertices $y,z$ that are not endpoints of $e'$ but are on
a face incident to $e'$.

If for any added copy of $(v,w)$ both vertices facing it are original,
then we can simply {\em flip} the edge, i.e., remove it and connect these
two vertices instead.  This does not create a new multiple edge.
In particular therefore we may assume that for any added copy at
least one facing vertex is a dummy-vertex.

Enumerate the copies of $(v,w)$ as $e_1,\dots,e_s$ in clockwise order
around $v$ in such a way that the 2-cycle $e_1,e_s$ encloses all other
copies.   We say that $e_i$ has a {\em dummy on the left/right side}
if the vertex that faces $e_i$ and is before/after $e_i$ in
the order around $v$ is a dummy-vertex. 
Observe that if $e_i$ has a dummy on the left side, then no $e_j$
for $j<i$ can have a dummy on the right side, else the two corresponding
crossings would form a \Wc.  Also, $e_j$ cannot be the original copy, else 
it and the crossing at $e_i$ would form a \Bc.  So  if $\ell$ is the
maximal index where $e_\ell$ has a dummy on the left side, then
$e_1,\dots,e_{\ell-1}$ all are
added copies and have their crossing on the left side.  (We set $\ell=0$
if there is no such copy.)
Likewise if $r$ is the minimal index such that $e_r$ has a dummy on the
right side, then $e_{r+1},\dots,e_s$ all are
added copies and have their crossing on the right side.
(We set $r=s+1$ if there is no such copy.)
See Fig.~\ref{fig:multiple_edge}.

\begin{figure}[ht]
\hspace*{\fill}
\includegraphics[width=0.3\linewidth,page=1]{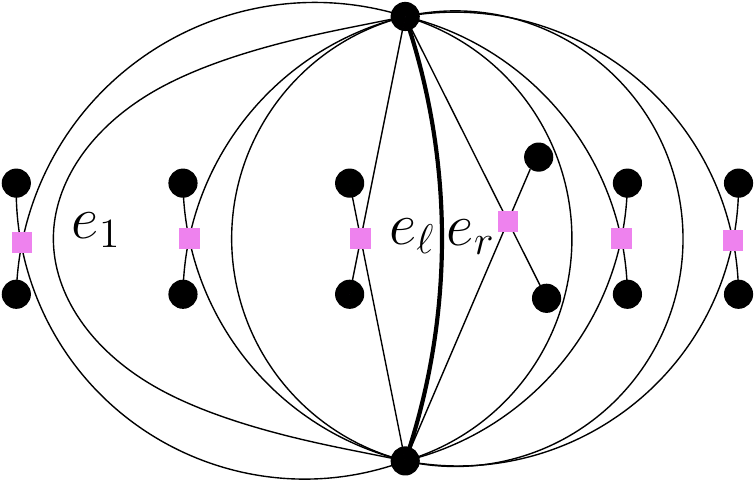}
\hspace*{\fill}
\includegraphics[width=0.3\linewidth,page=2]{figures/multiedge.pdf}
\hspace*{\fill}
\includegraphics[width=0.3\linewidth,page=3]{figures/multiedge.pdf}
\hspace*{\fill}
\caption{A set of multi-edges.  We show $\ell=r$, $\ell=r-1$ and $\ell=r-2$.
The original edge (or the edge declared to be original) is bold.}
\label{fig:multiple_edge}
\end{figure}

Note that $\ell\leq r$ by the above discussion.
We consider possibilities for $r-\ell$:
\begin{itemize}
\item We may have $\ell=r$.  In this case, $e_\ell=e_r$ is the only edge that could be the original copy.  If it was not original, then we add $e_\ell$ to $G$; note that this does not create any \Bc due to our careful choice of $e_\ell$.  We can henceforth assume that the original edge existed.
\item We may have $r=\ell+1$.  In this case, one of $e_\ell$ and $e_r$ may have been the original copy.  If neither of them was original, then as in the previous case we add $e_\ell$ to $G$.
\item We may have $r=\ell+2$.  In this case, edge $e_{\ell+1}$ must have been the original copy, since it has no facing incident dummy-vertex but was not removed by earlier flips.
\item We may not have $r>\ell+2$, else $e_{\ell+1}$ and $e_{\ell+2}$ would both be original, contradicting simplicity of $G$.
\end{itemize}

Let $G'$ be the graph obtained by possibly adding $e_\ell$ to $G$ if we are in one of the first two cases above.  We know that $G'$ has no \Bc,   and it has no \Wc or \Tc either since these cannot be created by adding non-crossed edges.  The very same graph $G^+$ also works as planarized triangulated version of $G'$ (the only thing that has changed is that for every multiple edge there now exists an original copy).  With this, we now have the following properties:
\begin{observation}
Let $(u,v)$ be a multiple edge, and let $e$ be the original copy of $(u,v)$.  Then for every other  copy $e'$ of $(u,v)$, exactly one of the two vertex $y,z$ facing $e'$ is a dummy-vertex.  Moreover, this dummy-vertex forms a separating triangle when combined with $e$, and edge $e'$ is inside this separating triangle.
\end{observation}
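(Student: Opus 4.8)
The plan is to establish the three claims in turn, relying on the case analysis of $r-\ell$ just carried out and, decisively, on the assumption that $G'$ contains no \Bc.

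First I would prove that exactly one of the two vertices facing $e'$ is a dummy-vertex. At least one of them is: as $e'$ is an added copy, had both facing vertices been original we would already have flipped $e'$, so $e'$ has a dummy either on its left or on its right side. Not both can be dummy-vertices, however: if they were, then $e'$ would have a dummy on \emph{both} sides, whence its index is at most $\ell$ (maximality of $\ell$) and at least $r$ (minimality of $r$); since $\ell\le r$ this forces the index of $e'$ to equal $\ell=r$, i.e.\ we are in the first of the three cases, where $e_\ell=e_r$ is precisely the original copy --- contradicting that $e'$ is a copy distinct from $e$. Write $z$ for the unique facing dummy-vertex and $y$ for the other (original) facing vertex of $e'$.

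Now to the triangle. Since $\{u,v,z\}$ is a face of $G^+$ incident to $e'$ and $z$ is a dummy-vertex surrounded by a kite, this face is forced to be the corner of $z$'s kite delimited by the two spokes $(z,u)$ and $(z,v)$; hence $u$ and $v$ are consecutive neighbors of $z$, the edges $(u,z)$ and $(v,z)$ exist, and, writing $c,d$ for the two remaining neighbors of $z$ in the rotation $u,v,c,d$ around $z$, the dummy $z$ represents the crossing of the edges $(u,c)$ and $(v,d)$ in $G'$. Let $T$ be the triangle formed by the original copy $e$ together with $(u,z)$ and $(v,z)$; these are three distinct edges because $e\neq e'$ and $z\notin\{u,v\}$. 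The kite corner $F:=\{u,v,z\}$, being bounded by $e'$, $(u,z)$, $(v,z)$, shares two edges with $T$ and so lies in one of the two regions cut out by the Jordan curve $T$; at the corner $z$, the angular sector \emph{not} containing $F$ contains the two spokes $(z,c)$ and $(z,d)$, and as edges of the plane graph $G^+$ these cannot cross $T$, so they --- and therefore the vertices $c,d$ --- lie in the opposite region to $F$. Thus $e'$ (which is on the side of $T$ containing $F$, since it bounds $F$ and is not an edge of $T$) is separated by $T$ from $c$ and $d$; both regions being nonempty, $T$ is a separating triangle. Finally, were $e'$ on the unbounded side of $T$, then $c$ and $d$ would lie inside $\{u,v,z\}$, meaning that in $G'$ the crossing $z$ of $(u,c)$ and $(v,d)$, together with the edge $(u,v)=e$, forms a \Bc --- impossible. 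Hence $e'$ lies inside $T$.

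The crux is the very last step: combinatorial bookkeeping about where $z$'s kite sits among the parallel copies of $(u,v)$ only yields that $e'$ is on \emph{some} side of $T$, and identifying it as the bounded side is precisely what the reduction to a \Bc accomplishes; without that reduction one would have to track the cyclic order of all copies of $(u,v)$ around $u$ and $v$ and the position of the outer face, which is far messier.
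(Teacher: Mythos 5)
Your proof is correct and follows essentially the route the paper intends: the observation is stated there without a separate proof, as a summary of the preceding discussion, and the ingredients you invoke (the flip rule for copies with two original facing vertices, the $\ell\le r$ case analysis forcing a both-sided dummy to coincide with the original copy, and the absence of a \Bc in $G'$ to place $c,d$ outside the triangle $T$) are exactly the ones that discussion sets up. The only point worth making explicit is that the bounded side of $T$ genuinely contains a vertex --- e.g.\ the original facing vertex $y$ of $e'$, since the bigon between $e$ and $e'$ cannot be a face of the triangulation $G^+$ --- so that $T$ is a separating triangle in the paper's sense of having vertices on both sides.
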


Let $G^{+'}$ be the graph obtained from $G^+$ by flipping, for every multiple edge, all added copies.   This cannot create new multiple edges by planarity.  So if we can find an \RVD of $G^{+'}$, then this also gives one of the planarization of $G$ after deleting extra edges.  However, to undo the planarization we must ensure that the \foursc holds, and this is non-trivial since flipping an added copy of a multiple edge results in dummy-vertices whose degree exceeds 4.

We claim that for any such dummy-vertex, the \foursc nevertheless holds automatically when drawing $G^{+'}$ with algorithm \algo.  To see this, let $e'$ be an added copy of an original edge $e$, let $z$ be the dummy-vertex and $y$ be the original vertex that face $e'$.  When flipping edge $e'$, we insert a new edge $(z,y)$.   This edge increases the degree at $z$, but since $y$ is not a dummy-vertex, this is the only way in which degrees at dummy-vertices can be increased.  In particular, $(y,z)$ is inside the separating triangle formed by $z$ and the original edge $e$  and belongs to some child-component $C_\mu$ of this separating triangle. The parent-component $C_\nu$ contains the dummy-vertex $z$ and the four neighbours at the endpoints of the crossing that defined $z$, but it does not contain edge $(y,z)$, or any other edge that may have been added at $z$ due to flipping.  Therefore when creating the \RVD of $C_\nu$, the \foursc is satisfied for dummy-vertex $z$, using (parts of) edges that existed in $G$.  Merging child-components may add more edges at $z$, but no merging step changes edge-directions, and so the four incident edges of $z$ from $G$ remain on the four sides.  Thus if we delete all added edges in the \RVD of $G^{+'}$, then we obtain an \RVD of the planarization of $G$ that satisfies the \foursc at all dummy-vertices, and can hence undo the planarization exactly as before.

\begin{lemma}\label{le:sufficiency}
Let $G$ be a 1-plane graph $G$ and with no \B-/\W-/\Tc as a subgraph. Then algorithm \algo computes an \RVD of $G$.
\end{lemma}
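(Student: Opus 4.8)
\emph{Proof sketch.}
The statement is the wrap-up that assembles the machinery of Sections~\ref{ssse:triangulate}--\ref{ssse:general}, so the plan is to check that each stage of \algo is well-defined and that one invariant survives all of them: every dummy-vertex satisfies the \foursc. First I would dispose of the case in which $G$ is $3$-connected. Section~\ref{ssse:triangulate} produces a plane triangulated graph $G^+$ with no \B-/\W-/\Tc in which every dummy-vertex has degree $4$ and from which $G$ is recovered by deleting the added edges and replacing dummy-vertices by crossings. Building the $4$-block tree $\mathcal T$ (Section~\ref{ssse:4blocktree}) decomposes $G^+$ along its separating triangles; the key point to verify here is that each $4$-connected component $C_\nu$ has a usable \se, which is exactly where forbidding the \Tc (and, when two of the candidate third vertices coincide, the \Bc) is needed, so the top-down pass can pick a \se for every node, agreeing with the parent whenever the shared edge qualifies.

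Then I would run the drawing pass: each $C_\nu$, with its \se removed, is internally $4$-connected (or is $K_4$), so by the lemma on drawing a $4$-connected component it has an \RVD in which the \se and the edges at its endpoints are horizontal and every inner degree-$4$ vertex satisfies the \foursc; re-inserting the \se below the picture yields $\gamma_\nu$, and Observation~\ref{obs:4sidesBase} gives the \foursc for every inner dummy-vertex of $C_\nu$ (such a vertex has degree $4$ in $C_\nu$ and, by Observation~\ref{obs:special_edge}, is not the extra vertex at the \se). The bottom-up pass merges, for each child $\mu$ of $\nu$, the drawing $\gamma_\mu$ minus its three outer vertices into a feasible region inside the corresponding inner face of $\gamma_\nu$; Lemmas~\ref{le:merging-1} and~\ref{le:feasible-region} supply such a region (after a \zz if necessary, which changes neither angles nor incidences), and Observations~\ref{obs:4sidesStep1} and~\ref{obs:4sidesStep} propagate the \foursc, so at the root we obtain an \RVD $\Gamma^+$ of $G^+$ in which every inner dummy-vertex satisfies the \foursc. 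Finally I would invoke Section~\ref{ssse:undoing}: at each dummy-vertex, at most two {\zz}s align the two opposite pairs of incident edges; since a \zz preserves being an \RVD and introduces no crossing, deleting the dummy-vertex produces a genuine crossing, and removing the edges added in Section~\ref{ssse:triangulate} leaves exactly an \RVD of $G$.

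For the general case I would rely on Section~\ref{ssse:general} to reduce to the $3$-connected argument: $G^+$ may now be a multigraph, but flipping those added copies of a multiple edge whose two facing vertices are both original, and then adding one edge $e_\ell$ if needed, gives a graph $G'$ with no \B-/\W-/\Tc (none can be created by adding uncrossed edges) that has an original copy of every multiple edge and is still planarized-triangulated by the same $G^+$. Flipping all remaining added copies yields a simple $G^{+'}$ on which \algo runs verbatim, the only novelty being that a flip may raise the degree of a dummy-vertex $z$ above $4$ by an edge $(z,y)$ with $y$ original. The thing to check is that this edge sits inside the separating triangle formed by $z$ and the original copy $e$, hence in a child-component of that triangle, so the parent component $C_\nu$ that draws $z$ still contains only the four crossing-edges at $z$ and realizes the \foursc there using edges of $G$; since merging never changes edge directions, those four edges stay on the four sides of $R_z$. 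Deleting all added edges therefore restores the \foursc at every dummy-vertex, and the un-planarization finishes as before.

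The hard part is not any single step but the bookkeeping that keeps the \foursc alive across triangulation, the $4$-block decomposition, and all merges: it is precisely the forbidden-configuration hypotheses (via the choice of \se) together with the controlled merge regions that pin each dummy-vertex's four original edges to its four sides, and in the non-$3$-connected case the flip analysis is what prevents the degree inflation from destroying this invariant.
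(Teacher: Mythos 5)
Your proposal is correct and follows essentially the same route as the paper: the paper's proof of Lemma~\ref{le:sufficiency} is precisely the assembly of Sections~\ref{ssse:triangulate}--\ref{ssse:general}, with the \foursc invariant at dummy-vertices carried through the construction of $G^+$, the top-down choice of {\se}s, the per-component drawings, the bottom-up merges, and the final un-planarization, plus the edge-flipping analysis for the non-3-connected case. Nothing to add.
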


Corollary~\ref{co:nobwt} and Lemma~\ref{le:sufficiency} imply Theorem~\ref{th:characterization}. The next two sections prove Theorem~\ref{th:test}.

\subsection{Area and run-time considerations}

Algorithm \algo does not obviously have linear run-time, since we repeatedly change the entire drawing, especially when applying {\zz}s.  Also, coordinates may get very small when merging, resulting (after re-scaling to be integral) in a very large area.    However, both of these become non-issues if we store {\RVD}s implicitly, using the orthogonal representations of Observation~\ref{obs:convert}.  
Using this approach, no \zz needs to be performed, since these only change edge lengths  (but not angles or adjacencies) and hence give rise to exactly the same orthogonal representation.  Therefore,
all that is required to merge a subgraph at a separating triangle $\{u,y,z\}$ is to find, for each of $w\in\{u,y,z\}$, the appropriate segment along the boundary of the rectangle of $w$ in $C_\mu$, and to merge here the list of adjacencies and angles that $w$ has in the PSLG of $C_\nu$.    This takes constant time for $w$, hence constant time for merging at one separating triangle, and hence linear time overall.  All other parts of the algorithm (such as finding separating triangles, computing 4-connected components, finding a transversal pair of bipolar orientations, and finding the \rdr) take linear time as well.  

The final conversion of the PSLG into an \RVD also takes linear time, and gives linear integral coordinates. Here ``linear'' 
measures the size of the PSLG, which in our case is proportional to the
number of vertices, edges, and crossings.  For 1-planar graphs this
is $O(n)$ since 1-planar graphs have at most $4n-8$ edges and at most $n-2$
crossings \cite{DBLP:journals/combinatorics/CzapH13}.
Therefore
the final \RVD has linear coordinates and the area is $O(n^2)$.  We conclude:

\begin{lemma}\label{le:algo}
Let $G$ be a 1-plane graph $G$ with $n$ vertices and with no \B-/\W-/\Tc as a subgraph. Then there exists an $O(n)$-time algorithm that computes an \RVD $\Gamma$ of $G$ that has $O(n^2)$ area.
\end{lemma}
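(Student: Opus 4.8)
The plan is to combine the correctness already established in Lemma~\ref{le:sufficiency} with a careful accounting of how algorithm \algo is implemented. By Lemma~\ref{le:sufficiency}, \algo already outputs a valid \RVD of $G$, so the only thing left is to show it can be run in $O(n)$ time and made to produce a drawing of $O(n^2)$ area. The obstacle to a naive analysis is that \algo rewrites the whole drawing repeatedly — both when applying a \zz and when rescaling a subdrawing to fit into a feasible region — so directly tracking coordinates could blow up time and area.

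The key idea is to never maintain explicit coordinates: following Observation~\ref{obs:convert}, every intermediate \RVD is stored as an orthogonal representation (equivalently, its PSLG together with the angle- and bend-data, all bend-counts being $0$). With this encoding a \zz becomes a no-op, since it alters only edge lengths and rectangle-side lengths and leaves all vertex-face angles and bend-counts unchanged; hence the undoing-the-planarization step (Section~\ref{ssse:undoing}) and the sliding step in the proof of Lemma~\ref{le:feasible-region} vanish from the cost analysis. Merging $\gamma_\mu'$ into a feasible region of $\gamma_\nu$ likewise no longer involves any scaling: it only requires, for each of the three vertices $w$ of the common separating triangle $\{u,y,z\}$, to locate the appropriate arc along the boundary rectangle of $w$ and splice in there the cyclic list of incident-edge adjacencies and angles that $w$ carries in the PSLG of $C_\mu'$. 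Since the feasible region sits in a prescribed local position relative to $u,y,z$ (and the cases of Lemmas~\ref{le:merging-1} and~\ref{le:feasible-region} each designate a specific, locally identifiable place), this splicing is $O(1)$ per separating triangle, hence $O(n)$ over the whole 4-block tree.

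Next I would check that every other ingredient is linear: the 4-block tree $\mathcal T$ is computed in $O(n)$ time \cite{DBLP:journals/ijcga/Kant97}; for each 4-connected component $C_\nu^-$ a transversal pair of bipolar orientations and the corresponding \rdr are obtained in time linear in $|C_\nu^-|$ \cite{DBLP:journals/dm/Fusy09,DBLP:journals/tcs/KantH97}, and since the components partition the edges of $G^+$ (up to the shared separating triangles) their sizes sum to $O(n)$; choosing \se{}s, triangulating, planarizing, and the multiple-edge flipping of Section~\ref{ssse:general} are all linear scans. Finally the orthogonal representation of $\Gamma^+$, in which all bend-counts are $0$ by Observation~\ref{obs:convert}, is turned by Tamassia's reconstruction \cite{Tam87} into an actual \RVD in time linear in the PSLG size and with integral coordinates bounded by that size; deleting added edges and removing dummy-vertices preserves this.

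It remains to bound the PSLG size. A 1-planar graph on $n$ vertices has at most $4n-8$ edges and at most $n-2$ crossings \cite{DBLP:journals/combinatorics/CzapH13}, so $G^+$ has $O(n)$ vertices and edges, and the PSLG of an \RVD of $G^+$ has a bounded number of nodes per rectangle, attachment point and crossing, hence $O(n)$ nodes. Thus the reconstruction runs in $O(n)$ time and outputs integral coordinates in an $O(n)\times O(n)$ grid, i.e. $O(n^2)$ area. Combining the three parts — correctness from Lemma~\ref{le:sufficiency}, the $O(n)$ running time from the implicit encoding together with the linear-time subroutines, and the $O(n^2)$ area from the PSLG-size bound — yields the lemma. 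I expect the genuinely delicate point to be the $O(1)$-per-separating-triangle merging claim: one must verify that locating the correct boundary arc of each $w\in\{u,y,z\}$ and inserting the new adjacencies there is constant-time given exactly how $\gamma_\nu$ and $\gamma_\mu$ were constructed, which is precisely where the structural work of the previous subsections is used.
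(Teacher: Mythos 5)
Your proposal is correct and follows essentially the same route as the paper: store every intermediate \RVD implicitly as the orthogonal representation of Observation~\ref{obs:convert} (making the {\zz}s free), merge at each separating triangle by splicing adjacency/angle lists in $O(1)$ time, invoke the linear-time subroutines for the 4-block tree, transversal pairs and {\rdr}s, and obtain the $O(n^2)$ area from Tamassia's reconstruction applied to a PSLG of size $O(n)$ via the $4n-8$ edge and $n-2$ crossing bounds. No substantive difference from the paper's argument.
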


\subsection{Testing Algorithm}\label{sse:testing}

In order to prove Theorem~\ref{th:test}, we need to show how to test whether a given 1-plane graph $G$ contains any \Bc, any \Wc, or any \Tc. Afterwards, if $G$ contains none of them, we can apply algorithm \algo to produce the desired representation. Hong {\em et al.}~\cite{DBLP:conf/cocoon/HongELP12} show an algorithm to detect every possible \Bc and \Wc in $O(n)$ time, where $n$ is the number of vertices of $G$. Thus, in what follows we describe how to detect a \Tc, assuming that $G$ contains no \Bc and no \Wc. 

Let $G^+$ be the plane triangulated graph obtained from $G$ by applying the planarization and triangulation step described in Section~\ref{sse:sufficient}. Recall that 
$G^+$ contains a \Tc if and only if $G$ does. 
Suppose $G^+$ contains a \Tc $t$ with outer vertices $a,b,c$ and inner vertices $d,e,f,g,h,i$, as in Fig.~\ref{fi:tgraph}. Since we added a kite around each crossing, vertices $a,b,c$ induce a triangle in $G^+$. 
Thus to detect a \Tc in $G^+$, we first list all triangles of $G^+$.
The triangle is a \Tc if and only the three faces on the inside have dummy-vertices as their third vertex, which can be tested in $O(1)$ time per triangle.  The only difficulty is hence to list all triangles efficiently.  If $G^+$ is simple, then this can be done in $O(n)$ time (see e.g. \cite{Chiba:1985:ASL:3674.3688}).  But if $G$ is not 3-connected, then $G^+$ need not be simple and in particular need not have constant arboricity (the vital ingredient in 
\cite{Chiba:1985:ASL:3674.3688}%
).  Consider again a set of multi-edges as in Figure~\ref{fig:multiple_edge}.  Observe that if any crossing of an added copy was part of a \Tc, then so is either the crossing at $e_\ell$ or the crossing at $e_r$.  Hence to detect a \Tc, we only need to keep the original copy and the added copies $e_\ell$ and $e_r$ before computing all triangles.  We hence need to find all triangles in a planar graph where edges have multiplicity at most 3; this graph has a linear number of edges in any induced subgraph and hence constant arboricity, and so with 
\cite{Chiba:1985:ASL:3674.3688} finding the triangles can be done in linear time.

\begin{lemma}\label{le:test}
Let $G$ be a 1-plane graph with $n$ vertices. There exists an $O(n)$-time algorithm to test whether $G$ has a \Tc, and hence an $O(n)$-time algorithm to test whether $G$ admits an \RVD.
\end{lemma}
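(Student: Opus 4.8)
The plan is to reduce the detection of a \Tc to enumerating triangles in a suitable planar multigraph, which can be done in linear time. First I would construct the plane triangulated graph $G^+$ from $G$ via the planarization-and-triangulation procedure of Section~\ref{sse:sufficient}; this takes $O(n)$ time and, as observed earlier, $G^+$ contains a \Tc if and only if $G$ does, with crossings replaced by dummy-vertices. Since every crossing is surrounded by a kite in $G^+$, the three outer vertices $a,b,c$ of any \Tc form a triangle in $G^+$; conversely, a triangle $\{a,b,c\}$ of $G^+$ is the outer triangle of a \Tc precisely when each of the three faces of $G^+$ bounded on the inside by an edge of the triangle has a dummy-vertex as its third vertex. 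This condition is checkable in $O(1)$ time per triangle, so the whole task reduces to listing all triangles of $G^+$.

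For a \emph{simple} plane graph the number of triangles is $O(n)$ and they can be listed in $O(n)$ time by the algorithm of Chiba and Nishizeki~\cite{Chiba:1985:ASL:3674.3688}, whose running time is governed by arboricity. The only delicate point, and the main obstacle, is that when $G$ is not 3-connected the graph $G^+$ may contain multiple edges, so it need not have bounded arboricity and the triangle count could be superlinear. I would resolve this exactly as in Section~\ref{ssse:general}: for a set of parallel copies $e_1,\dots,e_s$ of an edge $(v,w)$, ordered around $v$, recall the indices $\ell$ and $r$ of the outermost copies having a dummy-vertex on the left, respectively right, facing side. Any \Tc using a crossing at an added copy $e_i$ can be re-routed to use the crossing at $e_\ell$ or at $e_r$ instead — the facing dummy-vertex is the same crossing point, and only the non-dummy facing vertex changes, which does not affect whether the triangle is a \Tc. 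Hence it suffices to keep, for each multi-edge, only the original copy together with $e_\ell$ and $e_r$, i.e.\ at most three parallel copies. The resulting planar multigraph has $O(n)$ edges in every subgraph, hence constant arboricity, and so \cite{Chiba:1985:ASL:3674.3688} lists its triangles in $O(n)$ time; each listed triangle is then tested for the \Tc condition in constant time.

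Finally, to test whether $G$ admits an \RVD I would combine this procedure with the $O(n)$-time detection of \Bcs and \Wcs of Hong \emph{et al.}~\cite{DBLP:conf/cocoon/HongELP12}: if any of the three configurations is found, $G$ has no \RVD by Corollary~\ref{co:nobwt}; otherwise Lemma~\ref{le:sufficiency} guarantees that algorithm \algo produces one, which by Lemma~\ref{le:algo} takes $O(n)$ time and yields $O(n^2)$ area. Together this gives the claimed linear-time algorithm and, with Corollary~\ref{co:nobwt} and Lemma~\ref{le:sufficiency}, completes the proof of Theorem~\ref{th:test}. The points requiring care are the multigraph reduction — arguing that discarding all added copies except $e_\ell$ and $e_r$ cannot destroy a \Tc — and checking that triangle enumeration in a multigraph of edge-multiplicity at most $3$ still falls within the constant-arboricity regime used by \cite{Chiba:1985:ASL:3674.3688}.
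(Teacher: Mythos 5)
Your proposal is correct and follows essentially the same route as the paper: reduce to $G^+$, observe that the kites force the outer vertices of any \Tc to induce a triangle, list all triangles via Chiba--Nishizeki after pruning each bundle of parallel copies down to the original copy plus $e_\ell$ and $e_r$, check each triangle in $O(1)$ time, and combine with the Hong et al.\ test for \Bcs and \Wcs. The one step you justify only loosely --- that discarding the added copies other than $e_\ell$ and $e_r$ cannot lose all witnesses of a \Tc --- is likewise stated as a bare observation in the paper, so your treatment matches it in both substance and level of detail.
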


Lemma~\ref{le:algo} and Lemma~\ref{le:test} imply Theorem~\ref{th:test}.

\ifvaremb
\section{Variable Embedding Setting}\label{se:varemb}

\begin{figure}[t]
    \centering
    \begin{minipage}[b]{.3\textwidth}
    	\centering
    	\subfloat[\label{fi:expemb-1}{}]
	    {\includegraphics[scale=0.6,page=1]{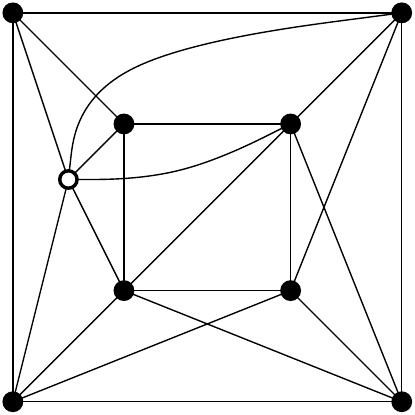}}
    \end{minipage}
    \begin{minipage}[b]{.3\textwidth}
    	\centering
    	\subfloat[\label{fi:expemb-2}{}]
	    {\includegraphics[scale=0.6,page=2]{expemb}}
    \end{minipage}
    \begin{minipage}[b]{.38\textwidth}
    	\centering
    	\subfloat[\label{fi:expemb-3}{}]
	    {\includegraphics[scale=0.4,page=3]{expemb}}
    \end{minipage}
    \caption{
(a) and (b) show two different 1-planar embedding of the same 4-connected 1-planar graph $H$. Attaching $n$ copies of $H$ as in (c) we obtain a 4-connected 1-planar graph with $\Theta(n)$ vertices and $\Omega(2^n)$ different 1-planar embeddings.}\label{fi:expemb}
\end{figure}

Motivated by our characterization, in this section we investigate the variable embedding setting. We remark that, differently from planar graphs, a 1-planar graph can have exponentially many different embeddings, even if 3-connected or 4-connected (see, e.g., Fig.~\ref{fi:expemb}). Thus, an exhaustive analysis may not be feasible. Nevertheless, one may wonder whether being free to change the embedding of a given 1-plane graph $G$ is enough to remove every forbidden configuration. In fact, Alam {\em et al.}~\cite{DBLP:conf/gd/AlamBK13} proved that if $G$ is 3-connected, then it is possible to reroute its edges so as to remove all \Bcs and \Wcs but at most one.  
We show now that this is not true for \Tc.

\begin{theorem}\label{th:ve-3conn}
Let $k \geq 1$ be an integer value. There exists a 3-connected 1-planar graph $G_k$ with $n=6k$ vertices that contains at least $k-1$ \Tcs in every possible 1-planar embedding.
\end{theorem}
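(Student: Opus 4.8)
The plan is to construct $G_k$ explicitly by gluing together $k$ copies of a small gadget, each of which is forced to contain a \Tc, in such a way that the copies cannot interfere with one another and at least $k-1$ of these \Tcs survive regardless of how the embedding is chosen. First I would design the base gadget: a small 3-connected 1-planar graph $H$ on $6$ vertices whose unique (up to reflection) ``triangulated'' structure contains three crossings whose endpoints form a triangle $\{a,b,c\}$, with the remaining six vertices $d,e,f,g,h,i$ trapped inside the three sub-triangles, i.e., exactly the \Tc of Fig.~\ref{fi:tgraph}. The key design constraint is that $H$ must be \emph{rigid enough} that no re-embedding can break the \Tc — the obstruction must be combinatorial, not topological. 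A natural way to enforce this is to make the three ``inner'' vertices on each crossing-pair have high enough degree (or be incident to enough forced edges) that the three crossings are unavoidable and their relative position inside the triangle $\{a,b,c\}$ is fixed; one can argue that in any 1-planar embedding of $H$, at least one of the crossings together with the forced triangle $\{a,b,c\}$ still realizes the \Tc pattern, possibly after checking a constant number of cases.

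Second I would describe the gluing. Take $k$ disjoint copies $H_1,\dots,H_k$ of $H$ and identify them along shared outer triangles (or chain them so that consecutive copies share an edge or a vertex), in the spirit of Fig.~\ref{fi:expemb}(c) but using the \Tc-gadget instead of the $\B$/$\W$-gadget. The identification should be done so that (i) the resulting graph $G_k$ is still 3-connected, (ii) $G_k$ is still 1-planar (each edge retains at most one crossing; the crossings of distinct copies do not conflict because each copy lives in its own region), and (iii) $n = 6k$ after accounting for identified vertices — so I would choose the identification to merge exactly the right number of vertices, or alternatively add a few connector vertices and adjust the per-copy vertex count so the total is $6k$. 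The point of using $k$ copies rather than one is robustness: even if a clever global re-embedding manages to ``escape'' the \Tc obstruction in one or two copies (say the outermost one, whose interior/exterior roles can be swapped), the remaining copies are sandwiched and cannot all be rescued, yielding the claimed lower bound of $k-1$ simultaneous \Tcs.

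The main obstacle — and the part I expect to require the most care — is step (iii) of the previous paragraph combined with the rigidity argument: showing that in \emph{every} 1-planar embedding of the glued graph $G_k$, at least $k-1$ of the copies still contain a \Tc. The difficulty is that 1-planar graphs can have exponentially many embeddings (as the paper itself stresses, Fig.~\ref{fi:expemb}), so one cannot simply enumerate; instead I would need a structural lemma of the form: ``in any 1-planar drawing of $G_k$, the outer triangles of the copies $H_1,\dots,H_k$ appear in a linear or cyclic nesting order, and for every copy that is not outermost (or not one designated exceptional copy), its six interior vertices are confined by its bounding triangle, forcing the three crossings and hence the \Tc.'' Making ``confined by its bounding triangle'' precise is where I would spend the effort — I would likely invoke 3-connectivity to pin down which vertices lie inside the triangle $\{a_j,b_j,c_j\}$ of copy $H_j$, then a short case analysis on how the three crossing-edges of $H_j$ can be routed inside that region, concluding that every routing yields the \Tc configuration.

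To finish, I would verify the parameters: $G_k$ has $n=6k$ vertices by the counting in step two, it is 3-connected and 1-planar by construction, and by the structural lemma every 1-planar embedding of $G_k$ contains at least $k-1$ \Tcs. Combined with Corollary~\ref{co:nobwt} (the necessity direction of Theorem~\ref{th:characterization}), this shows $G_k$ admits no \RVD in any 1-planar embedding, and moreover — since the \Tcs are internally vertex-disjoint across copies — one cannot fix the graph by deleting fewer than $k-1$ edges, which is the stronger ``even after removing linearly many edges'' statement alluded to in the introduction. This would complete the proof of Theorem~\ref{th:ve-3conn}.
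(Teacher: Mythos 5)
Your high-level plan---$k$ vertex-disjoint copies of a rigid six-vertex gadget, connected so that the whole graph is 3-connected, with all but at most one copy forced to realize a \Tc---is exactly the strategy of the paper. But essentially all of the proof's content lives in the step you defer: exhibiting a \emph{concrete} gadget and proving that every 1-planar embedding of it is forced. The paper's gadget is $K_6$, and the rigidity is not obtained by a vague appeal to ``high enough degree''; it rests on Korzhik's theorem that $K_5$ has a unique 1-planar embedding up to renaming and choice of outer face, in which every face is a triangle. The sixth vertex can therefore only be inserted into one of the four faces of $K_5$ bounded by three vertices, and a four-case (two up to symmetry) check shows that in every resulting embedding of $K_6$ the planar skeleton is the 3-prism and the outer face yields a \Bc if it contains a crossing and a \Tc otherwise. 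Without naming the gadget and supplying this uniqueness argument, your ``rigidity'' claim is an assertion, not a proof. Your gadget description is also internally inconsistent: you posit a graph on six vertices but then list nine distinct vertices $a,\dots,i$; in $K_6$ the nine roles in the \Tc are filled by six vertices, with the inner labels coinciding.

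Two smaller issues. First, gluing the copies by identifying shared triangles or vertices would give $n<6k$; the paper keeps the $k$ copies vertex-disjoint (so $n=6k$ exactly) and adds only edges between consecutive copies to restore 3-connectivity, which also keeps the copies' internal embeddings independent. Second, no global ``nesting order'' lemma is needed: since each copy of $K_6$ individually admits only the two embedding types above, and at most one copy can have a crossing point on its outer face in any drawing of $G_k$, at least $k-1$ copies must be of the \Tc type. Your final remark that vertex-disjointness forces at least one deleted edge per copy is the right argument for Corollary~\ref{co:ve-3conn}, but it too needs the fact that \emph{no} 1-planar embedding of $K_6$ is realizable as an \RVD, which again requires the case analysis you skipped.
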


The following stronger corollary can also be proved.

\begin{corollary}\label{co:ve-3conn}
Let $k \geq 1$ be an integer value. There exists a 3-connected 1-planar graph $G_k$ with $n=6k$ vertices that does not admit any 1-planar embedding that can be represented as an \RVD unless we delete at least $k=n/6$ edges.
\end{corollary}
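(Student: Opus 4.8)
The plan is to prove Corollary~\ref{co:ve-3conn} by combining the characterization of Theorem~\ref{th:characterization} with a \emph{local, edge-robust} refinement of the forcing argument behind Theorem~\ref{th:ve-3conn}. Recall that the graph $G_k$ witnessing Theorem~\ref{th:ve-3conn} is built from $k$ vertex-disjoint copies $Q_1,\dots,Q_k$ of a fixed $6$-vertex gadget $Q$, linked by a bounded number of connector edges so that $G_k$ is $3$-connected and $1$-planar. Write $E_i$ for the set of edges with both endpoints in $Q_i$; the sets $E_1,\dots,E_k$ are pairwise disjoint and each is nonempty, and the construction can be arranged (as is implicit in the proof of Theorem~\ref{th:ve-3conn}) so that in every $1$-planar embedding of $G_k$ every crossing involving an edge of $E_i$ involves a second edge of $E_i$. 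Consequently, restricting any $1$-planar embedding of any subgraph of $G_k$ that contains $Q_i$ to the vertex set of $Q_i$ produces a $1$-planar embedding of $Q_i$ in which all crossings among the edges of $E_i$ survive.

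The heart of the argument is a \textbf{local forcing claim}: in every $1$-planar embedding of the gadget $Q$ (with its full edge set), the edges of $Q$ induce a \Bc, a \Wc, or a \Tc. This is the natural strengthening of the statement used for Theorem~\ref{th:ve-3conn}: the latter shows that ``generically'' (when a copy of $Q$ sits in the interior) one obtains a \Tc, whereas here one additionally checks the few ``boundary'' embeddings of $Q$ — those in which one of its faces is enlarged, as may happen for the copy incident to the outer face — and verifies that in those cases the \Tc degenerates not into nothing but into a \Bc or a \Wc. Since $Q$ has only finitely many $1$-planar embeddings up to relabeling and reflection, this is a finite case check, essentially the same enumeration already carried out for Theorem~\ref{th:ve-3conn} with the boundary cases tracked more carefully.

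Granting the claim, the corollary follows by a pigeonhole argument. Suppose, for contradiction, that there is a set $F\subseteq E(G_k)$ with $|F|<k=n/6$ such that $G_k-F$ admits a $1$-planar embedding $\Gamma$ that can be realized as an \RVD. Because $E_1,\dots,E_k$ are pairwise disjoint and nonempty while $|F|<k$, some index $i$ satisfies $F\cap E_i=\emptyset$, so $Q_i$ is intact in $G_k-F$. Restricting $\Gamma$ to the vertex set of $Q_i$ gives, by the faithfulness noted above, a $1$-planar embedding of $Q_i$ with full edge set; by the local forcing claim it contains a \Bc, a \Wc, or a \Tc, and since this configuration uses only edges of $E_i\subseteq E(G_k-F)$, it is present in $\Gamma$ as well. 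This contradicts Theorem~\ref{th:characterization} applied to $\Gamma$. Hence every \RVD-realizable $1$-planar embedding of $G_k$ requires deleting at least $k=n/6$ edges.

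The main obstacle is the local forcing claim, i.e., the design and analysis of the gadget $Q$: one must choose $Q$ so that $G_k$ is simultaneously $3$-connected and $1$-planar, so that the copies are edge-disjoint with all their crossings ``internal'', and — crucially — so that \emph{no} $1$-planar embedding of $Q$, not even a degenerate boundary one, escapes all three forbidden configurations. Establishing $3$-connectivity, $1$-planarity, and edge-disjointness with internal crossings is a routine refinement of the construction already used for Theorem~\ref{th:ve-3conn}; the delicate part is the exhaustive embedding enumeration, where the new work relative to Theorem~\ref{th:ve-3conn} is precisely to rule out the boundary cases by exhibiting a \Bc or a \Wc rather than a \Tc.
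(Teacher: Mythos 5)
Your overall architecture is the same as the paper's: $G_k$ is built from $k$ vertex-disjoint copies of a $6$-vertex gadget, a pigeonhole argument shows that deleting fewer than $k$ edges leaves some copy intact, and a local impossibility result for the gadget finishes the proof. Two remarks on the final step. First, the paper's gadget is simply $K_6$, and its closing argument is more direct than yours: it observes that an \RVD of a spanning subgraph of $G_k$ containing an intact copy of $K_6$ induces, by heredity of (weak) visibility representations, an \RVD of $K_6$ in some 1-planar embedding, contradicting the previously established corollary that no such \RVD exists. You instead push the forbidden configuration found in the restricted embedding back up into the embedding of $G_k-F$ and invoke Theorem~\ref{th:characterization}; this works (the regions defining a \Bc/\Wc/\Tc and the vertices inside them are unchanged when passing to the super-embedding), but it is a detour, and it makes you worry unnecessarily about whether crossings ``survive'' restriction --- the restriction of any 1-planar drawing to an induced subgraph is a 1-planar drawing of that subgraph no matter what, which is all that is needed.

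The genuine gap is that the entire substance of the corollary --- your ``local forcing claim'' --- is asserted rather than proved, and the gadget $Q$ is never even identified. You correctly diagnose that Theorem~\ref{th:ve-3conn} alone does not suffice (it leaves open that one copy, the one whose face is enlarged to the outer face, escapes with no forbidden configuration), and that one must check that the boundary embeddings also fail; but you then declare this ``a finite case check'' without exhibiting the gadget or performing the check. The paper does this concretely: $K_5$ has a unique 1-planar embedding up to relabeling and choice of outer face (Korzhik), so the 1-planar embeddings of $K_6$ can be enumerated by placing the sixth vertex in a face of $K_5$; in every case the planar skeleton is the $3$-prism and the outer face forms either a \Bc (if it contains a crossing) or a \Tc (if it consists of vertices only), so no 1-planar embedding of $K_6$ is realizable as an \RVD. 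Without naming $Q=K_6$ and supplying this enumeration (or an equivalent argument), your proof does not close, because an arbitrary $6$-vertex gadget satisfying the properties you list for Theorem~\ref{th:ve-3conn} need not satisfy the stronger, all-embeddings version of the claim.
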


The proofs of Theorem~\ref{th:ve-3conn} and Corollary~\ref{co:ve-3conn} are postponed to Section~\ref{sse:proof-ve3conn}.

\medskip

On the positive side, a 4-connected 1-planar graph $G$ admits an embedding $G'$ with no \Bcs and \Wcs except that the outer face may consist of a crossing with an edge between two of its ends  \cite{DBLP:conf/gd/AlamBK13}.
In any \Tc in a 1-plane graph, the three outer vertices are either the outer face or form a separating triplet, so any embedding of a 4-connected 1-planar graph has at most one \Tc.  This means that in $G'$ there is at most one \Tc, and if there is one then there is no \Bc or \Wc. This leads to the following theorem.

\begin{theorem}\label{th:ve-4conn}
Every 4-connected 1-planar graph $G$ admits a 1-planar embedding that can be represented as an \RVD, except for at most one edge.
\end{theorem}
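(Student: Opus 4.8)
The plan is to leverage the two structural facts already assembled in the excerpt. First, by Alam {\em et al.}~\cite{DBLP:conf/gd/AlamBK13}, a 4-connected 1-planar graph $G$ admits a 1-planar embedding $G'$ in which there is no \Bc and no \Wc, with the sole possible exception that the outer face of $G'$ consists of a crossing together with an edge joining two of its endpoints. Second, I would invoke the observation, made just before the theorem, that in any 1-plane graph the three outer vertices of a \Tc either bound the outer face or form a separating triplet; since $G$ is 4-connected it has no separating triplet of vertices whose removal disconnects it (more precisely, no separating triangle whose interior and exterior are both nonempty after deleting those three vertices), so the only place a \Tc can sit in $G'$ is at the outer face. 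Hence $G'$ contains at most one forbidden configuration overall, and if it contains a \Tc it contains no \Bc or \Wc.

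The remaining step is to argue that deleting a single edge of $G'$ removes this last obstruction, after which Theorem~\ref{th:characterization} applies to produce an \RVD. I would split into cases according to which configuration, if any, survives in $G'$. If $G'$ has no \Bc, \Wc, or \Tc, we are done with zero deletions. If the only obstruction is the exceptional outer-face \Bc (a crossing on the outer face with one edge between two of its ends), then deleting that outer edge destroys the \Bc; one must check that this deletion cannot create a new \Wc or \Tc, which is immediate because removing an edge only reduces the set of edges participating in any configuration — no configuration is created by edge deletion. If instead $G'$ contains a \Tc at the outer face (and hence no \Bc or \Wc), then by the definition of a \Tc the outer boundary involves three crossings $p,q,t$ with outer vertices $a,b,c$; I would delete one edge of the \Tc, say one of the two edges forming the crossing $p$, which breaks the pattern $(a,c),(b,d)$ crossing at $p$ and therefore destroys the \Tc. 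Again, no new \Bc or \Wc or \Tc can appear, so the resulting graph has no forbidden configuration and admits an \RVD by Theorem~\ref{th:characterization}. Adding back the deleted edge, drawn as a segment in the \RVD wherever possible or simply left unrepresented, yields the claimed representation ``except for at most one edge.''

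The main obstacle I anticipate is purely a matter of careful bookkeeping in the \Tc case: one must be certain that deleting a single well-chosen edge genuinely eliminates \emph{every} \Tc present, not merely one. This is where the 4-connectivity is essential — it guarantees uniqueness of the location of a \Tc (the outer face), so there is only one \Tc to destroy, and a single edge suffices. A secondary subtlety is making precise the sense in which the final edge is ``not represented'': since the \RVD produced by Theorem~\ref{th:characterization} respects the embedding, re-inserting the missing edge may force a crossing or a bend, so the honest statement is that all but (at most) one edge are realized in the \RVD. I would state the theorem exactly in that form and note that the exceptional edge is the one whose removal was needed to clear the last forbidden configuration.
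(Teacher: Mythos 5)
Your proposal is correct and follows essentially the same route as the paper: both invoke Alam et al.\ to obtain an embedding whose only possible \Bc or \Wc sits on the outer face, use 4-connectivity to confine any \Tc to the outer face as well (so at most one forbidden configuration survives), and then delete a single edge of that configuration before applying Theorem~\ref{th:characterization}. The bookkeeping points you flag --- that edge deletion cannot create a new configuration and that there is only one configuration to destroy --- are exactly what the paper relies on implicitly.
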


An {\em optimal 1-planar graph} is one that has the maximum number $4n-8$ of possible edges.
Optimal 1-planar graphs are always 4-connected, and in any 1-planar embedding
all crossings are in kites, with the exception of one crossing on the outer-face
\cite{Suzuki2010}.
It follows that $G$ does not contain any \Tc, and with the exception of the crossing at the outer face it contains no \Bc or \Wc.

\begin{corollary}\label{co:ve-optimal}
Every optimal 1-plane graph $G$ admits an \RVD, except for one edge.
\end{corollary}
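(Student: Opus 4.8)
The plan is to deduce Corollary~\ref{co:ve-optimal} from Theorem~\ref{th:ve-4conn} together with the known structural facts about optimal 1-plane graphs, so that the only real work is to verify that the single possibly-bad edge promised by Theorem~\ref{th:ve-4conn} can actually be forced to disappear entirely in the optimal case. First I would recall that an optimal 1-planar graph is 4-connected \cite{Suzuki2010}, so Theorem~\ref{th:ve-4conn} already gives some 1-planar embedding $G'$ realizable as an \RVD after deleting at most one edge. The point is then to argue that for optimal graphs we do not need to delete anything extra beyond what the theorem already tolerates, and to pin down exactly which edge (if any) is the culprit.

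The key observation is the structural result quoted just before the corollary: in any 1-planar embedding of an optimal 1-planar graph, every crossing lies in a kite, with the single exception of one crossing that sits on the outer face. I would use this in two ways. First, since all inner crossings are kites, no \Bc, \Wc, or \Tc can involve an inner crossing: a \Bc, \Wc, or \Tc has its defining triangle/region with other vertices inside it, which is incompatible with the crossing being surrounded by a kite face. Second, a \Tc needs three outer vertices that form either the outer face of the configuration or a separating triple, and its three crossings are all ``buried'' inside; since only one crossing is not in a kite, there are simply not enough non-kite crossings to build a \Tc at all. Hence $G'$ (for the natural embedding guaranteed by Suzuki's structure theorem) contains no \Tc whatsoever, and the only forbidden configuration that can survive is a \Bc or \Wc arising from the one crossing on the outer face.

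Next I would localize that remaining obstruction to a single edge. The crossing on the outer face involves two edges, say $(a,c)$ and $(b,d)$ crossing at $p$; by the structure theorem the outer boundary is essentially this crossing with an edge between two of its endpoints, forming a \Bc (or a \Wc if there are two such crossings, but optimality forces exactly the one non-kite crossing, so it is a \Bc-type situation). Applying Lemma~\ref{le:outerface}: if the outer face of an \RVD is a cycle with $k$ vertices and $c$ crossings then $c\le 2k-4$, so with $c=1$ we need $k\ge 3$; the offending configuration has $k=2$, $c=1$, which is exactly a \Bc and is unrealizable. I would then delete the single edge of the \Bc that creates the ``bad'' triangle — namely the edge $(a,b)$ between two endpoints of the outer crossing — and observe that the resulting graph, in this embedding, now has no \Bc, no \Wc, and no \Tc, so by Theorem~\ref{th:characterization} it admits an \RVD. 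Thus $G$ minus this one edge is realizable as an \RVD in the chosen 1-planar embedding.

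The main obstacle I anticipate is making the case analysis for ``which edge to delete'' fully airtight: one must be sure that the single non-kite crossing on the outer face produces exactly one forbidden configuration (a \Bc), that it cannot simultaneously be part of some other obstruction that would require a second deletion, and that deleting the chosen edge $(a,b)$ does not accidentally leave the crossing at $p$ on the outer face in a still-forbidden state or break 1-planarity of the embedding. Handling this cleanly likely means invoking the precise statement of Suzuki's characterization \cite{Suzuki2010} of the outer face of an optimal 1-plane graph (the outer structure is very rigid — a ``kite minus one edge'' sitting on the boundary), from which the identity of the removable edge and the fact that its removal clears all obstructions follow directly. With that in hand, the corollary is immediate; in the degenerate case where the outer crossing happens already to be a kite (or the graph is small), no edge need be deleted at all, which only strengthens the statement.
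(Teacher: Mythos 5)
Your proposal follows essentially the same route as the paper: invoke Suzuki's structure theorem (every crossing of an optimal 1-plane graph lies in a kite except one crossing on the outer face) together with 4-connectivity, conclude that no \Tc exists and that any \Bc or \Wc must live at the outer-face crossing, delete one edge there, and finish with Theorem~\ref{th:characterization}. One caveat: your blanket claim that a kite crossing cannot participate in a \Wc or \Tc because a kite is ``incompatible'' with having other vertices inside the defining region is not literally true --- the kite faces only tile the region locally around that crossing, and the region can still contain further vertices beyond them (also, the three crossings of a \Tc sit on the boundary of its hexagonal region, not ``buried inside'' it). What actually does the work is that if \emph{both} crossings of a \Wc were kites, each kite would demand a copy of the edge $(a,b)$ on its own side, forcing a double edge in a simple graph; so any \Wc must use the unique non-kite crossing. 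For the \Tc one falls back on your (and the paper's) second argument: the three outer vertices would have to be a separating triple (impossible by 4-connectivity) or the outer face (impossible, since the outer face carries the non-kite crossing). With that repair your argument is sound and coincides with the paper's.
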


\subsection{Proof of Theorem~\ref{th:ve-3conn} and Corollary~\ref{co:ve-3conn}}\label{sse:proof-ve3conn}

\begin{figure}[t]
    \centering
    \begin{minipage}[b]{.18\textwidth}
    	\centering
    	\subfloat[\label{fi:k5}{$K_5$}]
	    {\includegraphics[scale=0.6,page=7]{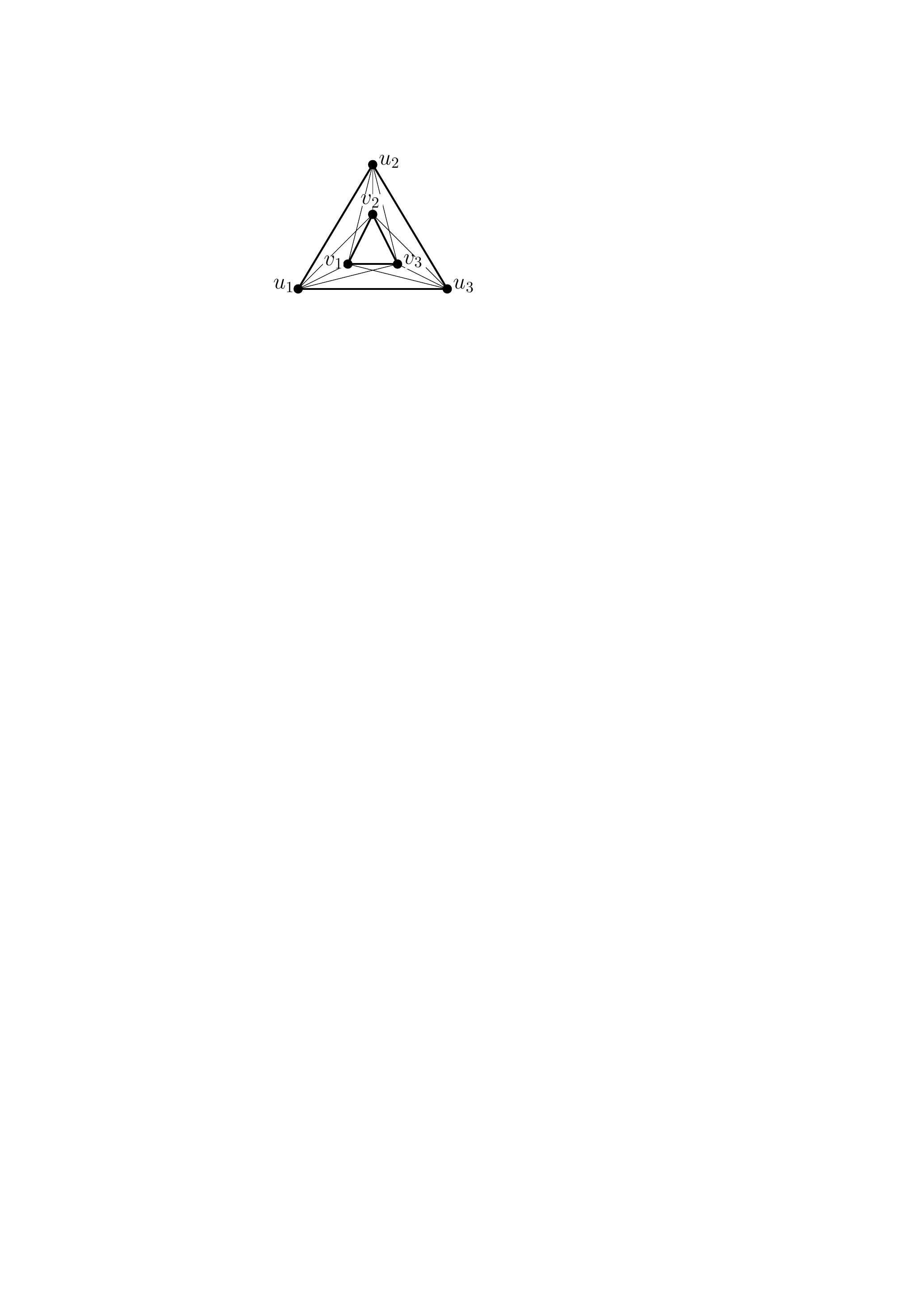}}
    \end{minipage}
    \hfill
    \begin{minipage}[b]{.18\textwidth}
    	\centering
    	\subfloat[\label{fi:k6-1}{$K_6$}]
	    {\includegraphics[scale=0.6,page=8]{counterexample}}
    \end{minipage}
    \hfill
    \begin{minipage}[b]{.18\textwidth}
    	\centering
    	\subfloat[\label{fi:k6-2}{$K_6$}]
	    {\includegraphics[scale=0.6,page=9]{counterexample}}
    \end{minipage}
    \hfill
    \begin{minipage}[b]{.3\textwidth}
    	\centering
    	\subfloat[\label{fi:gk}{$G_k$}]
	    {\includegraphics[scale=0.4,page=2]{counterexample}}
    \end{minipage}
    \caption{(a) $K_5$ in its unique (up to renaming and outer face) 1-planar embedding.
(b-c)  $K_6$ in its unique (up to renaming) 1-planar embedding and two possible choices of the outer face. (C) Multiple copies of $K_6$ combined.}
\end{figure}

The idea is to show that $K_6$ has a ``unique'' embedding, up to renaming and choice of the outer face.  To argue this, define
the \emph{planar skeleton} $S(G)$ of a 1-plane graph $G$ to be the plane graph obtained from $G$ by removing all edges that cross some other edge. 

\begin{observation}
For any 1-planar embedding of $K_6$, the planar skeleton is the 3-prism and the outer face forms a \Bc or a \Tc.
\end{observation}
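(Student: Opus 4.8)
The plan is to analyze the $1$-planar embeddings of $K_6$ directly by counting. Recall that $K_6$ has $15$ edges, and any $1$-planar graph on $6$ vertices has at most $4\cdot 6-8=16$ edges, so at most one edge is ``saved''; more usefully, I will count crossings. If a $1$-plane drawing of $K_6$ has $c$ crossings, then the planar skeleton $S(K_6)$ has $15-2c$ edges, is a simple planar graph on $6$ vertices (hence has at most $3\cdot 6-6 = 12$ edges), so $c\ge 2$. On the other hand, each crossing uses up $4$ distinct endpoints, and the standard density bound for $1$-planar graphs ($\le 4n-8$ edges, with equality forcing every crossing in a kite) together with the fact that $K_6$ has exactly $4n-9=15$ edges pins down $c$: I expect exactly $c=3$ crossings, with the three crossing pairs pairwise edge-disjoint and in fact vertex-disjoint is impossible on $6$ vertices if we needed $12$ distinct endpoints, so the three crossings must share vertices in a controlled way. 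I would carry out this arithmetic carefully to conclude $c=3$ and that $S(K_6)$ has $15-6=9$ edges.

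Next I would identify $S(K_6)$. It is a $3$-connected (this needs a short argument, or can be deferred) simple planar graph on $6$ vertices and $9$ edges, all of whose faces must be large enough to accommodate the crossing edges; the $3$-prism $K_3\times K_2$ is the natural candidate, and I would show it is forced. The key step is that the six ``crossed'' edges (three crossing pairs) must be the six edges of $K_6$ not in $S(K_6)$, and each crossing pair must lie inside a single face of $S(K_6)$ with its four endpoints on that face, so $S(K_6)$ must have faces containing $4$ vertices on their boundary. For the $3$-prism, the three quadrilateral faces each support exactly one crossing, which matches $c=3$ perfectly; I would rule out the other planar graphs with parameters $(6,9)$ (e.g.\ $K_5$ minus an edge plus a pendant-type configurations, or $K_{3,3}$ which is not planar, or the octahedron-minus-edges) by checking that none of them can host three vertex-compatible crossings covering the remaining six edges. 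This case analysis — showing the prism is the \emph{only} skeleton that works — is the main obstacle, and I would streamline it using the fact that the complement (in $K_6$) of a $(6,9)$-graph is again a $(6,6)$-graph that must decompose into three edge-disjoint paths of length $2$ (the crossing pairs share an endpoint, as a crossing pair $(x,y),(z,w)$ in $K_6$ could also be e.g.\ $(x,z),(y,w)$ — so I must be careful about which perfect matching vs.\ triangle structure arises), and only the prism's complement (a triangle $+$ triangle, i.e.\ $2K_3$... actually $\overline{K_3\times K_2}=K_3\times K_2$ again, the prism is self-complementary in $K_6$? no — let me not assert that) admits the required realization.

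Finally, with $S(K_6)$ identified as the $3$-prism, the embedding is determined up to the choice of outer face and relabeling: the prism has a unique planar embedding (it is $3$-connected, Whitney), and placing the three crossings in its three quadrilateral faces is forced. I would then examine the possible outer faces. If the outer face is one of the two triangular faces of the prism, the outer boundary is a triangle $\{a,b,c\}$ of original vertices, and the three crossings each sit in an inner quadrilateral face; tracing the edges shows the three outer vertices together with their three crossings form a \Tc (each of $a,b,c$ has a crossing ``between'' consecutive prism edges, matching the definition in Fig.~\ref{fi:tgraph}). If instead the outer face is one of the three quadrilateral faces, that face contains a crossing, so the outer boundary consists of two vertices and a crossing point together with an edge between the two vertices — which is exactly a \Bc by definition (Fig.~\ref{fi:bgraph}). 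I would write these two cases out explicitly, verifying the incidence pattern against the definitions of \Bc and \Tc, to complete the proof of the observation.
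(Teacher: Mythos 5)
Your outline takes a genuinely different route from the paper, but as written it has two real gaps, both sitting exactly where you yourself flag ``the main obstacle.'' First, the crossing count. Your arithmetic only yields $c\ge 2$ (from $15-2c\le 12$); the sharper standard argument --- delete \emph{one} edge per crossing pair to get a planar graph, so $15-c\le 12$ --- gives $c\ge 3$, but nothing you write rules out $c=4$ (a skeleton with $7$ edges), and the appeal to the $4n-8$ bound ``with equality forcing kites'' does not apply since $K_6$ has $4n-9$ edges. Second, even granting $c=3$ and a $9$-edge skeleton, the identification of that skeleton as the $3$-prism is deferred to an unexecuted case analysis over planar $(6,9)$-graphs, and the structural handle you reach for is off: the complement of the prism in $K_6$ is a $6$-cycle (the three crossing pairs are its three pairs of opposite edges), not a decomposition into paths or triangles, so the ``streamlining'' you propose would not go through as described. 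The last step (prism skeleton $\Rightarrow$ one forced crossing per quadrilateral face, then outer-face case split into \Bc versus \Tc) is sound, but note that when the outer face is a quadrilateral of the prism the actual outer face of the $1$-planar embedding is one of the four triangles created by the crossing, i.e.\ two vertices, a crossing point, and the edge between them --- which is indeed a \Bc.

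For contrast, the paper sidesteps all of this by invoking Korzhik's theorem that $K_5$ has a unique $1$-planar embedding up to renaming and choice of outer face: any $1$-planar embedding of $K_6$ restricts to that embedding of $K_5$, the sixth vertex can only be placed in one of the four all-vertex triangular faces, and each of the (symmetric) placements visibly yields the prism skeleton with a \Bc or \Tc on the outer face. If you want to keep your self-contained counting approach, you must (i) prove $c=3$ exactly, e.g.\ by ruling out a $7$-edge skeleton, and (ii) actually carry out the classification of $9$-edge planar skeletons whose complementary six edges partition into three crossing pairs, each with all four endpoints on a single skeleton face. Until those are done, the observation is not proved.
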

\begin{proof}
Recall that the 3-prism is a 3-connected planar graph whose unique planar embedding (up to renaming and outer face) is shown in Fig.~\ref{fi:k6-1} (the bold edges only).

The complete graph $K_5$ has a unique 1-planar embedding (up to renaming and outer face)~\cite{DBLP:journals/dm/Korzhik08a}, which is shown in Fig.~\ref{fi:k5}. Observe that every face of such embedding is a triangle, either incident to two vertices and a crossing point or to three vertices. In order to realize a 1-planar embedding of $K_6$, a sixth vertex can only be added to one of the four faces of $K_5$ that are incident to three vertices. Two of these four cases are illustrated in Figs.~\ref{fi:k6-1} and~\ref{fi:k6-2}; the other two cases are symmetric. In every one of these two cases the statement holds.
\end{proof}

\begin{corollary}
$K_6$ has no 1-planar embedding that can be represented an an \RVD.
\end{corollary}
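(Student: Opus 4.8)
The plan is to obtain the corollary as an immediate consequence of the observation above together with the necessity part of our characterization. The observation tells us that in \emph{every} 1-planar embedding of $K_6$ the outer face forms a \Bc or a \Tc, while Corollary~\ref{co:nobwt} tells us that a 1-plane graph admitting an \RVD contains no \Bc, no \Wc, and no \Tc. Putting these two facts together immediately rules out an \RVD for any embedding of $K_6$, which is exactly the claim.

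In detail, I would argue as follows. Let $\Gamma$ be an arbitrary 1-planar embedding of $K_6$. By the observation above, the boundary of the outer face of $\Gamma$, together with the crossing(s) it involves, is a \Bc or a \Tc; in particular $K_6$ (with the embedding $\Gamma$) contains one of the forbidden configurations as a subgraph. Since $K_6$ with $\Gamma$ is a 1-plane graph, Corollary~\ref{co:nobwt} applies and shows that it has no \RVD. As $\Gamma$ was arbitrary among the 1-planar embeddings of $K_6$, we conclude that no 1-planar embedding of $K_6$ can be represented as an \RVD.

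I do not expect any real obstacle here: all the substantive work has already been done, namely the classification of the 1-planar embeddings of $K_6$ (carried out in the observation via the uniqueness of the 1-planar embedding of $K_5$ and a short case analysis on the location of the sixth vertex) and the angle-counting argument behind Lemma~\ref{le:outerface}. If one wished to be fully self-contained and bypass Corollary~\ref{co:nobwt}, one could instead invoke Lemma~\ref{le:outerface} directly: the outer face of a \Bc is a closed walk with $k=2$ vertices and $c=1$ crossing, and that of a \Tc has $k=3$ and $c=3$, and in both cases the pair $(k,c)$ violates the conditions $k\ge 3$ and $c\le 2k-4$ required of the outer face of an \RVD; hence neither configuration can bound the outer face of an \RVD, so $K_6$ admits no \RVD in any embedding.
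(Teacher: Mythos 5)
Your proof is correct and follows essentially the same route as the paper: the paper likewise takes an arbitrary 1-planar embedding of $K_6$, uses the preceding observation to conclude that the outer face yields a \Bc or a \Tc, and then invokes Corollary~\ref{co:nobwt} (which itself rests on the angle-counting Lemma~\ref{le:outerface}). Your remark that one could bypass the corollary and apply Lemma~\ref{le:outerface} directly with $(k,c)=(2,1)$ or $(3,3)$ is exactly how the paper derives Corollary~\ref{co:nobwt} in the first place, so it is a restatement rather than a genuinely different argument.
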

\begin{proof}
Consider an arbitrary 1-planar embedding of $K_6$.
Since the planar skeleton of $K_6$ is a 3-prism, every planar edge is incident
to a face of degree 4 in the 3-prism,  and all such faces contain a crossing in $K_6$.  If the outer face contains a crossing then this gives a \Bc
(Fig.~\ref{fi:k6-1}), and if it consists of vertices only then it gives a \Tc  (Fig.~\ref{fi:k6-2}). Either way it does not admit an \RVD by Lemma~\ref{co:nobwt}.
\end{proof}

Now for any $k$ we can create a 3-connected 1-planar graph $G_k$ with $6k$ vertices that consists of $k$ vertex-disjoint copies of $K_6$. Fig.~\ref{fi:gk} shows one possible such construction for $k=2$.  Also, it is immediate to see that all copies of $K_6$ in $G_k$, except for at most one, must be embedded with no crossing points on their outer face (i.e., as in Fig.~\ref{fi:k6-2}). Thus, $G_k$ contains at least $k-1$ \Tcs. This concludes the proof of Theorem~\ref{th:ve-3conn}.

Moreover, if some spanning subgraph $H$ of $G$ had an \RVD $\Gamma$, then for each of the $k$ copies of $K_6$, at least one edge cannot exist in $\Gamma$, else $\Gamma$ would induce an \RVD of $K_6$.  Hence at least $k=n/6$ edges of $G$ are not represented in $\Gamma$. This proves Corollary~\ref{co:ve-3conn}.

\fi

\section{Conclusions and Open Problems}\label{se:conclusions}

In this paper, we studied \rvds of non-planar
graphs with a fixed embedding.  We showed that we can test in polynomial time
whether a graph has such a representation.  Of special interest to us were
1-planar graphs; here we can give  a linear-time 
algorithm to test the existence of visibility representations if the embedding
is fixed.  Moreover, in case of a negative answer the algorithm provides a 
witness in form of either a \Bc, \Wc, or \Tc.  
\ifvaremb
We also briefly studied
1-planar graphs without fixed embeddings; we showed that not all
3-connected 1-planar graphs have visibility representations with at most one crossing
per edge,  but all 4-connected 1-planar graphs do after deleting one edge.
\fi

The most pressing open problem is whether we can restrict the drawings less and still test for the existence of visibility representations?  Most importantly, if we fix the rotational scheme and outer face, but {\em not} order in which crossing occur (and perhaps not even which edges cross), is it possible to test whether a visibility representation respecting the rotational scheme and outer face exists?  The NP-hardness proof of Shermer \cite{DBLP:conf/cccg/Shermer96} does not hold if the rotational scheme is fixed, since it uses a reduction from linear-arboricity-2, and fixing the rotational scheme would severely restrict the possible ways of splitting a graph into two linear forests.  The orthogonal representation approach utterly fails if the order of crossings is not fixed, since it requires planarization as a first step.

Secondly, can we characterize for more graph classes exactly when they have a rectangle visibility representation?
Notice that Lemma~\ref{le:outerface} does not use 1-planarity, and hence gives a necessary condition for any subgraph 
of a graph $G$ (with a fixed embedding) to have an \RVD.  A second necessary condition stems from that as soon as edges
may have 2 or more crossings, the edge-parts between the crossings may be non-trivial and have
cycles; clearly the graph formed by these crossings must be bipartite if $G$ has a visibility representation.
Are these two conditions sufficient, and if not, can we find necessary and sufficient conditions, at least
for some restricted graph classes such as 2-planar and fan-planar graphs?

Finally, it is immediate to observe that any \RVD can be transformed into a drawing such that vertices are points, edges are polylines with at most 2 bends, and crossings occur only at right angles. Hence, Lemma~\ref{le:algo} can be used to compute {\em RAC-drawings} (see e.g.~\cite{dl-cargd-12}) with at most 2 bends per edge in linear time and quadratic area.  Since not all 1-plane (straight-line drawable) graphs admit a bendless RAC-drawing~\cite{el-rac1p-DAM13} (and when they do they may require exponential area~\cite{BrandenburgDEKL15}), we ask whether all 1-plane graphs admit a RAC-drawing with at most 2 bends per edge in polynomial area?

\bibliography{paper}
\bibliographystyle{abbrv}

\end{document}